\newcommand\nextstate[1]{\textsf{Next}({#1})}
\newcommand\lra{\longrightarrow}
\newcommand\A{\Rightarrow}
\newcommand\tool[1]{\textsf{#1}}
\newcommand\sctl{\tool{SCTLProV}}
\newcommand\sctlprov{\tool{S{CTL}ProV}}
\newcommand\sctlprovr{\textsf{SCTLProV$_R$}} 
\newcommand\sctlm{\textsf{SCTL($\cal M$)}}
\newcommand\verds{\tool{Verds}}
\newcommand\nusmv{\tool{NuSMV}}
\newcommand\nuxmv{\tool{NuXMV}}
\newcommand\couic[1]{}
\newcommand{\code}[1]{\texttt{#1}}
\newcommand\rsa{\rightsquigarrow}
\newcommand\T{\rule{0pt}{2.6ex}}       
\newcommand\B{\rule[-1.2ex]{0pt}{0pt}} 
\tikzset{
  inputfile/.style={
    shape=tape,
    draw,
    fill=white,
    tape bend top=none}}
\newcommand{\CTLP}{CTL$_P$}
\newcommand\CTL{\textsf{CTL}}
\newcommand\SCTL{\textsf{SCTL}}
\newcommand\BMC{\textsf{BMC}}
\newcommand\QBF{\textsf{QBF}}
\newcommand\OCaml{\textsf{OCaml}}
\newcommand\BDD{\textsf{BDD}}
\def\ssucc{\mathrel{
		\scalebox{.6}[1.0]{$\succ$}\mkern-4.5mu\scalebox{.8}[1.0]{$\succ$}}}
\newcommand{\figcaption}{\def\@captype{figure}\caption}
\newcommand{\tabcaption}{\def\@captype{table}\caption}
\begin{document}

\title{\SCTL: Towards Combining Model Checking and Proof Checking}
\author{Ying Jiang\inst{1}\and Jian Liu\inst{1,2} \and Gilles Dowek\inst{3}\and Kailiang Ji\inst{4}}

\institute{State Key Laboratory of Computer Science, Institute of Software, Chinese Academy of Sciences \\ \email{\{jy, liujian\}@ios.ac.cn}\and
	University of Chinese Academy of Sciences \and Inria and ENS de Cachan\\ 61, avenue du Pr\'{e}sident Wilson 94235 CACHAN Cedex, France \\ \email{gilles.dowek@ens-cachan.fr} \and
	Noah's Ark Lab, Shanghai Huawei Technologies Co., Ltd, Shanghai, China\\ \email{kailiang.ji2013@gmail.com} }

\couic{
\author{Ying Jiang}
\affiliation{State Key Laboratory of Computer Science, Institute of Software, Chinese Academy of Sciences,\\
	P.O. Box 8718, 100190 Beijing, China}
\author{Jian Liu}
\affiliation{State Key Laboratory of Computer Science, Institute of Software, Chinese Academy of Sciences;\\
University of Chinese Academy of Sciences,\\ P.O. Box 8718, 100190 Beijing, China}
\email{liujian@ios.ac.cn}
\author{Gilles Dowek}
\affiliation{Inria and ENS de Cachan, 61, avenue du Pr\'{e}sident Wilson 94235 Cachan cedex, Paris, France}
\author{Kailiang Ji}
\affiliation{Noah's Ark Lab, Shanghai Huawei Technologies Co., Ltd, Shanghai, China}


\shortauthors{Y. Jiang; J. Liu; G. Dowek; and K. Ji}
}

\maketitle

\begin{abstract}
Model checking and automated theorem proving are two pillars of
formal methods.  This paper investigates model checking from an
automated theorem proving perspective, aiming at combining the
expressiveness of automated theorem proving and the complete
automaticity of model checking. 
The focus of this paper is on the
verification of temporal logic properties of Kripke models.
The main contributions of this paper are: first the definition of an
extended computation tree logic that allows polyadic predicate
symbols, then a proof system for this logic, taking Kripke models as
parameters, then, the design of a proof-search algorithm for this
calculus and a new automated theorem prover to implement it.  The
verification process is completely automatic, and produces either a
counterexample when the property does not hold, or a certificate when
it does.  The experimental result compares well to existing
state-of-the-art tools on some benchmarks, including an application to
air traffic control and the design choices that lead to this
efficiency are discussed.

\end{abstract}

\section{Introduction}

Model checking \cite{CGP01,BouajjaniJNT00,BaierKatoen08} and automated
theorem proving \cite{Fitting96,Loveland78,Burel09} are two pillars of
formal methods.  They differ by the fact that model checking often
uses decidable logics, such as propositional modal logics, while
automated theorem proving mostly uses undecidable ones, such as
first-order logic.  Nevertheless, model checking and automated theorem
proving have a lot in common, in particular, both of them are often
based on a recursive decomposition of problems, through the
application of rules.

Links between model checking and automated theorem proving have been
investigated for long.  For instance, Bounded Model Checking (\BMC)
\cite{BCCZ99,Zhang14,BiereCCSZ03} is based a reduction of model
checking to satisfiability of boolean or quantified boolean
formulae.

This paper investigates model checking from an automated theorem
proving perspective, but instead of using a reduction, it directly
provides a proof system to solve model checking problems. This permits
to combine the expressiveness of automated theorem proving and the
complete automaticity of model checking.

The first contribution of this paper is to propose a slight extension
of \CTL{} \cite{EmersonC82,EmersonH85}, called \CTLP{}.  In this
extension, we may refer explicitly to states of the model.  The
proposition $P(s)$, for instance, expresses what is usually expressed
with the judgment $s \models P$.  Thus $P$ here is not an proposition
symbol, but a unary predicate symbol.  This transformation can be
compared to the introduction of adverbial phrases in natural
languages, where we can say not only ``The sky will be blue in the
future'' but also ``The sky will be blue on Monday''.  A proposition
such as $EX(P)(s)$ must then be written $EX_x (P(x))(s)$. Indeed, as
the symbol $P$ is now a unary predicate symbol, it must be applied to
a state variable, which is bound by the modality $EX$.
This allows to introduce polyadic predicate that do not only express
properties of states, but also relations between states.  For instance, we
can express the existence of a sequence of states $s = s_0, s_1, ...$
starting from $s$ such that for all $i$, $s_i \lra s_{i+1}$ and one
can buy a left shoe at some state $s_n$ and then the right shoe of the
same pair at a later state $s_p$. This property is expressed by the
formula $EF_x(EF_y(P(x, y))(x))(s)$.

The second contribution of this paper is to propose a proof system for
\CTLP{} in the style of a sequent calculus.  The proof search in
\SCTL{} coincides with checking the validity of a formula in a Kripke
model.  Using such a proof system has several advantages.  First, it
permits to give a certificate, a formal proof, for the property when
it succeeds.  Such a certificate can be verified by an independent
proof checker, increasing the confidence in the proved property, and
can also be combined with proofs built by other means.

Secondly, when the verification of the given property fails, it
permits to generate a counterexample as a proof of the negation 
of the formula,
instead of a sequence of states or trees labeled with states, as in
traditional model checkers.  In particular, when providing a
counterexample for a formula containing nested modalities, such as
$EG_x(EG_y(P(x, y))(x))(s)$, we need to provide a tree labeled with
states, in such a way that for each state $a$ labeling a leaf of this
tree, the formula $EG_y(P(a, y))(a))$ does not hold. That is for each
of these states, we need to provide another tree. As we shall see,
such a hierarchical tree can be represented as a proof of the formula
$AF_x(AF_y(\neg P(x, y))(x))(s)$.

Different proof systems for temporal logic have been proposed (see,
for instance,
\cite{EmersonH85,FisherDP01,GabbayP08,PnueliK02,Reynolds01,BruennlerLange}).
When designing such a proof system, one of the main issues is to
handle co-inductive modalities, for instance, asserting the existence
of an infinite sequence of which all elements satisfy some property.
It is tempting to reflect this infinite sequence as an infinite proof
and then use the finiteness of the model to prune the search-tree in a
proof search method. Instead, we use the finiteness of the model to
keep our proofs finite, like in the usual sequent calculus.  This is
the purpose of the \emph{merge} rules of \SCTL{} in Figure
\ref{sctl_rules}.

\SCTL{} is shown to be decidable, and proof search in this calculus
always terminates.

The third contribution of this paper is an implementation of a proof
search method for \SCTL.  Instead of translating the temporal formulae
to Quantified Boolean Formulae ({\QBF}s) \cite{Zhang14} or to the
format of an existing theorem prover \cite{Ji15}, we develop a new
automated theorem prover tailored for \SCTL{}, called \sctl{},
in the programming language \OCaml\footnote{\url{http://ocaml.org/}}. The
source code of \sctl{} is available
online\footnote{\url{https://github.com/terminatorlxj/SCTLProV}}.
Designing our own system gives us a lot of freedom to optimize it. For
example, the visited states are stored globally in order to avoid
visiting repeatedly states during the verification process of
\sctl{}. In addition, the set of visited states can be stored as a
Binary Decision Diagram (\BDD{}) in order to reduce space
occupation. These strategies are commonly used in traditional model
checkers, but cannot be realized in usual theorem provers like
\tool{iProver Modulo}. On the other hand, when formally verifying a
system, theorem provers usually output proof trees as a diagnosis of
the system, while in traditional model checkers, only sequences of
states representing the counterexample of properties can be produced.
Like usual theorem provers, \sctl{} produces proof trees when
verifying a system. Thus, when solving \CTL{} model checking problems,
\sctl{} can produce more instructive information than traditional
model checkers, and can use more optimization strategies than
traditional theorem provers.

To illustrate the efficiency of \sctl{}, we compare it with an
automated theorem prover \tool{iProver
  Modulo}\footnote{\url{http://www.ensiie.fr/~guillaume.burel/blackandwhite_iProverModulo.html.en}},
a \QBF{}-based bounded model checker
\verds{}\footnote{\url{http://lcs.ios.ac.cn/~zwh/verds/index.html}},
and two \BDD{}-based symbolic model checker
\nusmv{}\footnote{\url{http://nusmv.fbk.eu/}} and
\nuxmv{}\footnote{\url{https://nuxmv.fbk.eu/}} on several
benchmarks.  The experimental results show that \sctl{} compares
well with these four tools.

The efficiency of \sctl{} depends on the following design choices: the
first is that, unlike traditional symbolic model checkers or bounded
model checkers, \sctl{} searches states in a doubly on-the-fly
(both the transition relation and the formula are
unfolded on-the-fly)
\cite{BCG95,VergauwenL93} style. Thus, the state space is usually not
needed to be fully generated. This avoids enumerating unneeded states during
the verification procedure.  The second is that, unlike traditional
on-the-fly model checking algorithms for \CTL{}
\cite{BCG95,VergauwenL93}, our proof search algorithm is in
continuation-passing style \cite{Appel06}, in order to reduce stack
operations.

\sctl{} provides a more expressive input language than most traditional 
model checkers: it provides both readable notations for the
definition of data structures such as records or lists with unbounded
length, and arbitrary algorithms for the definitions of transition
rules and of properties. 

The rest of the paper is organized as follows.  In Section \ref{CTLP},
we introduce the logic system \CTLP{}.  In Section \ref{SCTL}, we
introduce the proof system \SCTL{}.  In Section \ref{implementation},
we describe the proof search algorithm for \SCTL{} and the prover
\sctl{}, which is an implementation of \SCTL{}.  In Section
\ref{sect:fairness}, we show the verification of properties under
fairness constraints in \SCTL{}.  In Section \ref{experiment}, we
compare, on several benchmarks, \sctl{} with \tool{iProver Modulo},
\verds{}, \nusmv{}, and \nuxmv{},
respectively. We also present an application of \sctl{} to 
model and analyze a concept of operations for air traffic control. 
\ref{app:detail:data} shows the details of the
experimental data with benchmark \#1, \#2, and \#3;
\ref{bench4:data:detail} shows the details of the experimental data
with benchmark \#4; \ref{appendix:proof:sound:complete} shows the
detailed proof of the soundness and completeness of the \SCTL{}
system; \ref{appendix:proof:corret:prfsearch} shows the detailed proof
of the correctness of the proof search method.

\section{\CTLP}
\label{CTLP}

In this section, we present the logic \CTLP{}$({\cal M})$
taking a Kripke model $\cal M$ as the parameter.  

\begin{definition}[Kripke model]
	A Kripke model $\cal M$ is given by
	\begin{itemize}
		\item a finite non-empty set $S$, whose elements are called states,
		\item a binary relation $\lra$ defined on $S$, such that for each $s$
		in $S$, there exists at least one $s'$ in $S$, such that $s\lra s'$,
		\item and a family of relations, each being a subset of $S^n$
for some natural number $n$.
	\end{itemize}
\end{definition}
We write $\textsf{Next}(s)$ for the set $\{s'\mid s\lra s'\}$ which is always finite. 
A $path$ is a finite or infinite sequence of states $s_0,...,s_n$ or $s_0,s_1,...$
such that for each $i$, if $s_i$ is not the last element of the
sequence, then $s_{i+1}\in \textsf{Next}(s_i)$. A $path$-$tree$ is a finite or
infinite tree labeled by states such that for each internal node
labeled by a state $s$, the children of this node are labeled by the
elements of $\textsf{Next}(s)$.

Properties of such a model are expressed in a language tailored for
this model that contains a constant for each state $s$, also
written $s$;
and a predicate symbol for each relation $P$, also written $P$.

The grammar of \CTLP{}($\cal M$) formulae is displayed below:
\begin{small}
	
	$$\phi \ :=
	\left\{\begin{array}{l}
	\top\ | \ \bot \ | \ P(t_1, ..., t_n)\  | \neg P(t_1, ..., t_n)\  | \ \phi  \wedge \phi \ |\ \phi \vee \phi \ | \\
	\ AX_x(\phi)(t)\ | \ EX_x(\phi)(t) \ | \ AF_x(\phi)(t) \ | \ EG_x(\phi)(t) \ |\\
	\ AR_{x,y}(\phi_1,\phi_2)(t)\ | \ EU_{x,y}(\phi_1,\phi_2)(t)
	\end{array}
	\right.$$
	
\end{small}
where  $x,y$ are variables, and each of $t$ and $t_1 \ldots t_n$ is either a constant or a variable. 

Note that in this language, modalities are applied to formulae and states, binding variables in these formulae.  
More explicitly, modalities $AX$, $EX$, $AF$, and $EG$ bind the variable $x$ in $\phi$,
and modalities $AR$ and $EU$ bind respectively the variable $x$ in $\phi_1$ and $y$ in $\phi_2$. 
Also, the negation is applied to atomic formulae only,
so, as usual, negations must be pushed inside the formulae. We use the
notation $(t/x)\phi$ for the substitution of $t$ for $x$ in $\phi$. As
usual, in presence of binders, substitution avoids captures.

The following abbreviations are used.
\begin{itemize}
\item $\phi_1\A \phi_2 \equiv \neg \phi_1 \vee \phi_2$,
\item $EF_x(\phi)(t) \equiv EU_{z,x}(\top, \phi)(t)$,
\item $ER_{x, y}(\phi_1,\phi_2)(t) \equiv EU_{y,z}(\phi_2,((z/x)\phi_1 \wedge
  (z/y)\phi_2))(t)\vee EG_y(\phi_2)(t)$, where $z$ is a variable that
  occurs neither in $\phi_1$ nor in $\phi_2$,
\item $AG_x(\phi)(t) \equiv \neg (EF_x(\neg \phi)(t))$,
\item $AU_{x,y}(\phi_1,\phi_2)(t) \equiv \neg (ER_{x,y}(\neg\phi_1,\neg\phi_2)(t))$.
\end{itemize}

Hereafter, a formula starting with one of the modalities $AF$, $EF$, $AU$ and $EU$ will be called an inductive formula; 
and a formula starting with one of the modalities $AR$, $ER$, $AG$ and $EG$ will be called a co-inductive formula.

\begin{definition}[Validity]
Let $\mathcal{M}$ be a model and $\phi$ be a closed formula, the
validity of a formula $\phi$ in the model $\mathcal{M}$ is defined by
induction on $\phi$ in Figure \ref{validity}.
\end{definition}

\begin{figure}[h]
\centering
\begin{tabular}{|l|}
\hline 
${\cal M}\models P(s_1,...,s_n)$, if $\langle s_1,...,s_n\rangle \in P$ with $P$ an $n$-ary relation on $\cal M$; 
\\ \hline
${\cal M}\models \neg P(s_1,...,s_n)$, if $\langle s_1,...,s_n\rangle \notin P$ with $P$ an $n$-ary relation on $\cal M$;
\\ \hline
 ${\cal M}\models \top$ is always the case;
 \\ \hline
 ${\cal M}\models \bot$ is never the case; 
 \\ \hline
 ${\cal M}\models \phi_1\wedge\phi_2$, if ${\cal M}\models \phi_1$ and ${\cal M}\models \phi_2$;
 \\ \hline
 ${\cal M}\models \phi_1\vee\phi_2$, if ${\cal M}\models \phi_1$ or ${\cal M}\models \phi_2$;
 \\ \hline
 ${\cal M}\models AX_x(\phi_1)(s)$, if for each state $s'$ in $\textsf{Next}(s)$,
  ${\cal M}\models (s'/x)\phi_1$;
  \\ \hline
 ${\cal M}\models EX_x(\phi_1)(s)$, if there exists a state $s'$ in
  $\textsf{Next}(s)$ such that ${\cal M}\models (s'/x)\phi_1$;
  \\ \hline
 ${\cal M}\models AF_x(\phi_1)(s)$, if 
there exists a finite tree $T$ such that $T$ has root $s$, 
for each internal node $s'$,\\ the children of this node\\ are labeled by the elements
of \nextstate{$s'$} and for each leaf $s'$, $\vdash (s'/x)\phi_1$;
  \\ \hline
 ${\cal M}\models EG_x(\phi_1)(s)$, if there exists an infinite path
  $s_0,s_1,...$ starting from $s$,\\ such that for all natural numbers
  $i$, ${\cal M}\models (s_i/x)\phi_1$;
  \\ \hline
 ${\cal M}\models AR_{x, y}(\phi_1,\phi_2)(s)$, 
  if there exists an
  possibly infinite tree such that 
  the root of this tree is $s$,\\
  for each internal node $s'$, the children of this node 
  are labeled by the elements of $\textsf{Next}(s')$,\\
  for each node $s'$, $\models (s'/y)\phi_2$ and
  for each leaf $s'$, $\models (s'/x)\phi_1$.
 \\ \hline
\end{tabular}
\caption{Validity of a formula in \CTLP\label{validity}}
\end{figure}

\begin{remark}
From the definition above, we obtain ${\cal M}\models EF_x(\phi)(s)$, if there
exists an infinite path $s_0, s_1,...$ starting from $s$ and a natural
number $j$ such that ${\cal M}\models (s_j/x)\phi$, etc.
\end{remark}

\begin{example}\label{example:uv}

  This example is motivated from the example presented in
  \cite{PartoviL14}, where the specification of the motion planning of
  multi-robot \cite{Craig89} system is characterized by \CTL{}
  formulae. The specification states that in a partitioned map, each
  robot starting from an initial section in the map will eventually
  move to its destination section; at the same time, each robot should
  avoid reaching some section along the movement steps.
	
  In our example, however, we focus on a ``spatial'' property (i.e., a
  property that characterize a relation between states) that can not
  be easily expressed in \CTL{}, but rather straightforward in \CTLP{}.
	
  Consider a special robot: an unmanned vehicle that is designed to
  move on the surface of a planet, which are partitioned into finite
  pieces of small areas. The unmanned vehicle moves from one area to
  another at a time, and the position of the unmanned vehicle is
  considered to be a state. Thus, the set of possible positions of the
  unmanned vehicle forms the set of the states, and the moves from one
  position to another form the transition relation. There is a very
  basic property that the design of the unmanned vehicle must hold:
  the unmanned vehicle must not stay in a small set of areas
  infinitely long, to be more precise, for a given distance $\sigma$,
  at any state $s$, the unmanned vehicle will eventually move to some
  state $s'$ such that the distance (not the number of moves) between
  $s$ and $s'$ is larger than $\sigma$. This property can be easily
  characterized by the \CTLP{} formula
  $AG_x(AF_y(D_{\sigma}(x,y))(x))(s_0)$ (Figure~\ref{fig:uv}), where
  $s_0$ is the landing position of the unmanned vehicle, i.e., the
  initial state; and atomic formula $D_{\sigma}(x,y)$ characterize the
  spatial property that the distance between state $x$ and state $y$
  is larger than $\sigma$.
	
  Such a temporal and spatial property cannot be elegantly expressed
  in traditional temporal logics, as there are no mechanisms to speak
  about specific states in the syntax of these logics. Even in the
  semantics, only one state is under consideration at a time, it is
  hard to express relationships between two states or among tuples of
  states.
\end{example}

\begin{figure}[h]
	\centering
	\includegraphics[width=7cm]{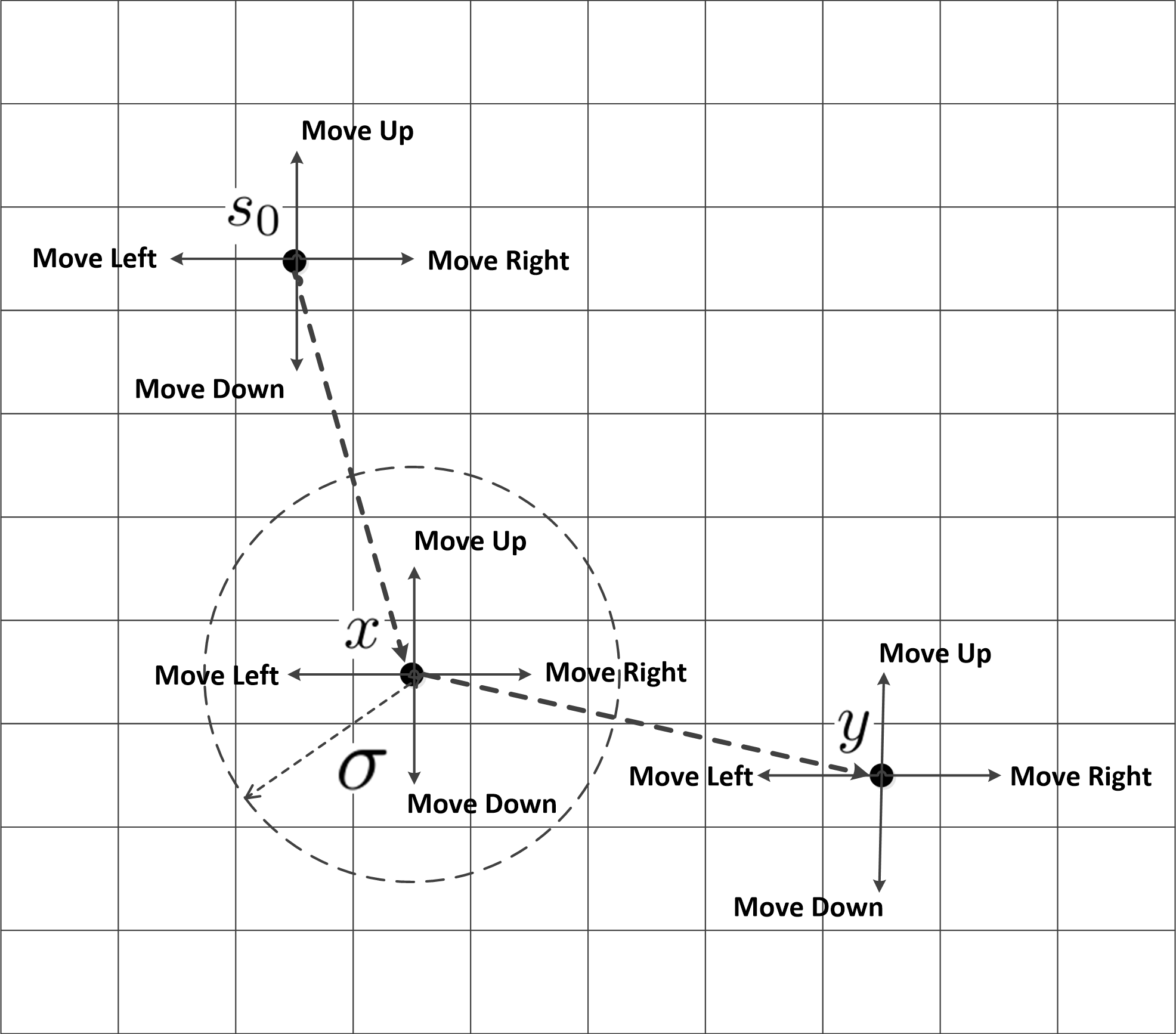}
	\caption{Possible positions of a unmanned vehicle.}\label{fig:uv}
\end{figure}

\section{\SCTL{}}
\label{SCTL}

In this section, we present \sctlm{}, a proof system for \CTLP{}$({\cal M})$.
Unlike the usual proof systems, where a formula is provable if and only
if it is valid in all models, a formula is provable in \sctlm{} if and
only if it is valid in the model ${\cal M}$.

First, consider the formula $AF_x(P(x))(s)$. 
This formula is valid if there exists a
finite tree $T$ whose root is labeled by $s$, such that the children
of an internal node labeled by a state $a$ are labeled by the elements
of $\textsf{Next}(a)$, and all the leaves are in $P$. Such a tree can
be called a \emph{proof} of the formula $AF_x(P(x))(s)$.

Now, consider $AF_x(AF_y(P(x, y))(x))(s)$ that contains
nested modalities. To justify the validity of this formula, one needs
to provide a tree whose root is labeled by $s$, where at each leaf $a$, the formula $AF_y(P(a,
y))(a)$ is valid. And to justify the validity of the formula
$AF_y(P(a, y))(a)$, one needs to provide other trees. These
hierarchical trees can be formalized with the proof rules
\begin{center}
$\infer[^{\mbox{\small $\mathbf{AF}$-{$\mathsf{R_1}$}}}]{\vdash AF_x(\phi)(s)}{\vdash (s/x)\phi}$\\
\vspace{5pt}
$\infer[^{\mbox{\small $\mathbf{AF}$-{$\mathsf{R_2}$}}}_{\{s_1,...s_n\}=\textsf{Next}(s)}]{\vdash AF_x(\phi)(s)}
{\vdash AF_x(\phi)(s_1) & \ldots & \vdash AF_x(\phi)(s_n)}$
\end{center}

{ \small 
\begin{example}\label{exp2}
Consider the model formed with the relation
\begin{center}
\centering
\begin{tikzpicture}
[scale=1.5,->,>=stealth',shorten >=1pt,auto,inner sep=2pt,semithick,bend angle=20]
  \tikzstyle{every state}=[draw=none] 
  \node[state] (1) at (0,0) {$a$};
  \node[state] (2) at (1,0.3) {$b$}; 
  \node[state] (3) at (1,-0.3){$c$}; 
  \node[state] (4) at (2,0) {$d$};
  
  \draw[right] (0.1,0.05) -- (0.9,0.25);
  \draw[right] (0.1,-0.05) -- (0.9,-0.25);
  \draw[right] (1.1,0.25) -- (1.9,0.05);
  \draw[right] (1.1,-0.25) -- (1.9,-0.05);
  \draw[->] (2.1,0.1) arc (-20:-325:-0.2);
\end{tikzpicture}
\end{center}
and the set $P=\{b, c\}$. A proof of the formula $AF_x(P(x))(a)$ is 
 
$$\infer[{\mbox{\small $\mathbf{AF}$-{$\mathsf{R_2}$}}}]{\vdash
  AF_x(P(x))(a)}{\infer[\mbox{\small $\mathbf{AF}$-{$\mathsf{R_1}$}}]{\vdash
    AF_x(P(x))(b)}{\infer[\mbox{\small atom-\textsf{R}}]{\vdash P(b)}{}} \quad\quad
  \infer[\mbox{$\mathbf{AF}$-{$\mathsf{R_1}$}}]{\vdash
    AF_x(P(x))(c)}{\infer[\mbox{\small atom-\textsf{R}}]{\vdash P(c)}{}}}$$
where besides the rules $\mathbf{AF}$-{$\mathsf{R_1}$} and $\mathbf{AF}$-{$\mathsf{R_2}$}, we use the rule 
\begin{center}
$\infer[^{\mbox{atom-\textsf{R}}}_{\langle s_1,...,s_n\rangle\in P}]{\vdash P(s_1,...,s_n)}{}$
\end{center}
\end{example}}

{ \small 
\begin{example}\label{exp3}
Consider the model formed with the same relation as in Example \ref{exp2}
and the set $Q=\{(b,d), (c,d)\}$. A proof of the formula 
$AF_x(AF_y(Q(x, y))(x))(a)$
is given in Figure \ref{afaf}.
\end{example}}

\begin{figure*}[h!]
\footnotesize
\centering
\noindent
\parbox{.78\textwidth}{\hspace*{-0.3cm}
$$\infer[{\mbox{\small $\mathbf{AF}$-{$\mathsf{R_2}$}}}]
        {\vdash AF_x(AF_y(Q(x, y))(x))(a)}
         {\infer[\mbox{$\mathbf{AF}$-{$\mathsf{R_1}$}}]
               {\vdash AF_x(AF_y(Q(x, y))(x))(b)}
               {\infer[{\mbox{\small $\mathbf{AF}$-{$\mathsf{R_2}$}}}]
                      {\vdash AF_y(Q(b, y))(b)}
                      {\infer[\mbox{\small $\mathbf{AF}$-{$\mathsf{R_1}$}}]
                             {\vdash AF_y(Q(b, y))(d)}
                             {\infer[\mbox{\small atom-\textsf{R}}]
                                    {Q(b,d)}
                                    {}
                              }
                      }
               }
         \quad\quad
         \infer[\mbox{$\mathbf{AF}$-{$\mathsf{R_1}$}}]
               {\vdash AF_x(AF_y(Q(x, y))(x))(b)}
               {\infer[{\mbox{\small $\mathbf{AF}$-{$\mathsf{R_2}$}}}]
                      {\vdash AF_y(Q(c, y))(c)}
                      {\infer[\mbox{\small $\mathbf{AF}$-{$\mathsf{R_1}$}}]
                             {\vdash AF_y(Q(c, y))(d)}
                             {\infer[\mbox{\small atom-\textsf{R}}]
                                    {Q(c,d)}
                                    {}
                              }
                      }
               }
        }$$
}
\caption{\label{afaf} a proof of $AF_x(AF_y(Q(x, y))(x))(a)$}
\end{figure*}

Note that \SCTL{} needs neither contraction rules nor multiplicative
$\vee$-\textsf{R} rules, because for each atomic formula $P$, either
$P$ is provable or $\neg P$ is.  Therefore the sequent $\vdash \neg P
\vee P$ is proved by proving either the sequent $\vdash \neg P$ or the
sequent $\vdash P$. 
As we have neither multiplicative $\vee$-\textsf{R} rules nor
structural rules, if we start with a sequent $\vdash \phi$, then each
sequent in the proof has one formula on the right of $\vdash$ and none
on the left.
So, as all sequents have the form $\vdash \phi$, the left rules and
the axiom rule can be dropped as well.  In other words, unlike the
usual sequent calculus and like Hilbert systems, \SCTL{} is tailored for
deduction, not for hypothetical deduction.

As the left-hand side of sequents is not used to record hypotheses, 
we will use it to record a different kind of information, that occur
in the case of co-inductive modalities, such as the modality $EG$. 

Indeed, the case of the co-inductive formula, for example $EG_x(P(x))(s)$, is
more complex than that of the inductive one, such as $AF_x(P(x))(s)$.
To justify its validity, one needs to provide an infinite sequence,
that is an infinite tree with only one branch, such that the root of
the tree is labeled by $s$, the child of a node labeled by a state $a$
is labeled by an element of $\textsf{Next}(a)$, and each node of the tree
verifies $P$.  However, as the model is finite, we can always restrict to
regular trees and use a finite representation of such trees. This
leads us to introduce a rule, called
$\mathbf{EG}$-\textsf{merge}, that permits to prove a sequent
of the form $\vdash EG_x(P(x))(s)$, provided such a sequent already
occurs lower in the proof. To make this rule local, we re-introduce
hypotheses $\Gamma$ to record part of the history of the proof. The
sequent have therefore the form $\Gamma\vdash\phi$, with a non empty
$\Gamma$ in this particular case only, and the
$\mathbf{EG}$-\textsf{merge} rule is then just an instance of
the axiom rule, that must be re-introduced in this particular case
only. The contexts of our sequents can be compared to the notion 
of history of \cite{BruennlerLange}, although our contexts are global 
while histories are attached to modalities.

The rules of \SCTL{} are depicted in Figure \ref{sctl_rules}.
\begin{figure}[h!]
\tiny
\centering
\noindent\framebox{\parbox{1.\textwidth}{\hspace*{-0.3cm}

$$
\infer[^{\mbox{atom-\textsf{R}}}_{\langle s_1,...,s_n\rangle \in P}]{\vdash P(s_1,...,s_n)}{}
\quad \quad
\infer[^{\mbox{$\neg$-\textsf{R}}}_{\langle s_1,...,s_n\rangle \notin P}]{\vdash \neg P(s_1,...,s_n)}{}
$$
\smallskip
$$
\infer[^{\mbox{$\top$-\textsf{R}}}]{\vdash \top}{}
\quad\quad
\infer[^{\mbox{$\wedge$-\textsf{R}}}]{\vdash \phi_1\wedge\phi_2}{\vdash \phi_1 & \vdash \phi_2}
\quad\quad
\infer[^{\mbox{$\vee$-{$\mathsf{R_1}$}}}]{\vdash \phi_1\vee\phi_2}{\vdash \phi_1}
\quad\quad
\infer[^{\mbox{$\vee$-{$\mathsf{R_2}$}}}]{\vdash \phi_1\vee\phi_2}{\vdash \phi_2}
$$
\smallskip
$$
\infer[^{\mbox{$\mathbf{EX}$-\textsf{R}}}_{s'\in \textsf{Next}(s)}]{\vdash EX_x(\phi)(s)}{\vdash (s'/x)\phi}
\quad\quad
\infer[^{\mbox{$\mathbf{AX}$-\textsf{R}}}_{\{s_1,...,s_n\}=\textsf{Next}(s)}]{\vdash AX_x(\phi)(s)}{\vdash (s_1/x)\phi & \ldots & \vdash (s_n/x)\phi}
$$
\smallskip
$$
\infer[^{\mbox{$\mathbf{AF}$-{$\mathsf{R_1}$}}}]{ \vdash AF_x(\phi)(s)}{\vdash (s/x)\phi}\quad\quad
\infer[^{\mbox{$\mathbf{AF}$-{$\mathsf{R_2}$}}}_{\{s_1,...,s_n\}=\textsf{Next}(s)}]{ \vdash AF_x(\phi)(s)}{\vdash AF_x(\phi)(s_1) & \ldots & \vdash AF_x(\phi)(s_n)}
$$
\smallskip
$$
  \infer[^{\mbox{$\mathbf{EG}$-\textsf{R}}}_{s' \in \textsf{Next}(s)}]{\Gamma \vdash EG_x(\phi)(s)}{\vdash (s/x)\phi & \Gamma,EG_x(\phi)(s)\vdash EG_x(\phi)(s')}\quad\quad
\infer[^{\mbox{$\mathbf{EG}$-\textsf{merge}}}_{EG_x(\phi)(s)\in \Gamma}]{\Gamma \vdash EG_x(\phi)(s)}{}
$$
\smallskip
$$
  \infer[^{\mbox{$\mathbf{AR}$-{$\mathsf{R_1}$}}}_{\{s_1,...,s_n\}=\textsf{{Next}}(s),\Gamma' = \Gamma,AR_{x,y}(\phi_1,\phi_2)(s)}]{\Gamma \vdash AR_{x,y}(\phi_1,\phi_2)(s)}{\vdash (s/y)\phi_2 & \Gamma'\vdash AR_{x, y}(\phi_1,\phi_2)(s_1) ~...~ \Gamma'\vdash AR_{x, y}(\phi_1,\phi_2)(s_n)}
$$
\smallskip

$$
\infer[^{\mbox{$\mathbf{AR}$-{$\mathsf{R_2}$}}}]{\Gamma \vdash AR_{x,y}(\phi_1,\phi_2)(s)}{\vdash (s/x)\phi_1 & \vdash (s/y)\phi_2}
\quad\quad
\infer[^{\mbox{$\mathbf{AR}$-\textsf{merge}}}_{AR_{x,y}(\phi_1,\phi_2)(s)\in \Gamma}]{\Gamma \vdash AR_{x,y}(\phi_1,\phi_2)(s)}{}
$$
\smallskip
$$
  \infer[^{\mbox{$\mathbf{EU}$-{$\mathsf{R_1}$}}}]{\vdash EU_{x,y}(\phi_1,\phi_2)(s)}{\vdash (s/y)\phi_2} 
\quad\quad
  \infer[^{\mbox{$\mathbf{EU}$-{$\mathsf{R_2}$}}}_{s'\in \textsf{Next}(s)}]{\vdash EU_{x,y}(\phi_1,\phi_2)(s)}{\vdash (s/x)\phi_1 & \vdash EU_{x,y}(\phi_1,\phi_2)(s')}
$$

}}
\caption{\label{sctl_rules} \sctlm{}}
\end{figure}

\begin{theorem}[Soundness and Completeness]\label{theom:sound:complete}
	If $\phi$ is closed, then 
	the sequent $\vdash \phi$ has a proof in \sctlm{} if and only if ${\cal M}\models \phi$ for the given Kripke model $\cal M$.
\end{theorem}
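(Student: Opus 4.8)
The plan is to prove both directions simultaneously by induction on the structure of $\phi$, first unfolding the abbreviated modalities ($EF, ER, AG, AU$) into the primitives $AX, EX, AF, EG, AR, EU$ and taking the full biconditional ``$\vdash\phi$ is provable iff ${\cal M}\models\phi$'' as the induction hypothesis. What makes this induction clean is that every side premise that certifies ``$\phi$ holds at a node'' — namely $\vdash(s/x)\phi$ in $\mathbf{AF}$-$\mathsf{R_1}$, $\mathbf{EG}$-\textsf{R}, $\mathbf{EU}$-$\mathsf{R_2}$, and $\vdash(s/y)\phi_2$, $\vdash(s/x)\phi_1$ in the $\mathbf{AR}$ and $\mathbf{EU}$ rules — carries the \emph{empty} context and mentions only structurally smaller formulas, so the hypothesis applies to them verbatim. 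The propositional cases and the one-step modalities $AX, EX$ are immediate, since each rule mirrors exactly its clause in Figure~\ref{validity}. The inductive modalities are routine as well: a proof of $\vdash AF_x(\phi)(s)$ is itself a finite tree whose $\mathbf{AF}$-$\mathsf{R_2}$ nodes branch over $\nextstate{s'}$ and whose $\mathbf{AF}$-$\mathsf{R_1}$ leaves discharge $(s'/x)\phi$, which by the induction hypothesis is exactly the witnessing tree of the validity clause; conversely that finite tree is turned into such a proof by structural induction on it. The case of $EU$ is identical with finite paths in place of finite trees.

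The real work lies in the co-inductive modalities $EG$ and $AR$, where proofs are finite but the semantic witnesses are infinite, the gap being bridged by the \emph{merge} rules together with the finiteness of $S$. For soundness I would \emph{unfold} the cyclic proof into its witness. Given a proof of $\vdash EG_x(\phi)(s_0)$, following the right premise of each $\mathbf{EG}$-\textsf{R} produces states $s_0,s_1,\dots$ with $s_{i+1}\in\nextstate{s_i}$ and, by the induction hypothesis on the left premises, ${\cal M}\models(s_i/x)\phi$; since the proof is finite this chain terminates, and it can only terminate at an $\mathbf{EG}$-\textsf{merge}, forcing $s_n=s_j$ for some $j<n$ and yielding the lasso $s_0\dots s_{j-1}(s_j\dots s_{n-1})^{\omega}$ on which $\phi$ holds everywhere, so ${\cal M}\models EG_x(\phi)(s_0)$. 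For $AR$ the same idea produces a possibly infinite tree by grafting, at each $\mathbf{AR}$-\textsf{merge} on $AR_{x,y}(\phi_1,\phi_2)(t)$, the subtree rooted at the earlier $\mathbf{AR}$-$\mathsf{R_1}$ occurrence of $t$ that placed it in the context; one then checks that every node satisfies $\phi_2$, every finite leaf satisfies $\phi_1$, and every infinite branch is a ``$\phi_2$ forever'' branch, matching the witness of Figure~\ref{validity}.

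For completeness of the co-inductive modalities I would go the other way and \emph{fold} the infinite witness into a finite proof, using the context $\Gamma$ to record the states already committed to along the current branch. Concretely I would prove the strengthened statement ``if ${\cal M}\models EG_x(\phi)(s)$ then $\Gamma\vdash EG_x(\phi)(s)$ is provable for every $\Gamma$'' by an inner induction on the number of states $t$ with $EG_x(\phi)(t)\notin\Gamma$: if $EG_x(\phi)(s)\in\Gamma$ close by $\mathbf{EG}$-\textsf{merge}; otherwise ${\cal M}\models(s/x)\phi$ (handled by the outer hypothesis) and some successor $s'$ still satisfies $EG$, so apply $\mathbf{EG}$-\textsf{R} and recurse with $\Gamma,EG_x(\phi)(s)$, strictly decreasing the measure. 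The same scheme handles $AR$ through the fixpoint unfolding ``${\cal M}\models AR_{x,y}(\phi_1,\phi_2)(s)$ iff ${\cal M}\models(s/y)\phi_2$ and either ${\cal M}\models(s/x)\phi_1$ or every successor satisfies $AR$'': release via $\mathbf{AR}$-$\mathsf{R_2}$ whenever $\phi_1$ holds at $s$, and otherwise apply $\mathbf{AR}$-$\mathsf{R_1}$ and recurse on all successors with $s$ added to the context. Finiteness of $S$ is precisely what makes this measure well-founded and the constructed proof finite.

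The step I expect to be the main obstacle, and would write out most carefully, is exactly this co-inductive bookkeeping. On the soundness side one must verify that the tree obtained by grafting at an $\mathbf{AR}$-\textsf{merge} is a legitimate witness — in particular that a context formula is always inserted by an expanding $\mathbf{AR}$-$\mathsf{R_1}$ (respectively $\mathbf{EG}$-\textsf{R}) node and never by a leaf, so the grafted subtrees are genuinely internal nodes with all successors present. On the completeness side one must check that the strategy ``release whenever $\phi_1$ holds, expand otherwise,'' combined with the strictly decreasing measure, never gets stuck and always reaches either a leaf rule or a merge before the context is exhausted. Once these two points are settled, the remaining cases assemble into the biconditional by the outer structural induction.
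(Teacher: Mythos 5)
Your proposal is correct, and it splits naturally into a half that matches the paper and a half that does not. For soundness you do essentially what the paper does: extract from a finite proof a lasso (for $EG$) or a finite tree with back-edges to ancestors (for $AR$), observing that the chain of recursive premises can only terminate at a \textsf{merge}, and then unfold that finite object into the infinite semantic witness --- this unfolding is exactly the content of the paper's Propositions~\ref{prop:fis} and~\ref{prop:fit}, and your ``grafting at each $\mathbf{AR}$-\textsf{merge}'' is the operational reading of Proposition~\ref{prop:fit}. For completeness of the co-inductive modalities, however, you take a genuinely different route. The paper first converts the infinite semantic witness into a finite one (Proposition~\ref{prop:ifs} by pigeonhole on states, Proposition~\ref{prop:ift} by pruning at repeated labels plus K\"onig's lemma) and then reads a proof off that finite sequence or tree, inserting \textsf{merge}s at the repeats. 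You instead prove the strengthened statement quantified over all contexts $\Gamma$ and recurse directly, using the one-step fixpoint unfolding of the validity clause and the measure ``number of states $t$ with the formula at $t$ not yet in $\Gamma$''; finiteness of $S$ makes this well-founded, and the merge rule closes the branch exactly when the measure would otherwise fail to decrease. Your route avoids K\"onig's lemma and the finite-witness intermediary altogether and, as a bonus, is essentially the termination argument for the proof-search strategy of Section~\ref{implementation}; the paper's route keeps the four conversion propositions as standalone lemmas that make the correspondence between proofs and finite representations of semantic witnesses explicit, which is also what underlies reading certificates and counterexamples off proofs. Both arguments are sound; the bookkeeping points you flag as the main obstacles (context formulas are only ever inserted by expanding rules, and the release-or-expand strategy never gets stuck) are exactly the right ones to write out.
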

\begin{proof}
	The soundness and completeness are guaranteed by the finiteness of the Kripke model. The details are presented in 
\ref{appendix:proof:sound:complete}.
\end{proof}

\section{\textsf{SCTLP\MakeLowercase{ro}V}}\label{implementation}

In this section, the system \sctl{}\footnote{\url{https://github.com/terminatorlxj/SCTLProV}}, that is an implementation
of \SCTL{}, is presented and compared with other model checking tools.

\subsection{Implementation}

We develop, 
in the programming language \OCaml{}, 
a new automated theorem prover \sctl{} (Figure
\ref{workflow}) to implement \SCTL{}. 

\sctl{} reads and interprets an input file
containing a description of a Kripke model and a finite number of
formulae---the properties to be verified on the model. It searches for
a proof of these formulae and outputs a certificate (resp. True) when
the verification succeeds, and a counterexample, that is a proof of
the negation of the formula, (resp. False) when it does not.

\begin{figure}[h]
	\centering
\makebox[0.50\textwidth][c]{%
\tiny
\begin{tikzpicture}[
outpt/.style={->,blue!80!black,very thick},
outpt1/.style={->,red!80!black,very thick},
>=stealth,
every node/.append style={align=center}]
 
\node [inputfile] (input) at (0,0) {
\begin{tabular}{@{}c}
\textsf{Input}\\\textsf{file}
\end{tabular}};
 
\node (preproc) at (1.5,0) {\textsf{Interpret}};
 
\draw[outpt](input)--(preproc);
 
\node (proofsearch) at (3.0,0) {
\begin{tabular}{@{}l}
\textsf{Proof}\\
\textsf{search}
\end{tabular}};
 
\draw[outpt](preproc)--(proofsearch);
 
\node [inputfile] (prooftree) at (7,1) {
\begin{tabular}{@{}c}
\textsf{Certificate} \\\textsf{or True}
\end{tabular}};
 
\node [inputfile] (result) at (7, -1) {
\begin{tabular}{@{}c}
\textsf{Counterexample} \\\textsf{or False}
\end{tabular}};
 
\node (sctlprov) at (2,0.5) {\sctl{}};
\node (output) at (4.0,0.1) {\textsf{output}};
\node (provable) at (5.3, 1.1) {\textsf{provable}};
\node (unprovable) at (5.3,-1.15) {\textsf{unprovable}};
 

\begin{pgfonlayer}{background}
\path (preproc.west |- sctlprov.north)+(-0.2,0.1) node (a) {};
\path (proofsearch.east |- proofsearch.south)+(+0.2,-0.2) node (c) {};
\path[fill=yellow!30,rounded corners, draw=black!50]
(a) rectangle (c);
\end{pgfonlayer}
 
\begin{pgfonlayer}{background}
\path (preproc.west |- preproc.north)+(-0.1,0.1) node (a) {};
\path (preproc.east |- preproc.south)+(+0.1,-0.1) node (c) {};
\path[fill=green!30,rounded corners, draw=black!50]
(a) rectangle (c);
\end{pgfonlayer}
 
\begin{pgfonlayer}{background}
\path (proofsearch.west |- proofsearch.north)+(-0.1,0.1) node (a) {};
\path (proofsearch.east |- proofsearch.south)+(+0.1,-0.1) node (c) {};
\path[fill=green!30,rounded corners, draw=black!50]
(a) rectangle (c);
\end{pgfonlayer}
 
\draw[-,very thick, blue!80!black](3.55, 0)--(4.5, 0);
\draw[-,very thick, blue!80!black](4.5, 0)--(4.5, 1);
\draw[-,very thick, blue!80!black](4.5, 0)--(4.5, -1);
\draw[outpt](4.5,-1)--(result);
\draw[outpt](4.5,1)--(prooftree);
\end{tikzpicture}
}
\caption{A general work flow of \sctl{}.}\label{workflow}
\end{figure}

The basic idea of the proof search procedure in \sctl{} is as
follows: first we give an order over the inference rules of \SCTL{}
with the same conclusion (if any) and, for each root under
consideration of a Continuation Passing Tree (Definition \ref{CPT}),
we give an order over the children of this node.  Then, to prove an
\SCTL{} sequent $\Gamma\vdash \phi$, we need to find an inference rule
of \SCTL{} such that this sequent matches the conclusion of the rule,
and then find successively a proof of each premise, according to the
given orders.  Thus, the proving procedure of sequent
$\Gamma\vdash \phi$ transforms into the proving procedure of all its
premises with some specific order.  
The major techniques used in this implementation are the use of 
continuations and of memorization. 

\subsubsection{The use of continuations}

One of the major techniques for the implementation of \SCTL{} is based
on the concept of continuation, usually used in compiling and
programming \cite{Appel06,Sestoft12}.  Basically, a continuation is an
explicit representation of ``the rest of the computation'', which will
happen next.

\begin{figure}[h!]
\scriptsize
\centering
\begin{tabular}{|ll|}
  \hline
  $\textsf{cpt}(\vdash\top, c_1, c_2)\rsa c_1$ & $\textsf{cpt}(\vdash\bot, c_1, c_2)\rsa c_2$
  \T\B\\ \hline
  \multicolumn{2}{|l|}{$\textsf{cpt}(\vdash P(s_1,...,s_n), c_1, c_2)\rsa c_1 \left[\langle s_1,...,s_n\rangle\in P\right]$}
  \T\B\\ \hline
  \multicolumn{2}{|l|}{$\textsf{cpt}(\vdash P(s_1,...,s_n), c_1, c_2)\rsa c_2 \left[\langle s_1,...,s_n\rangle\notin P\right]$}\\\hline
  
  \multicolumn{2}{|l|}{$\textsf{cpt}(\vdash \neg P(s_1,...,s_n), c_1, c_2)\rsa c_2 \left[\langle s_1,...,s_n\rangle\in P\right]$}
  \T\B\\ \hline
  
  \multicolumn{2}{|l|}{$\textsf{cpt}(\vdash \neg P(s_1,...,s_n), c_1, c_2)\rsa c_1 \left[\langle s_1,...,s_n\rangle\notin P\right]$}\\\hline
  
  \multicolumn{2}{|l|}{$\textsf{cpt}(\vdash \phi_1\wedge\phi_2, c_1, c_2)\rsa \textsf{cpt}(\vdash\phi_1, \textsf{cpt}(\vdash\phi_2, c_1, c_2), c_2)$}\T\B\\\hline
  
\multicolumn{2}{|l|}{ $\textsf{cpt}(\vdash\phi_1\vee\phi_2, c_1, c_2)\rsa \textsf{cpt}(\vdash\phi_1, c_1, \textsf{cpt}(\vdash\phi_2, c_1, c_2))$}\T\B\\ \hline
  
\multicolumn{2}{|l|}{$\textsf{cpt}(\vdash AX_x(\phi)(s), c_1, c_2)\rsa$}\\  
\multicolumn{2}{|l|}{$\textsf{cpt}(\vdash(s_1/x)\phi, \textsf{cpt}(\vdash(s_2/x)\phi, \textsf{cpt}(...\textsf{cpt}(\vdash(s_n/x)\phi, c_1, c_2),...,c_2), c_2), c_2)$}\\
\multicolumn{2}{|l|}{$\left[\{s_1,...,s_n\}=\textsf{Next}(s)\right]$}
\T\B\\ \hline

\multicolumn{2}{|l|}{$\textsf{cpt}(\vdash EX_x(\phi)(s), c_1, c_2)\rsa$}\\
\multicolumn{2}{|l|}{ $\textsf{cpt}(\vdash(s_1/x)\phi, c_1, \textsf{cpt}(\vdash(s_2/x)\phi, c_1, \textsf{cpt}(...\textsf{cpt}(\vdash(s_n/x)\phi, c_1, c_2)...)))$}\\
\multicolumn{2}{|l|}{$\left[\{s_1,...,s_n\}=\textsf{Next}(s)\right]$}
\T\B\\ \hline

\multicolumn{2}{|l|}{$\textsf{cpt}(\Gamma\vdash AF_x(\phi)(s), c_1, c_2)\rsa c_2 \left[AF_x(\phi)(s)\in \Gamma\right]~~~~~~$}\T\B\\ \hline
\multicolumn{2}{|l|}{$\textsf{cpt}(\Gamma\vdash AF_x(\phi)(s), c_1, c_2)\rsa$}\\
\multicolumn{2}{|l|}{$\textsf{cpt}(\vdash(s/x)\phi, c_1, \textsf{cpt}(\Gamma'\vdash AF_x(\phi)(s_1), \textsf{cpt}(...\textsf{cpt}(\Gamma'\vdash AF_x(\phi)(s_n), c_1, c_2)..., c_2), c_2))$}\\
\multicolumn{2}{|l|}{$\left[\{s_1,...,s_n\}=\textsf{Next}(s), AF_x(\phi)(s)\notin \Gamma, \textup{and}\; \Gamma'=\Gamma,AF_x(\phi)(s)\right]$}\B\\ \hline

\multicolumn{2}{|l|}{$\textsf{cpt}(\Gamma\vdash EG_x(\phi)(s), c_1, c_2)\rsa c_1 \left[EG_x(\phi)(s)\in \Gamma\right]~~~~~~$}\T\B\\ \hline
\multicolumn{2}{|l|}{$\textsf{cpt}(\Gamma\vdash EG_x(\phi)(s), c_1, c_2)\rsa$}\\
\multicolumn{2}{|l|}{$\textsf{cpt}(\vdash(s/x)\phi, \textsf{cpt}(\Gamma'\vdash EG_x(\phi)(s_1), c_1, \textsf{cpt}(...\textsf{cpt}(\Gamma'\vdash EG_x(\phi)(s_n), c_1, c_2)...)), c_2)$}\\
\multicolumn{2}{|l|}{$\left[\{s_1,...,s_n\}=\textsf{Next}(s), EG_x(\phi)(s)\notin \Gamma, \textup{and}\; \Gamma'=\Gamma, EG_x(\phi)(s)\right]$}\B\\ \hline

\multicolumn{2}{|l|}{$\textsf{cpt}(\Gamma\vdash AR_{x,y}(\phi_1,\phi_2)(s), c_1, c_2)\rsa c_1\left[(AR_{x,y}(\phi_1,\phi_2)(s)\in \Gamma\right]~~~~~~$}\T\B\\\hline
\multicolumn{2}{|l|}{$\textsf{cpt}(\Gamma\vdash AR_{x,y}(\phi_1,\phi_2)(s), c_1, c_2)\rsa$}\\
\multicolumn{2}{|l|}{$\textsf{cpt}(\vdash(s/y)\phi_2, \textsf{cpt}(\vdash (s/x)\phi_1, c_1, \textsf{cpt}(\Gamma'\vdash AR_{x,y}(\phi_1,\phi_2)(s_1), \textsf{cpt}(...\textsf{cpt}(\Gamma'\vdash AR_{x,y}(\phi_1,\phi_2)(s_n),$}\\
\multicolumn{2}{|l|}{$  c_1, c_2)..., c_2), c_2)), c_2)
\left[\{s_1,...,s_n\}=\textsf{Next}(s), AR_{x,y}(\phi_1,\phi_2)(s)\notin \Gamma, \textup{ and } \Gamma'=\Gamma,AR_{x,y}(\phi_1,\phi_2)(s)\right]$}\B\\ \hline

\multicolumn{2}{|l|}{$\textsf{cpt}(\Gamma\vdash EU_{x,y}(\phi_1,\phi_2)(s), c_1, c_2)\rsa c_2$ $\left[EU_{x,y}(\phi_1,\phi_2)(s)\in \Gamma\right]~~~~~~$}\T\B\\\hline
\multicolumn{2}{|l|}{$\textsf{cpt}(\Gamma\vdash EU_{x,y}(\phi_1,\phi_2)(s), c_1, c_2)\rsa$}\\
\multicolumn{2}{|l|}{$ \textsf{cpt}(\vdash(s/y)\phi_2, c_1, \textsf{cpt}(\vdash(s/x)\phi_1,\textsf{cpt}(\Gamma'\vdash EU_{x,y}(\phi_1,\phi_2)(s_1), c_1, \textsf{cpt}(...\textsf{cpt}(\Gamma'\vdash EU_{x,y}(\phi_1,\phi_2)(s_n),$}\\
\multicolumn{2}{|l|}{$ c_1, c_2)...)), c_2))\left[\{s_1,...,s_n\}=\textsf{Next}(s), EU_{x,y}(\phi_1,\phi_2)(s)\notin \Gamma, \textup{ and } \Gamma'=\Gamma, EU_{x,y}(\phi_1,\phi_2)(s)\right]$} \B\\ \hline
\end{tabular}
\caption{Rewritings over \textsf{CPT}s.}\label{fig:rewriting}
\end{figure}

\begin{definition}[Continuation Passing Tree]\label{CPT}
A Continuation Passing Tree (\textsf{CPT}) is a binary tree such that
\begin{itemize}
\item 
every leaf is labeled by either $\mathfrak{t}$ or $\mathfrak{f}$, 
where $\mathfrak{t}$ and $\mathfrak{f}$ are two different symbols;
\item 
every internal node is labeled by an \SCTL{} sequent. 
\end{itemize}
For each internal node in a \textsf{CPT}, the left subtree is called its $\mathfrak{t}$-continuation, 
and the right one its $\mathfrak{f}$-continuation. 
A \textsf{CPT} $c$ with an \SCTL{} sequent $\Gamma \vdash \phi$ as its root is often denoted by $\textsf{cpt}(\Gamma\vdash\phi, c_1, c_2)$, 
or visually by 
$$\Tree [.{$\Gamma\vdash\phi$} [.{$c_1$} ]  [.{$c_2$} ] ]$$ 
where  $c_1$ is the $\mathfrak{t}$-continuation of $c$, and $c_2$ the $\mathfrak{f}$-continuation.
\end{definition}

\textsf{CPT}s are evaluated to $\mathfrak{t}$ or $\mathfrak{f}$ using
the conditional rewrite rules presented in Figure \ref{fig:rewriting}
where conditions are put in brackets, which implement the rules of
\SCTL{}.

Note that there is no congruence rule to allow
the application of a rewrite rule to a subexpression of a \textsf{CPT}. So 
reductions always occur at the root of the \textsf{CPT}s.

The aim of the rewrite rules is to decide, for a given sequent $\Gamma
\vdash \phi$, if the \textsf{CPT} $\textsf{cpt}(\Gamma \vdash\phi, \mathfrak{t},
\mathfrak{f})$ reduces to $\mathfrak{t}$ or $\mathfrak{f}$. 
To do so,
we analyze the form of the formula $\phi$. If, for instance, it is 
$\vdash \phi_1 \wedge \phi_2$, we transform, using one of the
rewrite rules, the tree $\textsf{cpt}(\vdash \phi_1 \wedge \phi_2,
\mathfrak{t}, \mathfrak{f})$ into $\textsf{cpt}(\vdash \phi_1,
\textsf{cpt}(\vdash \phi_2, \mathfrak{t}, \mathfrak{f}), \mathfrak{f})$
expressing that if the attempt to prove $\vdash \phi_1$ succeeds
then we attempt to prove $\vdash \phi_2$, 
otherwise it just returns a negative result. 
The \textsf{CPT} $\textsf{cpt}(\vdash \phi_1, \textsf{cpt}(\vdash \phi_2, \mathfrak{t},
\mathfrak{f}), \mathfrak{f})$ is in turn transformed according to the
form of $\phi_1$.

\begin{proposition}[Termination]
$\textsf{cpt}(\vdash \phi, \mathfrak{t}, \mathfrak{f})$ always rewrites to $\mathfrak{t}$ or $\mathfrak{f}$ in finite many steps.
\end{proposition}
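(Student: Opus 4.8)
The plan is to strengthen the statement and prove it by well-founded induction. Concretely, I would prove that $\textsf{cpt}(\Gamma \vdash \phi, c_1, c_2)$ rewrites to $\mathfrak{t}$ or $\mathfrak{f}$ in finitely many steps \emph{whenever both continuations $c_1$ and $c_2$ already do so}. This reformulation is forced by the fact, noted just above, that reductions occur only at the root: when the root is rewritten, the continuations are carried along unchanged and become active only after the root computation has finished, so one cannot reason about a \textsf{CPT} without first knowing its continuations terminate. The proposition then follows by instantiating $\Gamma=\emptyset$, $\phi$ the initial formula, and $c_1=\mathfrak{t}$, $c_2=\mathfrak{f}$, which are already in normal form.

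\textbf{The measure.} Let $\mathcal{S}$ be the set of all sequents that can be recorded in a context during the reduction of the fixed initial formula. Each such sequent is $AF_x(\psi)(s)$, $EG_x(\psi)(s)$, $AR_{x,y}(\psi_1,\psi_2)(s)$ or $EU_{x,y}(\psi_1,\psi_2)(s)$, i.e. a modal subformula of the initial formula whose free variables have been replaced by states and which is applied to a state $s\in S$. Since the initial formula has finitely many subformulas, each with finitely many free variables, and $S$ is finite, $\mathcal{S}$ is finite; put $N=|\mathcal{S}|$. Every context $\Gamma$ arising in a reduction satisfies $\Gamma\subseteq\mathcal{S}$, hence $0\le N-|\Gamma|\le N$. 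I assign to each root sequent the lexicographic measure $\mu(\Gamma\vdash\phi)=(|\phi|,\,N-|\Gamma|)$, where $|\phi|$ is the structural size of $\phi$, which is unaffected by substituting constants for variables. This lives in $\mathbb{N}\times\mathbb{N}$ under the lexicographic order, which is well-founded.

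\textbf{Key steps.} I would proceed by case analysis on $\phi$, checking that each rewrite decreases $\mu$ where needed. The leaf rules and the four \textsf{merge}-style rules ($AF,EG,AR,EU$ already in $\Gamma$) rewrite the root directly to $c_1$ or $c_2$, which terminate by hypothesis. The propositional rules, the $AX$, $EX$ rules, and the body premises $(s/x)\psi$, $(s/y)\psi_2$ of the fixpoint rules all pass to strict subformulas, so $|\phi|$ strictly drops and the induction hypothesis applies irrespective of the context. The delicate cases are the fixpoint unfoldings $\mathbf{AF}$-$\mathsf{R_2}$, $\mathbf{EG}$-$\mathsf{R}$, $\mathbf{AR}$-$\mathsf{R_1}$, $\mathbf{EU}$-$\mathsf{R_2}$, which spawn continuations carrying the same modality at a successor $s_i$ with context enlarged to $\Gamma'=\Gamma,AF_x(\psi)(s)$ (etc.). Here $|\phi|$ is unchanged, but the non-merge side condition guarantees the current modal sequent is not yet in $\Gamma$, so $|\Gamma'|=|\Gamma|+1$ and the second component $N-|\Gamma'|$ strictly decreases; hence $\mu$ strictly decreases and the induction hypothesis again applies. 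In each such case I would discharge the nested continuations bottom-up: the innermost one uses $c_1,c_2$ directly, and each outer continuation uses $c_1$ together with the already-established termination of the inner one, every invocation of the induction hypothesis being at a strictly smaller $\mu$. The single root rewrite thus produces a \textsf{CPT} all of whose relevant continuations terminate, so it terminates too.

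\textbf{The main obstacle.} The crux is finding a measure that reconciles two opposite moves: descending into a body resets the context to empty, making $N-|\Gamma|$ jump back up to $N$, but strictly shrinks the formula, whereas unfolding a fixpoint keeps the formula fixed but strictly enlarges the context. Ordering these lexicographically with formula size dominant resolves the tension, and well-foundedness of the second component rests entirely on the finiteness of the Kripke model through the bound $N$ — the same finiteness that justifies the \textsf{merge} rules and Theorem~\ref{theom:sound:complete}. The rest is bookkeeping: confirming that every context stays inside the fixed finite set $\mathcal{S}$, and that the side conditions of the non-merge rules indeed force $\Gamma$ to grow strictly at each unfolding.
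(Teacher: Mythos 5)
Your proof is correct, and it rests on exactly the same combinatorial core as the paper's: the lexicographic measure $\langle|\phi|,\,n-|\Gamma|\rangle$ on sequents, with the formula size dominant so that unfolding a fixpoint (which grows $\Gamma$ but fixes $\phi$) and descending into a body (which resets $\Gamma$ but shrinks $\phi$) both decrease it, and with well-foundedness of the second component supplied by the finiteness of the model. Where you diverge is in how this measure on sequents is lifted to termination of the rewrite system. The paper wraps the measure in Dershowitz's recursive path ordering: it fixes a precedence with sequents above $\textsf{cpt}$ above $\mathfrak{t},\mathfrak{f}$, and checks each rewrite rule against the multiset condition of the RPO, importing well-foundedness from the general RPO theorem. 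You instead strengthen the statement to the conditional form ``$\textsf{cpt}(\Gamma\vdash\phi,c_1,c_2)$ terminates whenever $c_1$ and $c_2$ do'' and run a direct well-founded induction on the sequent measure, discharging the nested continuations bottom-up. Your route is more elementary (no external termination theorem) and, more importantly, it handles the continuations more cleanly: because $c_1$ and $c_2$ are arbitrary \textsf{CPT}s possibly containing sequents with larger formulas than the current root, the RPO multiset comparisons the paper asserts (e.g.\ for the $\wedge$ rule) do not obviously hold for all instances of $c_1,c_2$, whereas your induction hypothesis simply assumes the continuations terminate and never needs to dominate them in any ordering. The price you pay is the explicit bookkeeping that every context stays inside a fixed finite set and that each use of the induction hypothesis is at a strictly smaller measure, which you carry out correctly.
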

\begin{proof}
	Let $n$ be the cardinal of ${\cal M}$, let $|\phi|$ be the 
	size of $\phi$ defined in the usual way, and $|\Gamma|$
	be the length of a merge $\Gamma$. We define the size of a sequent $\Gamma \vdash 
	\phi$ as 
	$$|\Gamma \vdash \phi| = \langle|\phi|, (n - |\Gamma|)\rangle$$
	
	We then define the set of operators $F = \{\mathfrak{t}, \mathfrak{f}, \textsf{cpt}\}\cup \textsf{Seq}$, where $\textsf{Seq}$ is the set of sequents. The arity of $\textsf{cpt}$ is 3, while other elements in $F$ have arity 0. The partial ordering $\succ$ over $F$ is defined as follows:
	\begin{itemize}
		\item $\textsf{cpt} \succ \mathfrak{t}$;
		\item $\textsf{cpt} \succ \mathfrak{f}$;
		\item $\Gamma\vdash\phi \succ \textsf{cpt} $ for each sequent $\Gamma\vdash\phi$;
		\item $\Gamma\vdash \phi \succ \Gamma'\vdash\phi'$ iff $|\Gamma\vdash \phi| > |\Gamma'\vdash\phi'|$, where $>$ is the lexicographic ordering of natural numbers.
	\end{itemize}
	
	Let $T(F)$ be the set of terms constructed by operators in $F$, and $\succ^*$ be the recursive path ordering on $T(F)$ proposed by Dershowitz \cite{ND82}, which is restated here.
	
	The recursive path ordering $\succ^*$ on the set $T(F)$ of terms over $F$ is defined recursively as follows:
	$$s=f(s_1,...,s_m)\succ^*g(t_1,...,t_n)=t$$ iff one of the following cases holds:
	\begin{itemize}
		\item $f = g$, and $\{s_1,...,s_m\}\ssucc^*\{t_1,...,t_n\}$;
		\item $f \succ g$, and $\{s\}\ssucc^*\{t_1,...,t_n\}$;
		\item $f\not\succeq g$, and either $\{s_1,...,s_m\}\ssucc^*\{t\}$ or $\{s_1,...,s_m\}=\{t\}$;
	\end{itemize}
	where $\ssucc^*$ is the extension of $\succ^*$ to multisets. $\succ^*$ is well-founded since $\succ$ is well-founded \cite{ND82}. 
	
	To prove the termination of the rewriting system, it is sufficient to prove that for each rewriting $c\rsa c'$, $c\succ^*c'$.
	
	Then we analyze each case of $c\rsa c'$ in the rewriting system.
	Assume that $c$ is of the form $\textsf{cpt}(\Gamma\vdash\phi,c_1,c_2)$. 
	\begin{itemize}
		\item If $\Gamma\vdash\phi= $ $\vdash\top, \vdash\bot, \vdash P(s_1,...,s_m), \vdash P(s_1,...,s_m), \vdash \neg P(s_1,...,s_m)$, or $\vdash\neg P(s_1,...,s_m)$, and given that $\{\Gamma\vdash\phi, c_1, c_2\}\ssucc^*\{c_1\}$ and $\{\Gamma\vdash\phi, c_1, c_2\}\ssucc^*\{c_2\}$, then $c\succ^*c'$ by the definition of recursive path ordering on $T(F)$;
		\item If $\Gamma\vdash\phi =$ $\vdash \phi_1\wedge\phi_2$, and given that $ \{\vdash\phi_1\wedge\phi_2, c_1, c_2\}\ssucc^*\{\vdash\phi_1, \textsf{cpt}(\vdash\phi_2, c_1, c_2), c_2\}$, we have $c\succ^*c'$ by the definition of the recursive path ordering on $T(F)$;
		\item If $\Gamma\vdash\phi =$ $\vdash \phi_1\vee\phi_2$, and given that $\{\vdash\phi_1\vee\phi_2, c_1, c_2\}\ssucc^*\{\vdash\phi_1, c_1, \textsf{cpt}(\vdash\phi_2, c_1, c_2)\}$, we have $c\succ^*c'$ by the definition of the recursive path ordering on $T(F)$;
		\item If $\Gamma\vdash\phi =$ $AX_x(\phi_1)(s)$, and given that $\{\Gamma\vdash AX_x(\phi_1)(s), c_1, c_2\} \ssucc^* \{\vdash(s_1/x)\phi_1, \textsf{cpt}(\vdash (s_2/x)\phi_1, \textsf{cpt}(...\textsf{cpt}(\vdash(s_m/x)\phi_1, c_1, c_2), ..., c_2), c_2)\}  $ where $\textsf{Next}(s)=\{s_1,...,s_m\}$, we have $c\succ^*c'$ by the definition of the recursive path ordering on $T(F)$;
		\item The $EX$ case is analogous to the $AX$ case;
		\item If $\Gamma\vdash\phi = \Gamma\vdash EG_x(\phi_1)(s)$, 
		\begin{itemize}
			\item when $EG_x(\phi_1)(s)\in \Gamma$, then analogous to the first case, $c\succ^*c'$;
			\item when $EG_x(\phi_1)(s)\not\in \Gamma$,  $\textsf{Next}(s)=\{s_1,...,s_m\}$, and given that $\{\Gamma\vdash EG_x(\phi_1)(s), c_1, c_2\} \ssucc^*\{\vdash(s/x)\phi_1, \textsf{cpt}(\Gamma'\vdash EG_x(\phi_1)(s_1), c_1, \textsf{cpt}(...\textsf{cpt}(\Gamma'\vdash EG_x(\phi_1)(s_m), c_1, c_2)...), c_2)\}  $, we have $c\succ^*c'$ by the definition of the recursive path ordering on $T(F)$;
		\end{itemize}
		\item The cases of $AF, AR$, and $EU$ are analogous to the $EG$ case.
	\end{itemize}
\end{proof}

The correctness of this algorithm is ensured by the proposition below.

\begin{proposition}[Correctness of the Proof Search Algorithm]

Given a formula $\phi$, 
$\textsf{cpt}(\vdash\phi, \mathfrak{t}, \mathfrak{f}) \rsa^* \mathfrak{t}$ iff $\vdash \phi$ is provable,
	\label{coro:correct}
\end{proposition}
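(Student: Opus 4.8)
The plan is to prove a statement generalized over arbitrary continuations and over sequents with a non-empty history, from which the proposition follows by instantiation at $\Gamma=\emptyset$, $c_1=\mathfrak{t}$, $c_2=\mathfrak{f}$. The generalization is forced on us because the rewrite rules are in continuation-passing style and thread a context $\Gamma$, so an induction on $\phi$ alone would not close. Before the induction I would record that the rewriting system is \emph{deterministic}: reductions occur only at the root of a \textsf{CPT}, every sequent shape is matched by exactly one rewrite schema, and the side conditions distinguishing alternatives ($\langle s_1,\dots,s_n\rangle\in P$ versus $\notin P$, and $\psi\in\Gamma$ versus $\psi\notin\Gamma$) are mutually exclusive. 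Combined with the Termination proposition, this shows every $\textsf{cpt}(\Gamma\vdash\phi,c_1,c_2)$ has a unique normal form, which is $\mathfrak{t}$ or $\mathfrak{f}$, so that ``$\rsa^*\mathfrak{t}$'' is unambiguous.

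The key lemma I would establish is: for all $c_1,c_2$ and every sequent $\Gamma\vdash\phi$, $\textsf{cpt}(\Gamma\vdash\phi,c_1,c_2)\rsa^*\mathfrak{t}$ iff either $\Gamma\vdash\phi$ is provable and $c_1\rsa^*\mathfrak{t}$, or $\Gamma\vdash\phi$ is not provable and $c_2\rsa^*\mathfrak{t}$. The notion of provability of $\Gamma\vdash\phi$ must match how the context is used operationally. For the co-inductive modalities $EG$ and $AR$ it is exactly provability in \sctlm{} with initial history $\Gamma$, since the proof system already contains the matching \textbf{merge} rules, so a revisit is an axiom, i.e. a \emph{success} (and indeed $\psi\in\Gamma$ rewrites to $c_1$). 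For the inductive modalities $AF$ and $EU$, whose \sctlm{} rules carry no context, the context is a purely operational cycle-detector that reports \emph{failure} on a revisit (then $\psi\in\Gamma$ rewrites to $c_2$); there the right reading of ``$\Gamma\vdash AF_x(\phi)(s)$ provable'' is ``$AF_x(\phi)(s)$ has an \sctlm{} proof no branch of which revisits, below its root, a sequent already recorded in $\Gamma$''.

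I would prove the lemma by well-founded induction on the termination measure $\langle|\phi|,\,n-|\Gamma|\rangle$ already used in the Termination proof, with a case analysis on the shape of $\phi$. The base cases $\top,\bot,P,\neg P$ are read off the corresponding rewrite step and the \textsf{atom}/$\neg$ rules. For $\wedge,\vee,AX,EX$ the rewrite unfolds the connective into a nested \textsf{CPT} realizing a conjunction (resp. disjunction) of the sub-obligations; applying the induction hypothesis to each sub-sequent, whose first measure component $|\phi|$ is strictly smaller, and using that the conclusion is provable iff the appropriate boolean combination of its premises is provable, one matches ``evaluates to the value of $c_1$'' with provability. For $EG,AR,AF,EU$ the revisit case is immediate from the merge rewrite step, and in the non-revisit case the same-modality sub-sequents keep $|\phi|$ fixed but carry the strictly longer history $\Gamma'=\Gamma,\psi$, so $n-|\Gamma'|<n-|\Gamma|$ and the induction hypothesis still applies; the nested-\textsf{CPT} structure again realizes the disjunction/conjunction demanded by the corresponding $R_1/R_2$ rules (while the sub-formula obligations such as $\vdash(s/x)\phi$ are resolved by the induction hypothesis at a strictly smaller $|\phi|$).

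Finally I would bridge back to plain \sctlm{}-provability. For the co-inductive modalities nothing extra is needed, since relative provability at $\Gamma=\emptyset$ is literally \sctlm{}-provability. For the inductive modalities I would prove that $AF_x(\phi)(s)$ (resp.\ $EU$) is \sctlm{}-provable iff it has a proof no branch of which revisits its own root sequent: the backward direction is trivial, and the forward direction uses a minimality/pruning argument — take a proof of least size and, if a sequent occurred twice on one branch, splice the inner subproof in place of the outer occurrence to obtain a strictly smaller proof, contradicting minimality. I expect this pruning lemma to be the main obstacle, since it is exactly where the operational cycle-detection (which answers ``not provable'' on a loop) must be reconciled with the context-free inductive rules of \sctlm{}; the asymmetry that a revisit rewrites to $\mathfrak{f}$ for $AF,EU$ but to $\mathfrak{t}$ for $EG,AR$ has to be tracked carefully throughout the whole case analysis.
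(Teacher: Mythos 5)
Your proposal is correct and shares the paper's overall skeleton: the paper likewise generalizes the statement to arbitrary (distinct) continuations $c_1,c_2$ and to sequents carrying a context $\Gamma$, and proceeds by a case analysis on $\phi$ driven by the rewrite rules, using the finiteness of the model to close the modal cases. Your formulation of the key lemma (in terms of which of $c_1$ or $c_2$ eventually reaches $\mathfrak{t}$, backed by an explicit determinism and unique-normal-form observation) is a cleaner rendering of the paper's ``$\textsf{cpt}(\Gamma\vdash\phi,c_1,c_2)\rsa^* c_1$ iff $\Gamma\vdash\phi$ is provable.'' Where you genuinely diverge is in the treatment of the inductive modalities. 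The paper freely writes sequents such as $\Gamma, AF_x(\psi)(s)\vdash AF_x(\psi)(s_j)$ and calls them provable even though the $\mathbf{AF}$ and $\mathbf{EU}$ rules of \sctlm{} carry no context; it then argues one direction semantically (an unprovable $AF$ sequent yields an infinite path on which $\psi$ is never provable, which the rewriting traces until, by finiteness, the revisit rule fires and hands control to $c_2$), and dispatches the converse with an ``easily proved'' induction on the proof tree --- an induction that silently requires the chosen proof tree not to revisit states recorded in $\Gamma$, since such a revisit makes the rewriting fail. Your explicit notion of relative provability for $AF$/$EU$ sequents, together with the pruning lemma (in a minimal proof, splice the inner occurrence of a repeated sequent over the outer one), is precisely the missing justification for that step, and it keeps the argument syntactic instead of detouring through validity and Theorem~\ref{theom:sound:complete}. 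The price is carrying the two-component measure $\langle|\phi|,\,n-|\Gamma|\rangle$ and the asymmetric reading of a revisit (success for $EG$/$AR$, failure for $AF$/$EU$) through every case --- bookkeeping the paper's ``induction on the structure of $\phi$'' leaves implicit.
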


\begin{proof}
	Induction on the structure of $\phi$. The details are presented in 
\ref{appendix:proof:corret:prfsearch}.
\end{proof}

Moreover, for a given \textsf{CPT}
$\textsf{cpt}(\vdash \phi, \mathfrak{t}, \mathfrak{f})$, when
we memorize the already visited states to avoid visiting them again
during the proof of each modality, each sub-formula of $\phi$ appears,
in the worst case, $|\cal {\cal M}|$ times in the root of all
\textsf{CPT}s in the rewriting steps, where $|\cal {\cal M}|$ is the
number of states in the Kripke model under consideration. Therefore,
the time complexity of our proof search algorithm is
$O(|\phi| \times |{\cal M}|)$, where $|\phi|$ is the size of the
formula $\phi$ to be proved and $|{\cal M}|$ that of the model.

The pseudo code of the proof search algorithm is depicted in Figure~\ref{algotm:verification}.

In addition, the rewriting steps are memorized to 
build a proof tree after the end of the proof search.

\begin{figure}[!h]
	\centering
	\scriptsize
	\noindent\framebox{\parbox{.98\textwidth}{\hspace*{-0.3cm}

			\textsf{Input:} An input file $f$\\
			\textsf{Output:} A boolean result $r$\\
			\textsf{Name: }main
			\begin{algorithmic}
				\STATE Parse the input file $f$, and obtain the Kripke model $\cal M$, and a formula $\phi$ in the system $SCTL({\cal M})$;
				\STATE let $c = \mathsf{cpt}(\vdash\phi,\mathfrak{t}, \mathfrak{f})$;
				\WHILE{c is of the form $cpt(\Gamma\vdash\psi, c_1, c_2)$}
				\STATE rewrite $c$ to $c'$ in one step;
				\STATE $c\leftarrow c'$;
				\ENDWHILE
				\RETURN $c$;

			\end{algorithmic}
	}}
	
	\caption{The main algorithm}
	\label{algotm:verification}
\end{figure}

\begin{example}
	The use of rules in Figure \ref{fig:rewriting} are illustrated in Figure \ref{fig:cpt:rewriting},
	on the proof of Example \ref{exp2}.
	
\begin{figure}
\scriptsize
\noindent\framebox{\parbox{.98\textwidth}{\hspace*{-0.1cm}

$
\Tree [.{$\vdash AF_x(P(x))(a)$} [.{$\mathfrak{t}$} ] [.{$\mathfrak{f}$} ] ]  
 \stackrel{1}\rsa
\Tree [.{$\vdash P(a)$} [.{$\mathfrak{t}$} ] [.{$\Gamma_1\vdash AF_x(P(x))(b)$} [.{$\Gamma_1\vdash AF_x(P(x))(c)$} [.{$\mathfrak{t}$} ] [.{$\mathfrak{f}$} ] ] [.{$\quad \quad \mathfrak{f}\quad$} ] ] ]    
\stackrel{2}\rsa
\Tree [.{$\Gamma_1\vdash AF_x(P(x))(b)$} [.{$\Gamma_1\vdash AF_x(P(x))(c)$} [.{$\mathfrak{t}$} ] [.{$\mathfrak{f}$} ] ] [.{$\quad \quad \mathfrak{f}\quad$} ] ]\\ 
\stackrel{3}\rsa
\Tree [.{$\vdash P(b)$} [.{$\Gamma_1\vdash AF_x(P(x))(c)$} [.{$\mathfrak{t}$} ] [.{$\mathfrak{f}$} ] ] [.{$\Gamma_2\vdash AF_x(P(x))(d)$} [.{$\Gamma_2\vdash AF_x(P(x))(c)$} [.{$\mathfrak{t}$} ] [.{$\mathfrak{f}$} ] ] [.{$\quad\quad\mathfrak{f}\quad$} ]  ] ]   \stackrel{4}\rsa
\Tree [.{$\Gamma_1\vdash AF_x(P(x))(c)$} [.{$\mathfrak{t}$} ] [.{$ \mathfrak{f}$} ] ] \\
\stackrel{5} \rsa
\Tree [.{$\vdash P(c)$} [.{$\mathfrak{t}$} ] [.{$\quad \quad \Gamma_3\vdash AF_x(P(x))(d)$} [.{$\mathfrak{t}$} ] [.{$\mathfrak{f}$} ] ] ] 
\stackrel{6}\rsa
\mathfrak{t}. \\
(\Gamma_1 = \{AF_x(P(x))(a)\}, \Gamma_2 = \Gamma_1 \cup \{AF_x(P(x))(b)\}, \Gamma_3 = \Gamma_1 \cup \{AF_x(P(x))(c)\}.)
$
 
}}
\caption{An illustration of \textsf{CPT} rewritings.}
\label{fig:cpt:rewriting}
\end{figure}

\hspace{-0.5cm}
{\it Step 1}. 
At this step, on the left side of $\stackrel{1}\rsa$, the root of the CPT is $\vdash AF_x(P(x))(a)$.
We need to decide whether $\vdash AF_x(P(x))(a)$ is provable, which is not known at that moment yet. 
So we have to decide first whether $P(a)$ is provable, 
and then both $AF_x(P(x))(a)\vdash AF_x(P(x))(b)$ and $AF_x(P(x))(a)\vdash AF_x(P(x))(c)$ are successively provable, 
corresponding  applying the $\mathbf{AF}$-{$\mathsf{R_1}$} rule and the $\mathbf{AF}$-{$\mathsf{R_2}$} rule, respectively. 
We encode those two steps in a single \textsf{CPT}, which is the one on the right side of $\stackrel{1}\rsa$.

\hspace{-0.5cm}
{\it Step 2}. 
Since the atomic formula $P(a)$ is not provable, 
the \textsf{CPT} on the left side of $\stackrel{2}\rsa$ reduces to its right subtree (f-continuation), 
which is the \textsf{CPT} on the right side of $\stackrel{2}\rsa$.
 
\hspace{-0.5cm}
{\it Step 3}. 
Like at step 1, we need to decide whether $AF_x(P(x))(a)\vdash AF_x(P(x))(b)$ is provable, which is not known at that moment yet.
So we encode the left subtree (t-continuation) of the \textsf{CPT} which is on the left side of $\stackrel{3}\rsa$, 
and, by the $\mathbf{AF}$-{$\mathsf{R_1}$} rule and the $\mathbf{AF}$-{$\mathsf{R_2}$} rule,  
the two steps to find successively the proofs of $\vdash P(b)$ and of $AF_x(P(x))(a), AF_x(P(x))(b)\vdash AF_x(P(x))(d)$ into the \textsf{CPT} 
which is on the right side of $\stackrel{3}\rsa$.

\hspace{-0.5cm}
{\it Step 4}. 
Like at step 2, we can judge the atomic formula $P(b)$ is provable immediately.
So the \textsf{CPT} on the left side of $\stackrel{4}\rsa$ reduces to its left subtree (t-continuation) 
which is on the right side of $\stackrel{4}\rsa$.

\hspace{-0.5cm}
{\it Step 5}. 
Like at step 1 and 3, we can not judge whether the sequent $AF_x(P(x))(a)\vdash AF_x(P(x))(c)$ is provable immediately, 
so we encode the two steps to find successively the proofs of $\vdash P(c)$ and $AF_x(P(x))(a), AF_x(P(x))(b)\vdash AF_x(P(x))(d)$
into the \textsf{CPT} which is on the right side of $\stackrel{5}\rsa$;

\hspace{-0.5cm}
{\it Step 6}. 
Like at step 2 and 4, as the atomic formula $P(c)$ is provable, 
so the \textsf{CPT} on the left side of $\stackrel{6}\rsa$ reduces to its left subtree (t-continuation) which is $\mathfrak{t}$,
Now, the proof search of $\vdash AF_x(P(x))(a)$ terminates, and we can judge that this sequent is provable.
\end{example}

\subsubsection{Memorization}

In the proof search of sequents with co-inductive formulae (formulae with modality $EG$ or
$AR$), the merge rules are used to assert that some property holds on
an infinite path of states. For every merge rule, the formulae need to be
memorized are with the same modality, whereas the only differences are
the states appearing in the formulae. Thus, it is sufficient to
memorize only the states, not the whole formulae, in the
implementation of every merge (i.e., $\Gamma$). Essentially, each
construction of a merge is implemented by memorizing an infinite path
where all states verify some property. 

What is worth mentioning is the proof search of sequents with
inductive formulae (formulae with modality $AF$ or $EU$). Although
there are no merge rules for the proof of this kind of sequents,
merges are also helpful to avoid infinite proof search, when the
formula is not provable, that is when its (co-inductive) negation
is. For instance, for the proof search of the sequent
$\vdash EU_{x,y}(\phi_1,\phi_2)(s)$, we need to find a finite path
where in the last states $\phi_2$ holds, and in all other states
$\phi_1$ holds. Although we are not finding infinite paths, we still
need to avoid our proof search falling into an infinite path. Thus, as
an optimization in the proof search of inductive formulae, we also
keep merges in the rewriting rules.  Note that merges are not
reflected in the proof rules for the $AF$ and $EU$ cases. The reason
is that, in the proof rules, we only care about the shape of the proof
tree, not how the proof tree is constructed. It is only in the
construction of proof trees where merges for $AF$ and $EU$ are
mentioned.

As another optimization of the proof search algorithm, we use a global memory to remember, for each sub-formula $\phi$, 
the states visited during the proof search of this formula, and avoid visiting states that are already in this memory. 
This memory can either be a hash table or a \BDD{}, each having 
advantages and disadvantages. 
This memory helps to avoid constructing the same merges
repeatedly. This optimization does not break the correctness
property of the proof search algorithm, as we are only omitting
repeatedly rewriting steps on \textsf{CPT}s.

\couic{
\subsection{Relations with some model checking techniques}
In this section, we discuss the relations of our implementation of \SCTL{} with some others \CTL{} model checking techniques.

\subsubsection{On-the-fly model checking}

In \sctl{}, the searching for a proof of a formula is bottom-up, to
avoid redundancies, which mimics a on-the-fly model checking
algorithm.  Like traditional on-the-fly model checking algorithms for
\CTL{} \cite{VergauwenL93,BCG95}, the whole state space need not to be
generated. Instead, the transition relation is unfolded in an
on-the-fly style. Traditional on-the-fly \CTL{} model checking
algorithms are usually recursion-based, i.e., prove sub-formulae and
search state space recursively. This involves a lot of stack
operations when verifying properties on large Kripke models, which
will consume much time during the verification processes.  Unlike
traditional on-the-fly model checkers that are usually recursive
based, we reformulate our proof search algorithm in an
continuation-passing style, which saves a lot of stack operations.  In
the 
programming language theory, a {\em continuation} is an explicit
representation of the {\em the rest of the computation}.  A function
is said in continuation-passing style (\textsf{CPS}), if it takes an
extra argument, the continuation, which decides what will happen to
the result of the function.  This method, usually used in compiling
and programming, can help, among others, to reduce greatly the stack
size \cite{Reynolds93,Appel06,Sestoft12}.

When implementing traditional on-the-fly \CTL{} model checking algorithms, some stack operations can be reduced by eliminating tail calls in the program, by either the compiler or the programmer. Tail-call elimination can help reduce a number of (but not all of) stack operations at runtime by eliminating the last function call in each function body. However, this is not enough, because there still exist a lot of nested recursive calls which cannot be eliminated by the compiler or the programmer.
In fact, most of the stack operations in the above algorithms can be avoided if we write the verification (proof search) algorithm in continuation-passing style. There are two stages when writing our proof search algorithm in continuation-passing style: first, define a intermediate data structure (continuation-passing tree, or \textsf{CPT} for short) representing both the current formula under proof and the formulae to be proved in the future; second, define a set of rewrite rules over \textsf{CPT}s. The proof search procedure in \sctl{} is repeatedly rewrite \textsf{CPT}s until it terminates.

We would like to compare our
algorithm to those given in \cite{VergauwenL93} and \cite{BCG95}, however,
as far as we know, there are no tools based on these algorithms that
can fully solve \CTL{} model checking problems. To show that using
continuation-passing style is not a trivial improvement, we design a
recursion variant of \sctl{}, called \sctlprovr{}\footnote{\url{https://github.com/terminatorlxj/SCTLProV_R}}.  The difference between \sctlprovr{} and \sctl{} is
that, instead of using continuations, \sctlprovr{} uses
recursion calls to prove sub-formulae and search states. 
In this paper, we also
evaluate \sctlprovr{}, and
compare the experimental results of \sctl{} and
\sctlprovr{}.

\subsubsection{\BDD{}-based symbolic model checking}
The on-the-fly style of searching state space helps avoid exploring unneeded states. 
This is different from \BDD{}-based symbolic model checker \nusmv{} or \nuxmv{}.
For instance, consider a Kripke model with the initial state $s_0$ and transition relation $T$.  
To verify whether ${\cal M},s_0\models EF\phi$ holds or not in \nusmv{}, 
first one needs to calculate a least fixed point
$\textup{lfp}=\mu Y. (\phi\vee EXY)$, 
then check whether $s_0\in \textup{lfp}$
\cite{mcmillan93,CimattiCGR99}. 
Calculating $\textup{lfp}$ corresponds to the unfolding of the relation $T$,
where states that are not reachable from $s_0$ may be involved. 
Unlike in \nusmv{}, there is no need for \sctl{} to calculate a fixed point of the transition relation. 
Instead, unfolding of the transition relation stops as soon as the given property is proved or its negation is proved. 

Similar to traditional \CTL{} symbolic model checkers such as \nusmv{}, \sctl{} can also use \BDD{}s to memorize visited states, in order to reduce space occupation during the verification procedure. However, unlike \nusmv{} that translate models and properties into \BDD{}s before state search, \sctl{} search states directly on the Kripke model, only use \BDD{}s to memorize the visited states after state search. Both approaches involve the translating of non-boolean state variables into boolean ones, which can enlarge the number of state variables in a given Kripke model. 
When a Kripke model contains mostly boolean variables, for instance in hardware model checking, memorizing states using \BDD{}s is an effective way to reduce space occupation. However, the space can explode when the Kripke model contains many non-boolean variables, for instance in software model checking, and the states are better remembered directly. 
\sctl{} can memorize visited states either using \BDD{}s 
when the model contains mostly boolean variables, or directly when 
there are many non-boolean variables in the model. 

\subsubsection{Bounded model checking}
Formulae in \sctl{} are also unfolded on-the-fly.
This is unlike \BMC{} tools, where
the temporal formulae to be proved will be unfolded on a set of
traces with limited length once for all, one adopts the way of
unfolding the formulae in a partial lazy way.
For example, in the bounded model checking problem on proving ${\cal M}, s_0 \models_{k+1} EF\phi$ holds or not, the formula $EF\phi$ will be unfolded on a trace of length $k+1$ immediately, which means, the tool need to deal with the bulky formula \cite{BCCZ99}:

\begin{scriptsize}
	$$ [{\cal M},EF\phi]_{k+1} := \bigwedge^{k-1}_{i=0}T(s_i,s_{i+1}) \wedge \bigvee_{j=0}^k\phi(s_j)$$
\end{scriptsize}

To avoid exploring unnecessary states in ${\cal M}$, 
\sctl{} unfolds on demand the transition relation $T$.
Thus, to verify $\vdash EF_x(\phi)(s_0)$, 
\sctl{} unfolds the transition relation $T$ and the formula $EF_x(\phi)(s_0)$ as
\begin{center}{\scriptsize
		$\begin{array}{l}
		\textsf{unfold}(S,EF_x(\phi)(s_i)) := \\
		\phi(s_i) \vee ((s_i\notin S) \wedge T(s_i, s_{i+1}) \wedge 
		\textsf{unfold}(S\cup \{s_i\},EF_x(\phi)(s_{i+1})))
		\end{array}$
	}
\end{center}
where $S$ is a set representing the visited states during the proof search, 
which is in fact our implementation of the \textsf{merge} rule.
Unfolding of the formula is applied only when the relation $T$ is unfolded. 
}


\subsection{Relations with some model checking techniques}
In this section, we discuss the relations of the techniques adopted in \sctl{} with those in some other \CTL{} model checking approaches.

\subsubsection{\BDD{}-based symbolic model checking}
When a Kripke model contains mostly boolean variables, for instance in model checking for hardware problems, 
using \BDD{}s to memorize states is an effective way to reduce space during verification procedure.	
The best known \BDD{}-based symbolic model checker is \nusmv{} \cite{mcmillan93,CimattiCGR99}, 
and its extension \nuxmv{} \cite{CAVCDGMMMRT14}.
To illustrate verification procedure in a \BDD{}-based symbolic model checker, 
let us consider, for instance, a Kripke model with the initial state $s_0$ and a transition relation $T$.  
To check whether ${\cal M},s_0\models EF\phi$ holds in such a model checker, say \nusmv, 
first one needs to calculate the least fixed point
$\textup{lfp}=\mu Y. (\phi\vee EXY)$, 
then check whether $s_0\in \textup{lfp}$ \cite{mcmillan93,CimattiCGR99}. 
Calculating the $\textup{lfp}$ corresponds to unfolding the transition relation $T$,
where states that are not reachable from $s_0$ may be involved. 
	
The verification procedure in \sctl{} differs from traditional \CTL{} symbolic model checkers.
For instance, unlike in \nusmv, there is no need for \sctl{} to calculate a fixed point of the transition relation. 
Instead, unfolding of the transition relation stops as soon as the given property is proved or its negation is proved. 
Moreover, 
\sctl{} can memorize visited states either directly when there are many non-boolean variables in the model,
or using \BDD{}s when the model contains mostly boolean variables.
In the latter case, unlike \nusmv{} that encodes models and properties into \BDD{}s before searching state space, 
\sctl{} searches states directly on the Kripke model under consideration, 
using \BDD{}s to memorize the visited states only. 

\subsubsection{On-the-fly model checking}
The on-the-fly style of searching state space helps avoid exploring unneeded states. 	
Indeed, in on-the-fly model checking, 
usually, there is no need to generate the full state space.
Traditional  on-the-fly \CTL{} model checking algorithms \cite{VergauwenL93,BCG95} are usually recursive, i.e., the unfolding of the formula and the transition rules are preformed recursively. 
These recursive based algorithms usually involves a lot of stack operations when verifying properties over big size Kripke models. These stack operations may consume much time during the verification processes.

In \sctl{}, the proof search of a formula mimics a double on-the-fly style model checking, that is,
unfolding on demand both transition relations and the formulae.
However, unlike traditional on-the-fly model checking algorithms, our algorithm is in continuation-passing style, which contains only constant stack operations \cite{Sestoft12}. In the programming language theory, 
a {\em continuation} is an explicit representation of the {\em the rest of the computation}.  
A function is said in continuation-passing style (\textsf{CPS}), 
if it takes an extra argument, the continuation, 
which decides what will happen to the result of the function.  
This method, usually used in compiling and programming, can help, among others, 
to reduce considerably the size of the stacks \cite{Reynolds93,Appel06,Sestoft12}.
	
We would like to compare our algorithm in \sctl{}
to those given in \cite{VergauwenL93} and \cite{BCG95}, respectively.
However, as far as we know, there are no tools based on these algorithms
that can fully solve \CTL{} model checking problems. 
To show that using continuation-passing style is not a trivial improvement, 
we designed therefore a recursion variant of \sctl{}, 
called \sctlprovr{}\footnote{\url{https://github.com/terminatorlxj/SCTLProV_R}}.  
The difference between \sctlprovr{} and \sctl{} is that, instead of using continuations, 
\sctlprovr{} uses recursion calls to prove sub-formulae and search state space. 
We will compare the experimental results of \sctl{} and \sctlprovr{} in Section \ref{subsec:random}.
	
\subsubsection{Bounded model checking}
For traditional \textsf{BMC} tools, 
where the temporal formulae under proving are unfolded on a set of
traces with limited length once for all.
For example, in model checking ${\cal M}, s_0 \models_{k+1} EF\phi$,
one unfolding step of the $EF$ formula involves $k+1$ unfolding steps of the transition relation $T$, 
that is, \BMC{} tools need to deal with the bulky formula \cite{BCCZ99}:
	
	\begin{scriptsize}
		$$ [{\cal M},EF\phi]_{k+1} := \bigwedge^{k-1}_{i=0}T(s_i,s_{i+1}) \wedge \bigvee_{j=0}^k\phi(s_j)$$
	\end{scriptsize}
	
To avoid exploring unnecessary states in $M$,
\sctl{} unfolds on demand the transition relation $T$.	
Thus, in \SCTL{},  one unfolding step of a formula involves at most one unfolding step of the transition relation.
In fact, to verify $\vdash EF_x(\phi)(s_0)$, 
\sctl{} unfolds the transition relation $T$ and the formula $EF_x(\phi)(s_0)$ as
	\begin{center}{\scriptsize
			$\begin{array}{l}
			\textsf{unfold}(S,EF_x(\phi)(s_i)) := \\
			\phi(s_i) \vee ((s_i\notin S) \wedge T(s_i, s_{i+1}) \wedge 
			\textsf{unfold}(S\cup \{s_i\},EF_x(\phi)(s_{i+1})))
			\end{array}$
		}
	\end{center}
	where $S$ is a set representing the visited states during the proof search, 
	which is in fact our implementation of the \textsf{merge} rule of Figure \ref{sctl_rules}.

\section{\SCTL{} with fairness constraints}\label{sect:fairness}

Fairness is an important aspect in verifying concurrent
systems. Fairness assumptions often rule out unrealistic behaviors,
and are often necessary to establish liveness properties
\cite{BaierKatoen08}. For instance, in a mutual exclusion algorithm of
two processes, we usually need to consider a fair scheduling of the
execution of the processes, i.e., no process waits infinitely
long. Such fairness constraints can also be defined in \SCTL{}. Our
definition of fairness coincides with that in \cite{mcmillan93}, i.e.,
the path quantifiers apply to those paths along which each formula in
a set $C$ holds infinitely often. For instance, $E_Cf$ means that
there exists a path such that each formula in $C$ is true infinitely
often and $f$ is true in this path.

We define the fairness constraint $C$ as a set of \SCTL{} formulae; an
infinite path is fair under fairness constraint $C$ if and only if for
each \SCTL{} formula $\phi\in C$, $\phi$ is valid infinitely often on
this path.  Formula $E_CG_x(\phi)(t)$ is valid if and only if there
exists an infinite path, fair under $C$, starting from state $t$ such
that for all state $s$ in this path, $(s/x)\phi$ is valid. Similarly,
formula $A_CF_x(\phi)(t)$ is valid if and only if for each infinite
path, fair under $C$, starting from state $t$ such that there exists a
state $s$ on this path and $(s/x)\phi$ is valid.

Similar to \cite{mcmillan93}, other \SCTL{} formulae with fairness constraints can be characterized in terms of $E_CG$ formulae and $A_CF$ formulae:
\begin{small}
	$$E_CX_x(\phi)(t) = EX_x(\phi \wedge E_CG_x(\top)(x))(t)$$
	$$A_CX_x(\phi)(t) = AX_x(\phi \vee A_CF_x(\bot)(x))(t)$$
	$$E_CU_{x,y}(\phi_1,\phi_2)(t) = EU_{x,y}(\phi_1, \phi_2\wedge E_CG_z(\top)(y))(t)$$
	$$A_CR_{x,y}(\phi_1,\phi_2)(t) = AR_{x,y}(\phi_1, \phi_2\vee A_CF_z(\bot)(y))(t)$$
\end{small}

Given that \SCTL{} is sound and complete, to prove $E_CG_x(\phi)(t)$ is equivalent to prove $EG_x(\phi)(t)$ where only fair paths are considered, i.e., to prove the existence of a fair path on which $\phi$ is always provable. Similarly, to prove $A_CF_x(\phi)(t)$ is equivalent to prove $AF_x(\phi)(t)$ where only fair paths are considered, i.e., to prove the absence of a fair path on which $\phi$ is always not provable. Thus, to prove \SCTL{} formulae with fairness constraints, we need a mechanism to decide the existence of fair paths.

According to Proposition~\ref{prop:fair_if} and Proposition~\ref{prop:fair_fi} shown below, we can decide the existence of a fair path in finite steps, which is exactly the purpose of our merges. To be more precise, when the merge rule is applied, we check the fairness of the path constructed and discard those that are not fair: i.e., we only consider merges where each formula in $C$ is provable in some state of a loop.

\begin{proposition}\label{prop:fair_if}
	For a set $C$ of \SCTL{} formulae and an infinite sequence of states $\sigma = s_0,s_1,...$ such that for all $i, s_i \rightarrow s_{i+1}$, if each element of $C$ is valid infinitely often in $\sigma$, then there exists a finite sequence of states $\sigma_f = s'_0,s'_1,...,s'_n$ such that for all $0\le j\le n-1$, $s'_j\rightarrow s'_{j+1}$, and there exists $0\le p\le n-1$ such that $s'_n = s'_p$, all the $s'_j$ are among $\sigma$, and for each element $f \in C$, $f$ is valid in some state $s'_q$, where $p\le q\le n$ . 
\end{proposition}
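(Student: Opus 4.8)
The plan is to combine two finiteness facts: the finiteness of the state space $S$, which forces the infinite path $\sigma$ to revisit some state and hence close a loop, and the (implicit) finiteness of $C$, which lets us guarantee that such a loop can be chosen so as to contain a validating state for every constraint simultaneously. A naive application of the pigeonhole principle to $\sigma$ would already produce a repeated state, and therefore a lasso, but the resulting loop need not contain a validating state for each $f \in C$; the real work is to arrange the loop so that it does.

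First I would build a sequence of \emph{blocks} along $\sigma$. Writing $C = \{f_1,\dots,f_k\}$ (finite), I set $n_0 = 0$ and, given $n_m$, for each $j$ let $a_j$ be the least index $\ge n_m$ at which $f_j$ is valid; this exists because $f_j$ is valid infinitely often. Putting $n_{m+1} = 1 + \max_j a_j$ yields a strictly increasing sequence $n_0 < n_1 < \cdots$ such that every interval $[n_m, n_{m+1})$ contains, for each $f_j \in C$, at least one index $a_j$ where $f_j$ is valid at $s_{a_j}$. Thus each block is self-contained: a single block already witnesses the validity of the whole of $C$.

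Next I would apply the pigeonhole principle to the infinitely many block-boundary states $s_{n_0}, s_{n_1}, \dots$, which all lie in the finite set $S$; hence there exist $m < m'$ with $s_{n_m} = s_{n_{m'}}$. I then take $\sigma_f = s_{n_m}, s_{n_m+1}, \dots, s_{n_{m'}}$, which is a genuine path since consecutive states are related by $\rightarrow$, relabel it as $s'_0,\dots,s'_n$ with $n = n_{m'} - n_m \ge 1$, and put $p = 0$. By construction all the $s'_j$ are among $\sigma$, and $s'_n = s_{n_{m'}} = s_{n_m} = s'_p$, which closes the loop. The witness condition then follows from the very first block $[n_m, n_{m+1}) \subseteq [n_m, n_{m'})$: for each $f_j \in C$ its validating index $a_j$ lies in this block, hence corresponds to some $s'_q$ with $p = 0 \le q \le n$, exactly as required.

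The main obstacle, and the reason the straightforward pigeonhole argument is insufficient, is precisely this simultaneity requirement: the extracted cycle must validate \emph{every} formula of $C$ within one pass around the loop. The block construction is what resolves it, and it is exactly the point where finiteness of $C$ is used (through the $\max_j$). I would make this hypothesis explicit, since for an infinite constraint set the definition of $n_{m+1}$ would break down; for fairness constraints, however, $C$ is always finite, so the argument goes through.
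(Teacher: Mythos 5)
Your proof is correct, but it follows a genuinely different route from the paper's. The paper argues on \emph{recurrent states}: since the state space is finite, each $f\in C$, being valid infinitely often along $\sigma$, must be valid at some state that occurs infinitely often in $\sigma$; collecting these recurrent validators into a finite set $S=\{s_{i_1},\dots,s_{i_k}\}$ (ordered by first occurrence), the loop is obtained by starting at $s_{i_1}$, running past occurrences of all the other elements of $S$, and closing back at a later recurrence of $s_{i_1}$ after index $i_k$. You instead decompose the index sequence into self-contained blocks, each of which already contains a validating position for every $f_j\in C$, and then apply the pigeonhole principle to the block boundaries $s_{n_0},s_{n_1},\dots$ to close the lasso. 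The trade-off is as follows. Your block construction makes the simultaneity requirement completely mechanical and does not need the validating states themselves to recur --- only the boundaries must collide, which is automatic; it also sidesteps the slightly awkward contrapositive argument the paper uses to justify the existence of $S$. On the other hand, your argument genuinely relies on the finiteness of $C$ (through $\max_j a_j$), as you correctly flag, whereas the paper's argument works even for an infinite constraint set, since the set of recurrent states is finite regardless of the cardinality of $C$. As fairness constraints are finite in practice, this is immaterial here, and both proofs establish the proposition.
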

\begin{proof}
	As the number of states is finite, there exists a set of states $S$, such that each state $s\in S$ appears infinitely often in $\sigma$, and each formula $f\in C$ is valid in some state in $S$. Otherwise, if for each set $S'$ of states that occur infinitely often in $\sigma$, there exists some formula $f\in C$ such that $f$ is not valid in any element of $S'$, then $f$ is not valid in any state of $\sigma$ that occur infinitely often, and thus $f$ is not valid infinitely often. Assume $S = \{s_{i_1},s_{i_2},...,s_{i_k}\}$ such that $i_1\le i_2\le ...\le i_k$, then let $s'_p = s_{i_1}$, and $s'_n = s_{i_{k'}}$ such that $i_{k'} \ge i_k$ and $s_{i_{k'}} = s_{i_1}$.
\end{proof}

\begin{proposition}\label{prop:fair_fi}
	For a set $C$ of \SCTL{} formulae and a finite sequence of states $\sigma_f = s_0,s_1,...,s_n$ such that for all $0\le i\le n-1$, $s_i\rightarrow s_{i+1}$, there exists $0\le p\le n$, $s_p = s_n$, and every formula in $C$ is valid in some states between $s_p$ and $s_n$ in the sequence, then there exists an infinite sequence of states $\sigma = s'_0, s'_1,...$ such that for all $i$, $s'_i \rightarrow s'_{i+1}$, all the $s'_j$ are among $s_0,s_1,...,s_n$, and every formula in $C$ is valid infinitely often in the infinite sequence.
\end{proposition}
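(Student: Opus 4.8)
The plan is to build the required infinite path by unrolling the lasso described by $\sigma_f$: a stem followed by a loop traversed forever. Since $s_p = s_n$, the segment $s_p, s_{p+1}, \ldots, s_{n-1}$ is a cycle of length $L = n - p$ that closes up through the transition $s_{n-1} \rightarrow s_n = s_p$. I would therefore define $\sigma = s'_0, s'_1, \ldots$ by keeping the stem unchanged, $s'_i = s_i$ for $0 \le i \le p-1$, and repeating the cycle afterwards: for $i \ge p$, set $s'_i = s_{p + ((i-p)\bmod L)}$. This is patently an infinite sequence of states all of which occur among $s_0, \ldots, s_n$, so the ``all the $s'_j$ are among $s_0,\ldots,s_n$'' clause of the conclusion is immediate.

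I would then verify the remaining two clauses. For the transition clause, that $s'_i \rightarrow s'_{i+1}$ holds for every $i$, all steps inside the stem or strictly inside a single pass of the cycle reduce directly to the corresponding transition of $\sigma_f$; the only step that uses the hypothesis is the wrap-around from $s_{n-1}$ back to $s_p$, which is justified precisely by $s_{n-1} \rightarrow s_n = s_p$. For the fairness clause, I would argue as follows: given $f \in C$, the hypothesis supplies a state $s_q$ with $p \le q \le n$ at which $f$ holds; using $s_n = s_p$, this state coincides with some $s_{p+r}$ with $0 \le r \le L-1$, i.e., a state lying on the cycle. By construction $s_{p+r}$ reappears as $s'_i$ for every index $i$ with $(i-p)\bmod L = r$, and there are infinitely many such $i$; hence $f$ is valid infinitely often along $\sigma$.

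I do not anticipate a genuine obstacle, as this statement is the converse of Proposition~\ref{prop:fair_if} and its content is simply that a finite lasso witness unrolls into an eventually periodic infinite witness. The one place that requires care is the modular reindexing at the join of the cycle: one must check that consecutive indices map to consecutive states of $\sigma_f$ in every case except the single wrap-around point, where the identification $s_p = s_n$ is invoked. Here I tacitly take the loop length $L = n - p$ to be positive, which is the meaningful case; the degenerate $p = n$ leaves no loop to traverse. Everything else is a routine check of the three clauses.
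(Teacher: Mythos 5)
Your construction is exactly the one the paper uses: its entire proof is the one-line observation that the sequence $s_0,\ldots,s_{p-1},s_p,\ldots,s_{n-1},s_p,\ldots,s_{n-1},\ldots$ (stem followed by the cycle repeated forever) has the required properties, which is precisely your modular unrolling. Your version merely spells out the routine verification of the three clauses (and flags the degenerate $p=n$ case, which the paper leaves implicit), so it is correct and takes essentially the same approach.
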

\begin{proof}
The sequence $\sigma = s_0,...,s_{p-1},s_p,...,s_{n-1},...$ 
verifies the properties above.
\end{proof}

\section{Example and Experimental Evaluation}\label{experiment}
To illustrate the feasibility and the efficiency of \sctl{}, we first use an example (Subsection~\ref{subsec:example}) to show an application of \sctl{}, and then evaluate several benchmarks (benchmark \#1, \#2 and \#3 in Subsection~\ref{subsec:random}, and benchmark \#4 in Subsection~\ref{subsec:fair}) to show the efficiency of \sctl{}, and compare the experimental results with four other verification tools: the Resolution-based
theorem prover \tool{iProver Modulo} \cite{Burel11}, the \QBF{}-based
bounded model checker \verds{} version 1.49, the
\BDD-based unbounded model checker \nusmv{} version 2.6.0 and
its extension \nuxmv{} version 1.0.0. 
All examples and benchmarks are tested on a Linux
platform with 3.0 GB memory and a 2.93GHz $\times$ 4 CPU, and the time limit is 20 minutes.

All benchmarks used in this paper are available online\footnote{\url{https://github.com/terminatorlxj/ctl_benchmarks}}.

\subsection{An illustrative example}\label{subsec:example}
\begin{example}[A Mutual Exclusion Problem]

  This example is a mutual exclusion algorithm of two concurrent
  processes (process $A$ and process $B$) described in
  \cite{Peterson81}. \textit{Mutual Exclusion} means that both two
  processes can not enter the critical section at the same time. This
  problem is addressed in several model
  checkers. In our formulation of this problem, a shared variable
  $mutex$ is used to remember the number of processes that have
  entered the critical section. A violation of \textit{Mutual
    Exclusion} means that in some state of the program, the value of
  the shared variable $mutex$ is $2$.
\begin{figure}[h]
\centering
\scriptsize
\begin{verbatim}
Model mutual()
{
  Var {
     flag : Bool; mutex : (0 .. 2); 
     a : (1 .. 6); b : (1 .. 6);
  }
  Init {
     flag := false; mutex := 0; a := 1; b := 1;
  }
  Transition {
     a = 1 && flag = false : {a := 2;};
     a = 2 : {a := 3; flag := true;};
     /*A has entered the critical section*/
     a = 3 : {a := 4; mutex := mutex + 1;};
     /*A has left the critical section*/ 
     a = 4 : {a := 5; mutex := mutex - 1;};
     a = 5 : {a := 6;};
     b = 1 && flag = false : {b := 2;};
     b = 2 : {b := 3; flag := true;};
     /*B has entered the critical section*/
     b = 3 : {b := 4; mutex := mutex + 1;};
     /*B has left the critical section*/ 
     b = 4 : {b := 5; mutex := mutex - 1;};
     b = 5 : {b := 6;};
  }
  Atomic {bug(s) := s(mutex) = 2;}
  Spec{find_bug := EU(x, y, TRUE, bug(y), ini);}
}
\end{verbatim}
\caption{The input file ``mutual.model".}
\label{fig:mutual}
\end{figure}

In the input file (Figure \ref{fig:mutual}), variable $flag$ is a signal indicating whether there exists a process is running; Variables $a$ and $b$ indicate the program counters of the two processes, respectively. The property to be checked is that whether both processes are in the critical section at the the same time. We check this property in \sctl{} using the following command:
\begin{center}
\scriptsize
\begin{verbatim}
sctl -output output.out mutual.model
\end{verbatim}
\end{center}

The result is as follows, which indicates that there is a bug in the mutual exclusion problem, i.e., the mutual exclusion property is violated.
\begin{center}
\scriptsize
\begin{verbatim}
verifying on the model mutual...
find_bug: EU(x,y, TRUE, bug(y), ini)
find_bug is true.
\end{verbatim}
\end{center}

The proof tree of the property is output to the file ``output.out".
\begin{center}
\scriptsize
\begin{verbatim}
0: |- EU(x,y,TRUE,bug(y),{flag:=false;mutex:=0;a:=1;b:=1})	[4, 1]
4: {flag:=false;mutex:=0;a:=1;b:=1}
|- EU(x,y,TRUE,bug(y),{flag:=false;mutex:=0;a:=2;b:=1})	[7, 5]
1: |- TRUE	[]
7: {flag:=false;mutex:=0;a:=1;b:=1} 
   {flag:=false;mutex:=0;a:=2;b:=1}
|- EU(x,y,TRUE,bug(y),{flag:=false;mutex:=0;a:=2;b:=2})	[23, 20]
5: |- TRUE	[]
23:{flag:=false;mutex:=0;a:=1;b:=1} 
   {flag:=false;mutex:=0;a:=2;b:=1} 
   {flag:=false;mutex:=0;a:=2;b:=2}
|- EU(x,y,TRUE,bug(y),{flag:=true;mutex:=0;a:=3;b:=2})	[27, 24]
20: |- TRUE	[]
27:{flag:=false;mutex:=0;a:=1;b:=1} 
   {flag:=false;mutex:=0;a:=2;b:=1} 
   {flag:=false;mutex:=0;a:=2;b:=2} 
   {flag:=true;mutex:=0;a:=3;b:=2}
|- EU(x,y,TRUE,bug(y),{flag:=true;mutex:=1;a:=4;b:=2})	[31, 28]
24: |- TRUE	[]
31:{flag:=false;mutex:=0;a:=1;b:=1} 
   {flag:=false;mutex:=0;a:=2;b:=1} 
   {flag:=false;mutex:=0;a:=2;b:=2} 
   {flag:=true;mutex:=0;a:=3;b:=2} 
   {flag:=true;mutex:=1;a:=4;b:=2}
|- EU(x,y,TRUE,bug(y),{flag:=true;mutex:=1;a:=4;b:=3})	[35, 32]
28: |- TRUE	[]
35:{flag:=false;mutex:=0;a:=1;b:=1} 
   {flag:=false;mutex:=0;a:=2;b:=1} 
   {flag:=false;mutex:=0;a:=2;b:=2} 
   {flag:=true;mutex:=0;a:=3;b:=2} 
   {flag:=true;mutex:=1;a:=4;b:=2} 
   {flag:=true;mutex:=1;a:=4;b:=3}
|- EU(x,y,TRUE,bug(y),{flag:=true;mutex:=2;a:=4;b:=4})	[37]
32: |- TRUE	[]
37: |- bug({flag:=true;mutex:=2;a:=4;b:=4})	[]
\end{verbatim}
\end{center}
According to the output above, we can find that after process $A$ have entered the critical section, process $B$ can also enter the critical section.

\end{example}

\subsection{Randomly generated programs}\label{subsec:random}
We consider three benchmarks in this part. 
The original description of benchmark \#1 is in \cite{Zhang14} and also restated here.
Based on benchmark \#1, we extend the number of variables to tens, hundreds, and even thousands in benchmark \#2 and benchmark \#3.
The randomness of the test cases in three benchmarks makes it rather fair for different \CTL{} model checking approaches, and helps us recognize the strengths and weaknesses of each tool. 

\subsubsection{Benchmark \#1}
Benchmark \#1 chosen in this subsection is originally introduced by Zhang \cite{Zhang14} in the evaluation of model checkers \verds{} and \nusmv{}. Later, Ji \cite{Ji15} also uses this benchmark in the evaluation of the theorem prover \tool{iProver Modulo} and the model checker \verds{}. This benchmark consists of 2880 randomly generated test cases where two types of random Boolean programs are considered---Concurrent Processes and Concurrent Sequential Processes. 
In programs with Concurrent Processes,
the parameters of the first set of random Boolean programs are as
follows.
\begin{center}
	\begin{tabular}{|l|}
		\hline
		$a$: number of processes \\
		$b$: number of all variables \\
		$c$: number of shared variables \\
		$d$: number of local variables in a process \\
		\hline
	\end{tabular}
\end{center}
The shared variables are initially set to a random value in $\{0,1\}$,
and the local variables are initially set to $0$. For each process,
the shared variables and the local variables are assigned the negation
of a variable randomly chosen from these variables. We test different
sizes of the programs with 3 processes ($a=3$), and let $b$ vary over
the set of values $\{12,24,36\}$, then set $c=b/2, d=c/a$. Each of the
24 properties is tested on 20 test cases for each value of $b$.

In programs with Concurrent Sequential Processes,
in addition to $a,b,c,d$ specified above, the parameters of the second set of random Boolean programs are as
follows.
\begin{center}
	\begin{tabular}{|l|}
		\hline
		$t$: number of transitions in a process \\
		$p$: number of parallel assignments in each transition \\
		\hline
	\end{tabular}
\end{center}
For each concurrent sequential process, besides the $b$ Boolean
variables, there is a local variable representing program locations,
with $c$ possible values. The shared variables are initially set to a
random value in $\{0,1\}$, and the local variables are initially set
to $0$. For each transition of a process, $p$ pairs of shared
variables and local variables are randomly chosen among the shared
variables and the local variables, such that the first element of such
a pair is assigned the negation of the second element of the
pair. Transitions are numbered from $0$ to $t-1$, and are executed
consecutively, and when the end of the sequence of the transitions is
reached, it loops back to the execution of the transition numbered
$0$. For this type of programs, we test different sizes of the
programs with $2$ processes ($a=2$), and let $b$ vary in the set of
values $\{12,16,20\}$, and then set $c=b/2, d=c/a, t=c$, and
$p=4$. Similarly, each property is tested on $20$ test cases for each
value of $b$.

Twenty-four properties are to be checked in this benchmark: properties $P_{01}$ to $P_{12}$ are depicted in Figure~\ref{fig:properties}, and $P_{13}$ to $P_{24}$ are simply the variations of
$P_{01}$ to $P_{12}$ by replacing $\wedge$ and $\bigvee$ by $\vee$ and
$\bigwedge$, respectively.

\begin{figure}[!h]
	\centering
	{\scriptsize
		\begin{tabular}{|l|l|}
			\hline
			$P_{01}$& $AG(\bigvee^c_{i=1}v_i)$ \\
			\hline
			$P_{02}$& $AF(\bigvee^c_{i=1}v_i)$ \\
			\hline
			$P_{03}$& $AG(v_1 \A AF(v_2\wedge \bigvee^c_{i=3}v_i))$
			\\
			\hline
			$P_{04}$& $AG(v_1 \A EF(v_2\wedge \bigvee^c_{i=3}v_i))$
			\\
			\hline
			$P_{05}$& $EG(v_1 \A AF(v_2\wedge \bigvee^c_{i=3}v_i))$
			\\
			\hline
			$P_{06}$& $EG(v_1 \A EF(v_2\wedge \bigvee^c_{i=3}v_i))$
			\\
			\hline
			$P_{07}$& $AU(v_1, AU(v_2, \bigvee^c_{i=3}v_i))$\\
			\hline
			$P_{08}$ & $AU(v_1, EU(v_2, \bigvee^c_{i=3}v_i))$\\
			\hline
			$P_{09}$& $AU(v_1, AR(v_2, \bigvee^c_{i=3}v_i))$\\
			\hline
			$P_{10}$& $AU(v_1, ER(v_2, \bigvee^c_{i=3}v_i))$\\
			\hline
			$P_{11}$& $AR(AX v_1, AX AU(v_2, \bigvee^c_{i=3}v_i))$\\
			\hline
			$P_{12}$& $AR(EX v_1, EX EU(v_2, \bigvee^c_{i=3}v_i))$\\
			\hline
		\end{tabular}
	}
	\caption{Properties $P_{01}, P_{02}, \ldots, P_{12}$ to be checked in benchmark \#1, \#2, and \#3.}
	\label{fig:properties}
\end{figure}

\subsubsection{Benchmark \#2 and \#3}
In benchmark {\#}2, we increase the number of state variables in benchmark \#1 to $48$, $60$, or $72$ for Concurrent Processes, and $24$,
$28$, or $32$ for Concurrent Sequential Processes. The 2880 test cases are also randomly generated.
The properties to be checked are the same as in benchmark \#1.

In benchmark {\#}3, we increase the number of state variables in benchmark \#1 to $252$, $504$
and $1008$ for both Concurrent Processes and Concurrent Sequential Processes, and check the same properties as benchmark \#1 and \#2.

\subsubsection{Experimental data}
The experimental results are shown below, and the detailed data is in \ref{app:detail:data}.

\paragraph{Experimental data for benchmark \#1.}
For 2880 test cases in this benchmark, \tool{iProver Modulo} can solve 1816 (63.1\%) cases, \verds{} can solve 2230 (77.4\%) cases, \sctl{} can solve 2862 (99.4\%) cases, and both \nusmv{} and \nuxmv{} can solve all (100\%) test cases. The numbers of test cases where \sctl{} runs faster are 2823 (98.2\%) comparing with \tool{iProver Modulo}, 2858 (99.2\%) comparing with \verds{}, 2741 (95.2\%) comparing with \nusmv{}, and 2763 (95.9\%) comparing with \nuxmv{}. According to Figure~\ref{fig:average_time} and Figure~\ref{fig:average_memory}, \sctl{} uses less time and space than the other four tools.
\paragraph{Experimental data for benchmark \#2.}
For 2880 test cases in this benchmark, \tool{iProver Modulo} can solve 1602 (55.6\%) cases, \verds{} can solve 1874 (65.1\%) cases, \nusmv{} can solve 728 (25.3\%) cases, \nuxmv{} can solve 736 (25.6\%) cases, and \sctl{} can solve 2597 (90.2\%) cases. The numbers of test cases where \sctl{} runs faster are 2597 (90.2\%) comparing with \tool{iProver Modulo}, 2594 (90.1\%) comparing with \verds{}, and 2588 (89.9\%) comparing both with \nusmv{} and \nuxmv{}. According to Figure~\ref{fig:average_time:extended} and Figure~\ref{fig:average_memory:extended}, \sctl{} uses less time and space than the other four tools.
\paragraph{Experimental data for benchmark \#3.}
For 2880 test cases in this benchmark, \tool{iProver Modulo} can solve 1146 (39.8\%) cases, \verds{} can solve 352 (12.2\%) cases, \sctl{} can solve 1844 (64.0\%) cases, while neither \nusmv{} nor \nuxmv{} can solve any case.

\begin{figure}[h]\tiny\centering
	\begin{tabular}{c}
		\begin{tikzpicture}[scale=0.55]
		\begin{axis}[title={\small{CP}},legend pos=north west, small,
		xlabel = {Number of state variables},
		ylabel = {Time [seconds]}
		]
		\addplot [color=red, mark=x] coordinates
		{
			(12,0.011)
			(24,0.010)
			(36,0.057)     
		};
		\addplot [color=green, mark=triangle] coordinates
		{
			(12,6.293)
			(24,14.648)
			(36,16.351)
		};
		\addplot [color=blue, mark=o] coordinates
		{
			(12,1.904)
			(24,0.714)
			(36,19.200)
		};
		\addplot [color=black,mark=*] coordinates
		{
			(12,0.014)
			(24,2.202)
			(36,135.202)
		};
		\addplot [color=black,mark=square] coordinates
		{
			(12,0.018)
			(24,2.100)
			(36,130.268)
		};
		\legend{\sctl{},\tool{iProver Modulo}, \verds{}, \nusmv{}, \nuxmv{}}
		\end{axis}
		\end{tikzpicture}
		
	\end{tabular}
	\begin{tabular}{c}
		\scriptsize
		\begin{tikzpicture}[scale=0.55]
		\begin{axis}[title={\small{CSP}},legend pos=north west, small,
		xlabel = {Number of state variables},
		ylabel = {Time [seconds]}
		]
		\addplot [color=red, mark=x] coordinates
		{
			(12,0.006)
			(16,0.007)
			(20,0.374)      
		};
		\addplot [color=green, mark=triangle] coordinates
		{
			(12,4.995)
			(16,3.997)
			(20,4.424)
		};
		\addplot [color=blue, mark=o] coordinates
		{
			(12,20.203)
			(16,75.741)
			(20,136.387)
		};
		\addplot [color=black,mark=*] coordinates
		{
			(12,0.105)
			(16,2.036)
			(20,53.195)
		};
		\addplot [color=black,mark=square] coordinates
		{
			(12,0.107)
			(16,1.957)
			(20,49.144)
		};
		\tiny{\legend{\sctl{},\tool{iProver Modulo}, \verds{}, \nusmv{}, \nuxmv{}}}
		\end{axis}
		\end{tikzpicture}
	\end{tabular}
	
	\caption{Average verification time in benchmark {\#}1.}
	\label{fig:average_time}
	
	\begin{subfigure}\centering
		\begin{tabular}{c}
			\begin{tikzpicture}[scale=0.55]
			\begin{axis}[title={\small{CP}},legend pos=north west, small, ymax=1000,
			xlabel = {Number of state variables},
			ylabel = {Memory [MB]}
			]
			\addplot [color=red, mark=x] coordinates
			{
				(12,7.845)
				(24,22.328)
				(36,42.184)     
			};
			\addplot [color=green, mark=triangle] coordinates
			{
				(12,10.111)
				(24,16.547)
				(36,73.946)
			};
			\addplot [color=blue, mark=o] coordinates
			{
				(12,322.020)
				(24,468.169)
				(36,581.011)
			};
			\addplot [color=black,mark=*] coordinates
			{
				(12,8.818)
				(24,42.924)
				(36,251.364)
			};
			\addplot [color=black,mark=square] coordinates
			{
				(12,21.013)
				(24,55.179)
				(36,256.058)
			};
			\legend{\sctl{},\tool{iProver Modulo}, \verds{}, \nusmv{}, \nuxmv{}}
			\end{axis}
			\end{tikzpicture}
			
		\end{tabular}
	\end{subfigure}
	\begin{subfigure}
		\centering
		\begin{tabular}{c}
			\scriptsize
			\begin{tikzpicture}[scale=0.55]
			\begin{axis}[title={\small{CSP}},legend pos=north west, small, ymax=1000,
			xlabel = {Number of state variables},
			ylabel = {Memory [MB]}
			]
			\addplot [color=red, mark=x] coordinates
			{
				(12,1.984)
				(16,2.039)
				(20,3.383)      
			};
			\addplot [color=green, mark=triangle] coordinates
			{
				(12,10.070)
				(16,11.449)
				(20,23.660)
			};
			\addplot [color=blue, mark=o] coordinates
			{
				(12,322.023)
				(16,485.081)
				(20,514.027)
			};
			\addplot [color=black,mark=*] coordinates
			{
				(12,7.051)
				(16,29.151)
				(20,144.974)
			};
			\addplot [color=black,mark=square] coordinates
			{
				(12,21.168)
				(16,48.423)
				(20,157.113)
			};
			\tiny{\legend{\sctl{},\tool{iProver Modulo}, \verds{}, \nusmv{}, \nuxmv{}}}
			\end{axis}
			\end{tikzpicture}
		\end{tabular}
	\end{subfigure}
	
	\caption{Average memory usage in benchmark {\#}1.}
	\label{fig:average_memory}
\end{figure}

\begin{figure}[h!]\tiny\centering
	\begin{subfigure}\centering
		\begin{tabular}{c}
			
			\begin{tikzpicture}[scale=0.55]
			\begin{axis}[title={\small{CP}},legend pos=north west, small,
			xlabel = {Number of state variables},
			ylabel = {Time [seconds]}
			]
			\addplot [color=red, mark=x] coordinates
			{
				(12,0.011)
				(24,0.280)
				(36,2.929)  
				(48,5.100)
				(60,7.357)   
			};
			\addplot [color=green, mark=triangle] coordinates
			{
				(12,6.293)
				(24,14.648)
				(36,16.351)
				(48,20.130)
				(60,37.303) 
			};
			\addplot [color=blue, mark=o] coordinates
			{
				(12,1.904)
				(24,0.714)
				(36,19.200)
				(48,40.825)
				(60,80.201)
			};
			\addplot [color=black,mark=*] coordinates
			{
				(12,0.014)
				(24,2.202)
				(36,135.202)
				(48,477.578)
				(60,1095.582)
			};
			\addplot [color=black,mark=square] coordinates
			{
				(12,0.018)
				(24,2.100)
				(36,130.268)
				(48,450.324)
				(60,995.689)
			};
			\legend{\sctl{}, \tool{iProver Modulo}, \verds{}, \nusmv{}, \nuxmv{}}
			\end{axis}
			\end{tikzpicture}
			
		\end{tabular}
	\end{subfigure}
	\begin{subfigure}
		\centering
		\begin{tabular}{c}
			\scriptsize
			\begin{tikzpicture}[scale=0.55]
			\begin{axis}[title={\small{CSP}},legend pos=north west, small,
			xlabel = {Number of state variables},
			ylabel = {Time [seconds]}
			]
			\addplot [color=red, mark=x] coordinates
			{
				(12,0.006)
				(16,0.007)
				(20,0.374) 
				(24,9.903)
				(28,12.548)
				(32,26.417)     
			};
			\addplot [color=green, mark=triangle] coordinates
			{
				(12,4.995)
				(16,3.997)
				(20,4.424)
				(24,19.903)
				(28,32.548)
				(32,56.417)   
			};
			\addplot [color=blue, mark=o] coordinates
			{
				(12,20.203)
				(16,75.741)
				(20,136.387)
				(24,187.043)
				(28,259.342)
				(32,300.031)
			};
			\addplot [color=black,mark=*] coordinates
			{
				(12,0.105)
				(16,2.036)
				(20,53.195)
				(24,300.406)
				(28,517.544)
				(32,933.722)
				
			};
			\addplot [color=black,mark=square] coordinates
			{
				(12,0.107)
				(16,1.957)
				(20,49.144)
				(24,290.205)
				(28,499.454)
				(32,912.527)
			};
			\tiny{\legend{\sctl{}, \tool{iProver Modulo}, \verds{}, \nusmv{}, \nuxmv{}}}
			\end{axis}
			\end{tikzpicture}
		\end{tabular}
	\end{subfigure}
	
	\caption{Average verification time in benchmark {\#}2.}
	\label{fig:average_time:extended}
	
	\begin{subfigure}\centering
		\begin{tabular}{c}
			
			\begin{tikzpicture}[scale=0.55]
			\begin{axis}[title={\small{CP}},legend pos=north west, small,
			xlabel = {Number of state variables},
			ylabel = {Memory [MB]}
			]
			\addplot [color=red, mark=x] coordinates
			{
				(12,7.845)
				(24,22.328)
				(36,42.184)  
				(48,55.100)
				(60,77.357)   
			};
			\addplot [color=green, mark=triangle] coordinates
			{
				(12,10.111)
				(24,16.547)
				(36,73.946)
				(48,123.342)
				(60,204.298)
			};
			\addplot [color=blue, mark=o] coordinates
			{
				(12,322.020)
				(24,468.169)
				(36,581.011)
				(48,601.023)
				(60,631.034)
			};
			\addplot [color=black,mark=*] coordinates
			{
				(12,8.818)
				(24,42.924)
				(36,251.364)
				(48,589.205)
				(60,1559.283)
			};
			\addplot [color=black,mark=square] coordinates
			{
				(12,21.013)
				(24,55.179)
				(36,256.058)
				(48,650.324)
				(60,1595.689)
			};
			\legend{\sctl{}, \tool{iProver Modulo}, \verds{}, \nusmv{}, \nuxmv{}}
			\end{axis}
			\end{tikzpicture}
			
		\end{tabular}
	\end{subfigure}
	\begin{subfigure}
		\centering
		\begin{tabular}{c}
			\scriptsize
			\begin{tikzpicture}[scale=0.55]
			\begin{axis}[title={\small{CSP}},legend pos=north west, small,
			xlabel = {Number of state variables},
			ylabel = {Memory [MB]}
			]
			\addplot [color=red, mark=x] coordinates
			{
				(12,1.984)
				(16,2.039)
				(20,3.383) 
				(24,9.903)
				(28,22.548)
				(32,36.417)     
			};
			\addplot [color=green, mark=triangle] coordinates
			{
				(12,10.070)
				(16,11.449)
				(20,23.660)
				(24,39.903)
				(28,52.548)
				(32,86.417)
			};
			\addplot [color=blue, mark=o] coordinates
			{
				(12,322.023)
				(16,485.081)
				(20,514.027)
				(24,530.238)
				(28,542.231)
				(32,580.357)
			};
			\addplot [color=black,mark=*] coordinates
			{
				(12,7.051)
				(16,29.151)
				(20,144.974)
				(24,420.406)
				(28,1217.544)
				(32,2903.722)
				
			};
			\addplot [color=black,mark=square] coordinates
			{
				(12,21.168)
				(16,48.423)
				(20,157.113)
				(24,490.205)
				(28,1296.454)
				(32,2932.527)
			};
			\tiny{\legend{\sctl{}, \tool{iProver Modulo}, \verds{}, \nusmv{}, \nuxmv{}}}
			\end{axis}
			\end{tikzpicture}
		\end{tabular}
	\end{subfigure}
	
	\caption{Average memory usage in benchmark {\#}2.}
	\label{fig:average_memory:extended}
\end{figure}

\subsubsection{Continuation vs. recursion.}
To show the importance of using continuation-passing style,
we have implemented a recursive version of our tool and compared the time
efficiency. In benchmark {\#}1, {\#}2, and {\#}3, \sctl{} solves about 10\% more test cases than \sctlprovr{}, and it outperforms \sctlprovr{} in almost
all solvable cases (Table~\ref{tabl:cont_vs_rec}). \sctlprovr{} is more sensitive to the number of variables than \sctl{} (Figure \ref{fig:average_time:recursive:vs:continuation}).

{\tiny
	\begin{table}\scriptsize
		\setlength{\tabcolsep}{1pt}
		\begin{center}
			\begin{tabular}{| l | r | r | r |}
				
				\hline
				\textbf{Bench} & \sctl{} solvable & \sctlprovr{} solvable & t(\sctlprov) $<$ t(\sctlprovr{})  \\
				\hline
				\textbf{\#1} & 2862(99.4\%) & 2682(93.1\%) & 2598(90.2\%)\\
				\hline
				\textbf{\#2} & 2597(90.2\%) & 2306(80.1\%) & 2406(83.5\%)\\
				\hline
				\textbf{\#3} & 1849(64.2\%) & 1520(52.8\%) & 1735(60.2\%)\\
				\hline
			\end{tabular}
		\end{center}
		\caption{\sctl{} vs. \sctlprovr{}}
		\label{tabl:cont_vs_rec}
	\end{table}
}

\begin{figure}[h]\tiny\centering
	\begin{subfigure}\centering
		\begin{tabular}{c}
			
			\begin{tikzpicture}[scale=0.55]
			\begin{axis}[title={\small{CP}},legend pos=north west, small,
			xlabel = {Number of state variables},
			ylabel = {Time [seconds]}
			]
			\addplot [color=red, mark=x] coordinates
			{
				(12,0.011)
				(24,0.280)
				(36,2.929)  
				(48,5.100)
				(60,7.357)
				(72,19.566)   
			};
			\addplot [color=black,mark=*] coordinates
			{
				(12,0.032)
				(24,2.238)
				(36,6.717)
				(48,17.578)
				(60,55.582)
				(72,101.265)
			};
			\legend{\sctlprov, \sctlprovr{}}
			\end{axis}
			\end{tikzpicture}
			
		\end{tabular}
	\end{subfigure}
	\begin{subfigure}
		\centering
		\begin{tabular}{c}
			\scriptsize
			\begin{tikzpicture}[scale=0.55]
			\begin{axis}[title={\small{CSP}},legend pos=north west, small,
			xlabel = {Number of state variables},
			ylabel = {Time [seconds]}
			]
			\addplot [color=red, mark=x] coordinates
			{
				(12,0.006)
				(16,0.007)
				(20,0.374) 
				(24,9.903)
				(28,12.548)
				(32,26.417)
				(52,91.134)
				(72,180.098)     
			};
			
			\addplot [color=black,mark=*] coordinates
			{
				(12,0.035)
				(16,1.238)
				(20,10.717)
				(24,30.406)
				(28,57.544)
				(32,83.722)
				(52,234.546)
				(72,504.256) 
				
			};
			\tiny{\legend{\sctlprov,\sctlprovr{}}}
			\end{axis}
			\end{tikzpicture}
		\end{tabular}
	\end{subfigure}
	
	\caption{Average verification time in \sctl{} vs. \sctlprovr.}
	\label{fig:average_time:recursive:vs:continuation}
\end{figure}

\begin{remark}
	In the comparison of average verification time of \sctl{} and \sctlprovr, we extend the number of variables in Concurrent Sequential Processes to 72, which is the same as in Concurrent Processes.
\end{remark}

\subsection{Programs with fairness constraints}\label{subsec:fair}
In this part, we evaluate benchmark \#4, which models mutual exclusion
algorithms and ring
algorithms\footnote{\url{http://lcs.ios.ac.cn/~zwh/verds/verds_code/bp12.rar}}.
Then, we compare the evaluation results of \sctl{}, \verds{},
\nusmv{}, and \nuxmv{}, and we do not consider \tool{iProver Modulo}
because \tool{iProver Modulo} cannot handle \CTL{} properties with
fairness constraints \cite{Ji15}.

\paragraph{Mutual exclusion and ring algorithms.}

This benchmark consists of two sets of concurrent programs: the mutual
exclusion algorithms and the ring algorithms. Both kinds of algorithms
consist of a set of concurrent processes running in parallel.

In the mutual exclusion algorithms, the scheduling of processes is simple: for all $i$ between $0$ and $n-2$, process $i+1$ performs a transition after process $i$, and process $0$ performs a transition after process $n-1$.
Each formula in the algorithms needs to be
verified under the fairness constraint that each process does not
starve, i.e., no process waits infinitely long.

Each process in the mutual exclusion algorithms has three internal
states: \textsf{noncritical}, \textsf{trying}, and
\textsf{critical}. The number of processes vary from $6$ to $51$. There
are five properties specified by \CTL{} formulae are to be verified in
mutual exclusion algorithms, as in
Table~\ref{tabl:mutual:ring:properties}. In these formulae, $non_i$
($try_i$, $cri_i$) indicates that process $p_i$ has internal state
\textsf{noncritical} (\textsf{trying}, \textsf{critical}).
Note that because of the scheduling algorithm, 
processes $0$ and $1$ are not symmetric, as exemplified by the 
difference in performance between the properties $P_4$ and $P_5$.

Each process in the ring algorithms consists of $5$ Boolean internal variables indicating the internal state, and a Boolean variable indicating the output. Each process receives a Boolean value as the input during its running time. For a ring algorithm with processes $p_0,p_1,...,p_n$, the internal state of $p_i$ depends on the output of process $p_{i-1}$, and the output of $p_{i-1}$ depends on its internal state, where $1\le i\le n$. The internal state of $p_0$ depends on the output of process $p_n$, and the output of $p_n$ depends on the internal state of its own. The number of processes vary from $3$ to $10$. There are four properties specified by \CTL{} formulae are to be verified in ring algorithms, as in Figure~\ref{tabl:mutual:ring:properties}. In these formulae, $out_i$ indicates that the output of process $p_i$ is Boolean value $true$.


The experimental results (Table~\ref{tabl:solvable:mutual:ring} and Table \ref{tabl:compare:mutual:ring}) show that \sctl{} solves more test cases than \verds, \nusmv{}, and \nuxmv{}. At the same time, \sctl{} is more time and space efficiency in more than 75 percent of the test cases than the other three tools.  

The detailed experimental data is shown in ~\ref{bench4:data:detail}.

\begin{table}[h!]
	\scriptsize
	\begin{center}
		\begin{tabular}{| l | l |}
			\hline
			\textbf{Prop} & \textbf{Mutual Exclusion Algorithms}\\
			\hline
			{$P_1$} & $EF (cri_0 \wedge cri_1)$  \\
			\hline
			{$P_2$} &  $AG (try_0 \Rightarrow AF (cri_0))$\\
			\hline
			{$P_3$} &  $AG (try_1 \Rightarrow AF (cri_1))$\\
			
			\hline
			{$P_4$} &  $AG (cri_0 \Rightarrow A cri_0 U (\neg cri_0 \wedge A \neg cri_0 U cri_1))$  \\
			\hline
			{$P_5$} &  $AG (cri_1 \Rightarrow A cri_1 U (\neg cri_1 \wedge A \neg cri_1 U cri_0))$\\
			\hline
		\hline
		\textbf{Prop} & \textbf{Ring Algorithms}\\
		\hline
		{$P_1$} & $AGAF out_0 \wedge AGAF \neg out_0$ \\
		\hline
		{$P_2$} &  $AGEF out_0 \wedge AGEF \neg out_0$ \\
		\hline
		{$P_3$} &  $EGAF out_0 \wedge EGAF \neg out_0$\\
		
		\hline
		{$P_4$} &  $EGEF out_0 \wedge EGEF \neg out_0$ \\
		\hline
	\end{tabular}
	\end{center}
	\figcaption{Properties to be verified in benchmark \#4.}
	\label{tabl:mutual:ring:properties}
\end{table}

\begin{table}[h!]
	\scriptsize
			\centering
			\begin{tabular}{| l | r | r | r | r |}
				\hline
				\textbf{Programs} & \verds{} & \nusmv{} & \nuxmv{} &  \sctl{} \\
				\hline
				\code{mutual exclusion} & 136 (59.1\%) & 50 (21.7\%) & 50 (21.7\%) & 191 (83.0\%)  \\
				\hline
				\code{ring} & 16 (50.0\%) & 21 (65.6\%) & 21 (65.6\%) & 20 (62.5\%) \\
				\hline
				Sum & 152(58.0\%) & 71(27.1\%) & 71(27.1\%) & 211 (80.5\%)\\
				\hline
			\end{tabular}	
			\caption{Solvable cases in \verds{}, \nusmv{}, \nuxmv{}, and \sctl{}.}
			\label{tabl:solvable:mutual:ring}
			\begin{tabular}{| l | r | r | r |}
				\hline
				\textbf{Programs} & \verds{} & \nusmv{} & \nuxmv{}  \\
				\hline
				\code{mutual exclusion} & 187 (81.3\%) & 191 (83.0\%) & 191 (83.0\%)   \\
				\hline
				\code{ring} & 13 (40.6\%) & 20 (62.5\%) & 20 (62.5\%)  \\
				\hline
				Sum & 200(76.3\%) & 211(80.5\%) & 211(80.5\%) \\
				\hline
			\end{tabular}
			\caption{Cases where \sctl{} both runs faster and uses less memory.}
			\label{tabl:compare:mutual:ring}
\end{table}

\subsection{Discussion of the experimental results}
In the evaluation of all benchmarks in this paper, the performances of the five tools in the comparisons are affected by two factors: the number of state variables, and the type of the property to be checked. The performances of \nusmv{} and \nuxmv{} are mainly affected by the number of state variables, while the performances of \tool{iProver Modulo}, \verds{}, and \sctl{} are mainly affected by the type of the property to be checked. When the number of state variables is rather small (such as test cases in benchmark \#1), \nusmv{} and \nuxmv{} solves more test cases than \tool{iProver Modulo}, \verds{} and \sctl{}, but when the number of state variables becomes larger (such as test cases in benchmark \#2 and \#3), they performs worse than the other three tools.  When checking properties where nearly all states must be searched (such as $AG$ properties), \nusmv{} and \nuxmv{} usually perform better than \tool{iProver Modulo}, \verds{} and \sctl{}. However, for most properties, \tool{iProver Modulo}, \verds{} and \sctl{} usually search much less states than \nusmv{} and \nuxmv{} to check them, and are more time and space efficiency. Thus, \tool{iProver Modulo}, \verds{} and \sctl{} scale up better than \nusmv{} and \nuxmv{} when checking these properties. Moreover, \sctl{} scales up better than both \tool{iProver Modulo} and \verds{}, and outperforms these two tools in most solvable cases.

\subsection{An application to the analysis of Air traffic control protocols}

As an application to an engineering problem, we present
a concept of operations for the {\em Small Aircraft Transportation System} 
(\textsf{SATS}) 
\cite{MunozDC04,nasasats04}
in \sctl{}\footnote{\url{https://github.com/terminatorlxj/SATS-model}}.

In this concept of operation, the airspace volume surrounding an
airport facility, called {\em the self controlled area}, is divided into
15 zones (Figure \ref{fig:example:sats:sca}).
\begin{figure}
	\centering
	\includegraphics[width=6cm]{./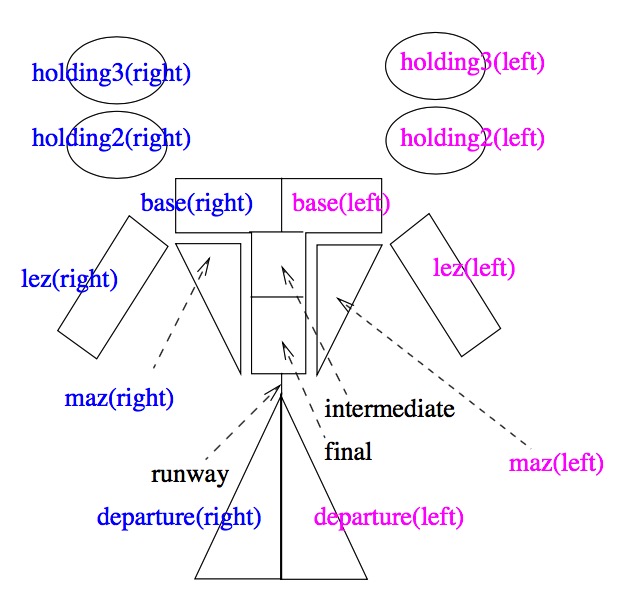}
	\caption{SCA zones, where right and left are relative to the pilot facing the runway, i.e., opposite from the reader point of view \cite{MunozDC04}.}
	\label{fig:example:sats:sca}
\end{figure}
For instance, the zone {\tt holding3(right)} is a holding pattern at
3000 feet on the right of the self controlled area.
Each zone contains a list of aircraft and 24 transition rules
specify different SATS-procedures.
For instance, the rule \textbf{Vertical Entry (right)}
specifies the vertical entry of an aircraft in the zone
{\tt holding3(right)}. 

The model is
non-deterministic, that is, for a given state, several transitions are
possible and all must be considered.  As there are no a
priori bounds on the number of aircraft in each zone, the number of
states in the model is potentially infinite. However, the number of
states that are reachable from the initial state is finite: 
an enumeration of the model shows that there are 54221 such states (and
around 3000 in the simplified model where departure operations are not
considered).

There are eight properties of the model that we want to 
verify with \sctl{}, for instance that 
the \textsf{SATS} concept does not allow more than four simultaneous 
landing operations and none of the 
15 zones contains too many aircraft (each zone is
assigned a maximum number of aircraft and the actual number of
aircraft is never higher than this number).
The safety property is thus conjunction of these eight properties. 

The verification problem is to check that this
property holds on every reachable state from the initial state (the
state where there are no aircraft on each zone of the self controlled
area), so the formula to be checked is $AG_x(\phi)(e)$ where $\phi$ is 
the conjunction of the eight properties and $e$ is the initial state.

This is a typical model checking problem, but this problem
is known to be cumbersome for traditional model checkers
\cite{MunozDC04} because:
\begin{itemize}
	\item Each state of the model is represented by a complex data
	structure. For instance, a number of state variables are
	represented by lists of aircraft with unbounded length.
	\item The transition rules of the model are 
	complex algorithms. For instance, some transitions rules
	involve recursive operations on lists of aircraft.
	\item The
	properties to be verified in the model are also represented
	by complex algorithms. For instance, some of the properties
	are inductively defined over lists of aircraft.
\end{itemize}

However, this example fits well in \sctl{} that provides a
more expressive input language than most traditional model
checkers. Indeed, \sctl{} provides both readable notations for the
definition of data structures such as records or lists with unbounded
length, and arbitrary algorithms for the definitions of transition
rules and of properties.  So we have been able to check in
\sctl{} that the safety property holds on the model, and the
verification was executed in less than 30 seconds on the same machine as which the benchmarks are evaluated.

\section{Conclusion and future work}\label{conclusion}

This paper provides a first step towards combining model checking
and proof checking.  

We proposed a parameterized logic \CTLP{},
which extends \CTL{} with polyadic predicate symbols, 
provided a proof system \SCTL{} for \CTLP{} in the style of a sequent calculus, 
and developed a
new automated theorem prover \sctl{} from scratch, tailored
for \SCTL{}.  The particular aspects of \sctl{} are as
follows: (1) It performs verification automatically and directly over
any given Kripke model.  (2) In addition of generating counterexamples
when the verification of the given property fails, \sctl{}
permits to give a certificate for the property when it succeeds.  (3)
It performs verification in a continuation-passing style and a doubly
on-the-fly style, thanks to the syntax and inference rules of
\SCTL{}.

As comparisons to other \CTL{} model checking tools, we consider four
other tools: an automated theorem prover \tool{iProver Modulo}, a
\QBF{}-based bounded model checker \verds{}, and two \BDD{}-based
symbolic model checker \nusmv{} and its extension \nuxmv{}. There are
four benchmarks considered in the comparisons. Benchmark \#1 is
originally introduced by Zhang \cite{Zhang14} in the evaluation of
\verds{} and \nusmv{}. Later, Ji \cite{Ji15} also uses this benchmark
in the evaluation of \tool{iProver Modulo} and \verds{}. Based on
benchmark \#1, we extend the number of state variables into tens,
hundreds, and even thousands in benchmark \#2 and benchmark \#3. In
benchmark \#4, we verify properties with fairness constraints on test
cases that models mutual exclusion algorithms and ring algorithms.
The experimental results show that \sctl{} has a good performance
in terms of time and space consuming, compared to existing tools and,
\sctl{} outperforms these four tools in the verification of many
kinds of \CTL{} properties, and can be considered complementary to
model checkers such as \nusmv{} and \nuxmv{}, which are among the best
\CTL{} model checkers up-to-date.  As a matter of fact, \nusmv{}
and \nuxmv{} perform better than \sctl{} in proving some
AG properties, while \sctl{} usually performs better with
other kinds of properties. Note also that the tool \sctl{} can
be seen either as a theorem prover, or a model checker that can
produce more information than traditional ones.

The fairness constraints have been added in the implementation, but not
yet in \CTLP{} nor in \SCTL{}. This is left for future work.

Until now, \sctl{} is single-threaded, it is also our future work to write a parallel version to improve efficiency.

\section*{Acknowledgment}
{This work is supported by the ANR-NSFC project LOCALI (NSFC 61161130530 and ANR 11 IS02 002 01).}

\bibliography{references} \bibliographystyle{splncs03}
\section*{Appendix}
\appendix

\section{Detailed Experimental data in benchmark \#1, \#2 and \#3}\label{app:detail:data}
We show the detailed experimental data in benchmark \#1, \#2 and \#3 in the following three subsections.
\subsection{Benchmark \#1 (Table~\ref{tabl:solvable} and \ref{tabl:compare})}

Table \ref{tabl:solvable} shows that \sctl{} outperforms \tool{iProver Modulo}
and \verds, and is almost as good as \nusmv{} and \nuxmv{}: \nusmv{} and
\nuxmv{} solve all the 2880 problems, while \sctl{} solves 2862
problems (99.4\%).

Let us now turn to the efficiency. \sctl{} is much faster than the
four other tools (Table \ref{tabl:compare}). Among the problems that can be solved by \sctl{} and
\tool{iProver Modulo}, \sctl{} is faster in 98.2\% of these
problems, 99.2\% when compared with \verds, 95.2\% when compared with
\nusmv{} and 95.9\% when compared with \nuxmv.

\begin{figure}[h]\scriptsize
	\centering
	\setlength{\tabcolsep}{1pt}
	\begin{tabular}{| l | r | r | r | r | r |}
		\hline
		\textbf{Programs} & \tool{iProver Modulo} & \verds{} & \nusmv{} & \nuxmv{} & \sctl{}\\
		\hline
		\code{CP ($b = 12$)} & 467(97.3\%) & 433(90.2\%) & 480(100\%) & 480(100\%) & 480(100\%)\\
		\hline
		
		\code{CP ($b = 24$)} & 372(77.5\%) & 428(89.2\%) & 480(100\%) & 480(100\%) & 480(100\%)\\
		\hline
		
		\code{CP ($b = 36$)} & 383(79.8\%) & 416(86.7\%) & 480(100\%) & 480(100\%) & 470(97.9\%)\\
		\hline
		
		\code{CSP ($b = 12$)} & 177(36.9\%) & 370(77.1\%) & 480(100\%) & 480(100\%) & 480(100\%)\\
		\hline
		
		\code{CSP ($b = 16$)} & 164(34.2\%) & 315(65.6\%) & 480(100\%) & 480(100\%) & 474(98.8\%)\\
		\hline
		
		\code{CSP ($b = 20$)} & 253(52.7\%) & 268(55.8\%) & 480(100\%) & 480(100\%) & 478(99.6\%)\\
		\hline
		Sum & 1816(63.1\%) & 2230(77.4\%) & 2880(100\%) & 2880(100\%) & 2862(99.4\%)\\
		\hline
	\end{tabular}
	\tabcaption{Solvable cases in five tools.}
	\label{tabl:solvable}
	\vspace{0.5cm}
	\begin{tabular}{| l | r | r | r | r |}
		\hline
		\textbf{Programs} & \tool{iProver Modulo} & \verds{} & \nusmv{} & \nuxmv{} \\
		\hline
		\code{CP ($b = 12$)} & 480(100\%) & 480(100\%) & 430(89.6\%) & 431(89.8\%) \\
		\hline
		\code{CP ($b = 24$)} & 480(100\%) & 480(100\%) & 456(95.0\%) & 458(95.4\%) \\
		\hline
		\code{CP ($b = 36$)} & 454(94.6\%) & 467(97.3\%) & 441(91.9\%) & 446(92.9\%) \\
		\hline
		\code{CSP ($b = 12$)} & 480(100\%) & 480(100\%) & 464(96.7\%) & 465(96.9\%) \\
		\hline
		\code{CSP ($b = 16$)} & 474(98.6\%) & 473(98.5\%) & 472(98.3\%) & 474(98.6\%) \\
		\hline
		\code{CSP ($b = 20$)} & 455(94.8\%) & 478(99.6\%) & 478(99.6\%) & 479(99.8\%) \\
		\hline
		Sum & 2823(98.2\%) & 2858(99.2\%) & 2741(95.2\%) & 2763(95.9\%)  \\
		\hline
	\end{tabular}
	
	\tabcaption{Cases where \sctl{} runs faster.}
	\label{tabl:compare}
\end{figure}

\subsection{Benchmark \#2 (Table~\ref{tabl:solvable:extended} and \ref{tabl:compare:extended})}

Our benchmark {\#}2 investigates the performances of \tool{iProver Modulo}, \verds{}, \nusmv{}, \nuxmv{}, and \sctl{}
when the size of the model increases.

To do so, we increase the number of variables in the random Boolean
programs to $48$, $60$, or $72$ for concurrent processes, and $24$,
$28$, or $32$ for concurrent sequential processes.
The 2880 test cases are also randomly generated.
The properties to be checked are the same as in benchmark \#1. 

Counting the number of problems that can be solved in 20 minutes,
we see that \sctl{} scales up better (Table \ref{tabl:solvable:extended}, \ref{tabl:compare:extended}) than the other four tools: \sctl{} solves more test cases than the other tools and, outperforms the other tools in most solvable test cases.

\begin{figure}[h]\scriptsize
			\centering
			\setlength{\tabcolsep}{3pt}
			\begin{tabular}{| l | r | r | r | r | r |}
				\hline
				\textbf{Programs} & \tool{iProver Modulo} & \verds{} & \nuxmv{} & \nuxmv{} & \sctl{} \\
				\hline
				\code{CP ($b = 48$)} & 375(78.1\%) & 400(83.3\%) & 171(35.6\%) & 176(36.7\%) & 446(92.9\%)  \\
				\hline
				\code{CP ($b = 60$)} & 360(75.0\%) & 403(84.0\%) & 22(4.6\%) & 23(4.8\%) & 440(91.7\%)  \\
				\hline
				\code{CP ($b = 72$)} & 347(72.3\%) & 383(79.8\%) &  0 & 0 & 437(91.0\%)  \\
				
				\hline
				\code{CSP ($b=24$)} & 190(39.6\%) & 235(49.0\%) &  421(87.7\%) & 423(88.1\%) & 430(89.6\%) \\
				\hline
				\code{CSP ($b=28$)} & 172(35.8\%) & 229(47.7\%) & 106(22.1\%) & 108(22.5\%) & 426(88.8\%) \\
				\hline
				\code{CSP ($b=32$)} & 158(32.9\%) & 224(46.7\%) & 8(1.7\%) & 6(1.3\%) & 418(87.1\%) \\
				
				\hline
				Sum & 1602(55.6\%) & 1874(65.1\%) & 728(25.3\%) & 736(25.6\%) & 2597(90.2\%)\\
				\hline
			\end{tabular}
			\tabcaption{Solvable cases in four tools.}
			\label{tabl:solvable:extended}
			\vspace{0.5cm}
		
			\begin{tabular}{| l | r | r | r | r |}
				\hline
				\textbf{Programs} & \tool{iProver Modulo} & \verds{} & \nusmv{} & \nuxmv{}\\
				\hline
				\code{CP ($b=48$)} & 446(92.9\%) & 444(92.5\%) & 442(92.1\%) & 442(92.1\%) \\
				\hline
				\code{CP ($b=60$)} & 440(91.7\%) & 440(91.7\%) & 440(91.7\%) & 440(91.7\%) \\
				\hline
				\code{CP ($b=72$)} & 437(91.0\%) & 437(91.0\%) & 437(91.0\%) & 437(91.0\%) \\
				
				\hline
				\code{CSP ($b=24$)} & 430(89.6\%) & 429(89.4\%) & 426(88.8\%) & 426(88.8\%) \\
				\hline
				\code{CSP ($b=28$)} & 426(88.8\%) & 426(88.8\%) & 425(88.5\%) & 425(88.5\%) \\
				\hline
				\code{CSP ($b=32$)} & 418(87.1\%) & 418(87.1\%) & 418(87.1\%) & 418(87.1\%) \\
				\hline
				Sum & 2597(90.2\%) & 2594(90.1\%) & 2588(89.9\%) & 2588(89.9\%)\\
				\hline
			\end{tabular}	
			\tabcaption{Cases where \sctl{} runs faster.}
			\label{tabl:compare:extended}
\end{figure}
\subsection{Benchmark \#3 (Table~\ref{tabl:solvable:larger})}

We increase, in our benchmark {\#}3, the number of variables to $252$, $504$
and $1008$ for both concurrent and concurrent sequential processes.

We compare the evaluation results 
of \tool{iProver Modulo}, \verds{}, \nusmv, \nuxmv, and \sctl{} as in Table \ref{tabl:solvable:larger}, 
and find that, in 20 minutes, \sctl{} can still
solve 64.0\% of the test cases, while \tool{iProver Modulo} and \verds{} solve 39.8\% and 12.2\% test cases, respectively; moreover, \nusmv{} and \nuxmv{} solve none.

\begin{figure}[h]\scriptsize
	\setlength{\tabcolsep}{3pt}
	\begin{center}
		\begin{tabular}{| l | r | r | r | r | r |}
			\hline
			\textbf{Programs} & \tool{iProver Modulo} & \verds{} &
			\nusmv{} & \nuxmv{} & \sctl{} \\
			\hline
			\code{CP ($b=252$)} & 299(62.3\%) & 216(45.0\%) & 0 & 0 & 371(77.3\%) \\
			\hline
			\code{CP ($b=504$)} & 292(60.8\%) & 0 & 0 & 0 & 335(69.8\%)\\
			\hline
			\code{CP ($b=1008$)} & 271(56.5\%) & 0 & 0 & 0 & 278(57.9\%)\\
			
			\hline
			\code{CSP ($b=252$)} & 114(23.6\%) & 136(28.3\%) & 0 & 0 & 312(65.0\%) \\
			\hline
			\code{CSP ($b=504$)} & 108(22.5\%) & 0 & 0 & 0 & 295(61.5\%) \\
			\hline
			\code{CSP ($b=1008$)} & 62(12.9\%) & 0 & 0 & 0 & 253(52.7\%)\\
			\hline
			Sum & 1146(39.8\%) & 352(12.2\%) & 0 & 0 & 1844(64.0\%)\\ \hline
		\end{tabular}
	\end{center}
	\tabcaption{Solvable cases with variable number 252, 504, and 1008, respectively.}
	\label{tabl:solvable:larger}
\end{figure}

\section{Experimental data in benchmark \#4}\label{bench4:data:detail}
The detailed experimental data of verifying test cases in benchmark \#4 is depicted in Table~\ref{tabl:data:mutual} and Table~\ref{tabl:data:ring}.
\begin{figure}[h!]\scriptsize
	\centering
			\begin{tabular}{| r | r | r | r | r | r | r | r | r | r |}
				\hline
				\textbf{Prop} & \textbf{NoP} & \multicolumn{8}{c|}{Mutual Exclusion Algorithms} \\
				\hline
				{} & {} & \multicolumn{2}{c|}{\verds{}} & \multicolumn{2}{c|}{\nusmv{}} & \multicolumn{2}{c|}{\nuxmv{}} &  \multicolumn{2}{c|}{\sctl{}} \\
				\hline
				{} & {} & sec & MB & sec & MB & sec & MB &  sec & MB \\
				\hline
				\multirow{6}{*}{$P_1$} & 6 & 0.286 & 321.99 & 0.153 & 9.07 & 0.270 & 21.18 & 0.005 & 2.25 \\
				{} & 12 & 1.278 & 322.08 & 19.506 & 76.98 & 21.848 & 89.25 & 0.016 & 3.70  \\
				{} & 18 & 4.719 & 426.45 & - & - & - & - & 0.037 & 5.44  \\
				{} & 24 & 11.989 & 601.55 & - & - & - & - & 0.091 & 9.36  \\
				{} & 30 & 26.511 & 926.25 & - & - & - & - & 0.200 & 16.49  \\
				{} & 36 & 52.473 & 1287.57 & - & - & - & - & 0.418 & 27.46  \\
				{} & 42 & 100.071 & 1944.95 & - & - & - & - & 0.682 & 48.28  \\
				{} & 48 & - & - & - & - & - & - & 1.119 & 66.63  \\
				{} & 51 & - & - & - & - & - & - & 1.392 & 82.32  \\
				\hline
				\multirow{6}{*}{$P_2$} & 6 & 0.375 & 322.07 & 0.054 & 9.07 & 0.048 & 21.31 & 0.012 & 3.07  \\
				{} & 12 & 2.011 & 322.02 & 22.774 & 76.96 & 21.733 & 89.24 & 0.035 & 4.44  \\
				{} & 18 & 7.958 & 446.71 & - & - & - & - & 0.101 & 8.09  \\
				{} & 24 & 23.448 & 692.30 & - & - & - & - & 0.252 & 14.57  \\
				{} & 30 & 48.800 & 1026.48 & - & - & - & - & 0.509 & 23.61  \\
				{} & 36 & 105.183 & 1619.01 & - & - & - & - & 1.005 & 50.49  \\
				{} & 42 & - & - & - & - & - & - & 1.791 & 57.93  \\
				{} & 48 & - & - & - & - & - & - & 2.679 & 86.67  \\
				{} & 51 & - & - & - & - & - & - & 3.453 & 129.83  \\
				\hline
				\multirow{6}{*}{$P_3$} & 6 & 0.331 & 322.02 & 0.089 & 9.04 & 0.033 & 21.27 & 0.012 & 3.03  \\
				{} & 12 & 2.059 & 322.07 & 22.749 & 76.91 & 21.897 & 89.22 & 0.035 & 4.93  \\
				{} & 18 & 7.995 & 449.13 & - & - & - & - & 0.110 & 9.59  \\
				{} & 24 & 23.578 & 696.74 & - & - & - & - & 0.286 & 21.04  \\
				{} & 30 & 51.774 & 1138.27 & - & - & - & - & 0.643 & 30.09  \\
				{} & 36 & 106.027 & 1628.84 & - & - & - & - & 1.287 & 66.14  \\
				{} & 42 & - & - & - & - & - & - & 2.138 & 86.29  \\
				{} & 48 & - & - & - & - & - & - & 3.369 & 170.94  \\
				{} & 51 & - & - & - & - & - & - & 4.333 & 149.03  \\
				\hline
				\multirow{6}{*}{$P_4$} & 6 & 0.446 & 321.97 & 0.089 & 9.04 & 0.033 & 21.27 & 0.039 & 3.38  \\
				{} & 12 & 8.289 & 552.62 & 22.749 & 76.91 & 21.897 & 89.22 & 150.115 & 986.64  \\
				{} & 18 & - & - & - & - & - & - & - & -  \\
				{} & 24 & - & - & - & - & - & - & - & -  \\
				{} & 30 & - & - & - & - & - & - & - & -  \\
				{} & 36 & - & - & - & - & - & - & - & -  \\
				{} & 42 & - & - & - & - & - & - & - & -  \\
				{} & 48 & - & - & - & - & - & - & - & -  \\
				{} & 51 & - & - & - & - & - & - & - & -  \\
				\hline
				\multirow{6}{*}{$P_5$} & 6 & 0.430 & 322.03 & 0.031 & 9.09 & 0.047 & 21.19 & 0.011 & 3.10  \\
				{} & 12 & 3.398 & 363.78 & 22.747 & 77.01 & 22.029 & 89.17 & 0.040 & 4.81  \\
				{} & 18 & 18.176 & 783.24 & - & - & - & - & 0.115 & 10.99  \\
				{} & 24 & 87.432 & 2382.82 & - & - & - & - & 0.322 & 18.68  \\
				{} & 30 & - & - & - & - & - & - & 1.414 & 47.68  \\
				{} & 36 & - & - & - & - & - & - & 1.287 & 66.35  \\
				{} & 42 & - & - & - & - & - & - & 2.405 & 142.86  \\
				{} & 48 & - & - & - & - & - & - & 4.848 & 225.55  \\
				{} & 51 & - & - & - & - & - & - & 5.177 & 225.66  \\
				\hline
			\end{tabular}
		\vspace{0.3cm}
		\tabcaption{Time and memory usage in benchmark \#4 (Mutual exclusion algorithms).}
		\label{tabl:data:mutual}
	\end{figure}

			\begin{figure}[h!]
				\scriptsize
				\centering
			\begin{tabular}{| r | r | r | r | r | r | r | r | r | r |}
				\hline
				\textbf{Prop} & \textbf{NoP} & \multicolumn{8}{c|}{Ring Algorithms} \\
				\hline
				{} & {} & \multicolumn{2}{c|}{\verds{}} & \multicolumn{2}{c|}{\nusmv{}} & \multicolumn{2}{c|}{\nuxmv{}} &  \multicolumn{2}{c|}{\sctl{}} \\
				\hline
				{} & {} & sec & MB & sec & MB & sec & MB & sec & MB \\
				\hline
				\multirow{6}{*}{$P_1$} & 3 & 0.168 & 322.09 & 0.040 & 10.02 & 0.045 & 22.08 & 4.622 & 62.22  \\
				{} & 4 & 0.216 & 322.12 & 0.299 & 22.46 & 0.255 & 34.96 & - & -  \\
				{} & 5 & 0.301 & 322.07 & 2.421 & 59.31 & 1.195 & 71.53 & - & -  \\
				{} & 6 & 0.449 & 322.13 & 22.127 & 80.49 & 17.967 & 92.82 & - & -  \\
				{} & 7 & 0.740 & 322.19 & 147.895 & 224.17 & 131.735 & 236.50 & - & -  \\
				{} & 8 & 1.115 & 322.09 & 1135.882 & 865.04 & 1083.48 & 877.36 & - & -  \\
				{} & 9 & 1.646 & 322.07 & - & - & - & - & - & -  \\
				{} & 10 & 2.232 & 321.96 & - & - & - & - & - & -  \\
				\hline
				\multirow{6}{*}{$P_2$} & 3 & - & - & 0.058 & 10.74 & 0.068 & 22.73 & 0.031 & 3.22  \\
				{} & 4 & - & - & 0.583 & 40.29 & 0.562 & 52.61 & 0.125 & 3.73  \\
				{} & 5 & - & - & 5.164 & 62.29 & 5.295 & 74.62 & 0.444 & 4.05  \\
				{} & 6 & - & - & 39.085 & 81.85 & 37.969 & 93.96 & 1.373 & 4.71  \\
				{} & 7 & - & - & 246.123 & 229.07 & 241.375 & 241.15 & 3.745 & 6.03  \\
				{} & 8 & - & - & - & - & - & - & 9.154 & 7.61  \\
				{} & 9 & - & - & - & - & - & - & 19.997 & 10.07  \\
				{} & 10 & - & - & - & - & - & - & 40.331 & 13.05  \\
				\hline
				\multirow{6}{*}{$P_3$} & 3 & - & - & 0.045 & 10.03 & 0.071 & 22.32 & 0.022 & 3.20  \\
				{} & 4 & - & - & 0.296 & 22.46 & 0.299 & 34.96 & 0.820 & 13.11  \\
				{} & 5 & - & - & 2.357 & 59.31 & 2.526 & 71.63 & 111.96 & 676.29  \\
				{} & 6 & - & - & 22.147 & 80.49 & 21.304 & 92.93 & - & -  \\
				{} & 7 & - & - & 147.567 & 224.17 & 141.134 & 236.74 & - & -  \\
				{} & 8 & - & - & - & - & - & - & - & -  \\
				{} & 9 & - & - & - & - & - & - & - & -  \\
				{} & 10 & - & - & - & - & - & - & - & -  \\
				\hline
				\multirow{6}{*}{$P_4$} & 3 & 0.158 & 322.09 & 0.066 & 10.00 & 0.171 & 22.32 & 0.024 & 3.24  \\
				{} & 4 & 0.190 & 322.05 & 0.356 & 22.46 & 0.367 & 34.95 & 0.104 & 3.82  \\
				{} & 5 & 0.263 & 322.04 & 2.726 & 59.31 & 2.781 & 71.63 & 0.385 & 3.99  \\
				{} & 6 & 0.385 & 322.07 & 27.013 & 80.48 & 24.794 & 94.95 & 1.289 & 4.57  \\
				{} & 7 & 0.528 & 322.07 & 181.007 & 224.16 & 166.725 & 236.61 & 3.727 & 5.29  \\
				{} & 8 & 0.815 & 322.14 & - & - & - & - & 9.525 & 7.14  \\
				{} & 9 & 1.138 & 322.19 & - & - & - & - & 21.568 & 9.31  \\
				{} & 10 & 1.574 & 321.98 & - & - & - & - & 45.097 & 12.95  \\
				\hline
			\end{tabular}
		
	\tabcaption{Time and memory usage in benchmark \#4 (Ring algorithms).}
	\label{tabl:data:ring}
\end{figure}

\section{Proof of soundness and completeness of \SCTL{}}\label{appendix:proof:sound:complete}
Proposition~\ref{prop:fis} and \ref{prop:fit} below permit to transform finite structures into infinite ones and will be used in the Soundness proof, while Proposition~\ref{prop:ifs} and \ref{prop:ift} permit to transform infinite structures into finite ones and will be used in the Completeness proof.

\begin{proposition}[Finite to infinite sequences]\label{prop:fis}
	Let $s_0,...,s_n$ be a finite sequence of states such that for all $i$
	between $0$ and $n-1$, $s_i\longrightarrow s_{i+1}$, and $s_n=s_p$ for
	some $p$ between $0$ and $n-1$. Then there exists an infinite sequence
	of states $s_0',s_1',...$ such that 
$s_0 = s_0'$ and
for all $i$, $s_i'\longrightarrow
	s_{i+1}'$, and all the $s_j'$ are among $s_0,...,s_n$.
\end{proposition}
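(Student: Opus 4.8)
The plan is to exploit the ``lasso'' shape of the given sequence: it consists of a stem $s_0, s_1, \ldots, s_{p-1}$ leading into a cycle $s_p, s_{p+1}, \ldots, s_{n-1}, s_n$ whose two ends coincide, since $s_n = s_p$. The infinite sequence I want to build will simply follow the stem once and then traverse the cycle $s_p, s_{p+1}, \ldots, s_{n-1}$ repeatedly, forever looping back from $s_{n-1}$ to $s_p$. Note that because $p \le n-1$ we have $n - p \ge 1$, so the cycle is nonempty and this repetition is well defined.

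Concretely, first I would define the sequence $s_0', s_1', \ldots$ by
\[
s_i' =
\begin{cases}
s_i & \text{if } i < p,\\
s_{\,p + ((i-p)\bmod(n-p))} & \text{if } i \ge p.
\end{cases}
\]
Every $s_i'$ is then one of $s_0, \ldots, s_{n-1}$, which immediately gives the membership condition that all $s_j'$ are among $s_0, \ldots, s_n$; and evaluating the first case at $i = 0$ when $p \ge 1$, or the second case at $i = 0$ when $p = 0$, yields $s_0' = s_0$ in either situation.

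Next I would verify the transition condition $s_i' \longrightarrow s_{i+1}'$ by cases. When both $i$ and $i+1$ lie strictly below $p$, or when $i = p-1$ and $i+1 = p$, the two terms are consecutive entries of the original sequence and the transition is inherited directly. For $i \ge p$, writing $k = (i-p)\bmod(n-p)$, there are two subcases: if $k < n-p-1$, then $s_i' = s_{p+k}$ and $s_{i+1}' = s_{p+k+1}$, again a consecutive pair of the original sequence; if $k = n-p-1$, the index wraps around, giving $s_i' = s_{n-1}$ and $s_{i+1}' = s_p$.

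The whole argument is essentially bookkeeping over the definition, and the only delicate point—the hard part, such as it is—will be the wraparound step, where one must invoke the hypothesis $s_n = s_p$ to convert the available transition $s_{n-1} \longrightarrow s_n$ into the required $s_{n-1} \longrightarrow s_p$. Everything else reduces to transitions already present in the finite sequence, so once the modular indexing is set up correctly the verification is routine.
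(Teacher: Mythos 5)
Your construction is exactly the paper's: the paper's proof is the one-line ``take the sequence $s_0,\ldots,s_{p-1},s_p,\ldots,s_{n-1},s_p,\ldots$'', i.e.\ stem once, then loop the cycle forever, and your modular-index formula is just a fully explicit rendering of that same sequence. The verification, including the wraparound step using $s_n=s_p$, is correct, so this is the same argument spelled out in more detail.
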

\begin{proof}
	Take the sequence $s_0,...,s_{p-1},s_p,...,s_{n-1},s_p,...$, where $s_0 = s_0'$.
	\label{proof:prop:fis}
\end{proof}

\begin{proposition}[Finite to possibly infinite trees]\label{prop:fit}
	Let $\Phi$ be a set of states and $T$ be a finite tree labeled by
	states such that, for each internal node $s$, the immediate successors
	of $s$ are the elements of $\textsf{Next}(s)$ and each leaf is labeled with a
	state which is either in $\Phi$ or also a label of a node on the
	branch from the root of $T$ to this leaf. Then, there exists an
possibly
infinite tree $T'$ labeled by states such that for each internal node
	$s$ the successors of $s$ are the elements of $\textsf{Next}(s)$, all the
	leaves are labeled by elements of $\Phi$, and all the labels of $T'$
	are the labels $T$.
\end{proposition}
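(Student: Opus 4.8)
The plan is to view the finite tree $T$ as a finite \emph{looped} presentation of the desired tree and to obtain $T'$ by unravelling this loop, exactly as Proposition~\ref{prop:fis} unravels a looping finite sequence into an infinite one. Write $\lambda(v)$ for the label of a node $v$ of $T$, and call a leaf $\ell$ of $T$ \emph{good} if $\lambda(\ell)\in\Phi$ and \emph{bad} otherwise. By hypothesis, if $\ell$ is bad then $\lambda(\ell)=\lambda(a)$ for some node $a$ strictly above $\ell$ on the branch from the root to $\ell$; fix one such $a$ and write $a=\mathrm{anc}(\ell)$. Since $\mathrm{anc}(\ell)$ lies strictly above $\ell$, it is an internal node of $T$, and hence its children are labelled by the elements of $\textsf{Next}(\lambda(\ell))$.

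Next I would build a finite directed graph $G$ on the node set of $T$. For each internal node $v$ of $T$ with children $c_1,\dots,c_m$ (so that $\{\lambda(c_1),\dots,\lambda(c_m)\}=\textsf{Next}(\lambda(v))$), I declare the successors of $v$ in $G$ to be those $c_i$ that are internal or good leaves, together with $\mathrm{anc}(c_i)$ for each bad child $c_i$; good leaves are given no successors. The crucial observation is that this redirection \emph{preserves labels}: since $\lambda(\mathrm{anc}(c_i))=\lambda(c_i)$, the labels of the $G$-successors of $v$ are still exactly the elements of $\textsf{Next}(\lambda(v))$. Thus in $G$ every non-sink vertex $v$ has successors labelled precisely by $\textsf{Next}(\lambda(v))$, and every sink is a good leaf.

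I would then take $T'$ to be the unravelling of $G$ from the root $r$ of $T$: the nodes of $T'$ are the finite directed paths in $G$ starting at $r$, a path is the parent of each of its one-step extensions, and the label of a path is the label of its final vertex. The three required properties are read off directly. Every internal node of $T'$ is a path ending at a non-sink vertex $v$, and its children are its one-step extensions, which by construction are labelled exactly by $\textsf{Next}(\lambda(v))$; hence each internal node has precisely its $\textsf{Next}$-successors as children. Every leaf of $T'$ is a path ending at a sink, i.e. at a good leaf, so it is labelled by an element of $\Phi$. Finally, every label occurring in $T'$ is $\lambda(v)$ for some vertex $v$ of $G$, i.e. for a node of $T$, so $T'$ introduces no new labels. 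The tree $T'$ is genuinely infinite exactly along the branches that terminated at bad leaves, since these now loop forever through $G$.

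The routine bookkeeping concerning paths, parents and labels is straightforward; the only point that deserves care is the label-preservation of the redirection step, which is what guarantees that the reconnected children of an internal node still form exactly $\textsf{Next}(\lambda(v))$ and hence that $T'$ is a legitimate path-tree over ${\cal M}$. A secondary subtlety is the reading of the hypothesis: ``a label of a node on the branch'' must be taken to mean a \emph{proper} ancestor (otherwise every leaf would trivially qualify), which is precisely what makes $\mathrm{anc}(\ell)$ an internal node, and thus a non-sink of $G$ with its full successor set available.
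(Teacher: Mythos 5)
Your proof is correct, and it arrives at the same place as the paper's by a slightly different, more explicit route. The paper defines $T'$ semantically in one stroke: start from the root's label and, at each node labeled $s$, stop if $s\in\Phi$ and otherwise attach children labeled by $\textsf{Next}(s)$; the only substantive point, that every label so produced already occurs in $T$, is dismissed as ``easy to check'' (it is the induction along branches that your construction makes structural). You instead build an intermediate finite graph $G$ on the nodes of $T$ by redirecting each edge into a bad leaf toward the like-labeled internal ancestor, and then unravel $G$ from the root. The redirection step is exactly where the paper's ``easy to check'' lives, and your label-preservation observation discharges it cleanly; the price is a bit of bookkeeping about paths and parents. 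The two constructions do not produce literally the same tree --- the paper's $T'$ truncates at \emph{every} occurrence of a $\Phi$-state, whereas yours truncates only at positions corresponding to good leaves of $T$ --- but both satisfy the three required properties, since the conclusion only demands that all leaves lie in $\Phi$, not that every $\Phi$-state be a leaf. Finally, your remark that ``a label of a node on the branch'' must be read as a \emph{proper} ancestor is well taken: under the literal reading the hypothesis is vacuous for every leaf and the proposition is false in degenerate cases (e.g.\ a single-node $T$ with $\Phi=\emptyset$), and the paper's proof silently relies on the same reading, which is the one that matches its use in the soundness argument for $AR$ via the merge rule.
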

\begin{proof}
	Consider for $T'$ the tree whose root is labeled by the root
	of $T$ and such that for each node $s$, if $s$ is in $\Phi$,
	then $s$ is a leaf of $T'$, otherwise the successors of $s$
	are the elements of $\textsf{Next}(s)$. It is easy to check that all
	the nodes of $T'$ are labeled by labels of $T$.
	\label{proof:prop:fit}
\end{proof}

\begin{proposition}[Infinite to finite sequences]\label{prop:ifs}
	Let $s_0,s_1,...$ be an infinite sequence of states such that for all
	$i$, $s_i\longrightarrow s_{i+1}$. Then there exists a finite sequence
	of states $s_0',...,s_n'$ such that for all $i$ between $0$ and $n-1$,
	$s_i'\longrightarrow s_{i+1}'$, $s_n'=s_p'$ for some $p$ between $0$
	and $n-1$, and all the $s_j'$ are among $s_0,s_1,...$
\end{proposition}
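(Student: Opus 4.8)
The plan is to reduce the given infinite sequence to a finite looping one by exploiting the finiteness of the state set $S$, which is guaranteed by the definition of a Kripke model. The only tool I need is the pigeonhole principle: any infinite sequence of elements drawn from a finite set must repeat some value. The proposition is exactly the dual of Proposition~\ref{prop:fis}, which converts such a finite looping sequence back into an infinite one, so together they witness that finite and infinite looping sequences carry the same information in a finite model.

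Concretely, I would first let $N$ be the (finite) number of states of $\mathcal{M}$ and consider the initial segment $s_0, s_1, \ldots, s_N$ of the given sequence. This segment consists of $N+1$ states, all lying in the $N$-element set $S$, so by pigeonhole there exist indices $p$ and $q$ with $0 \le p < q \le N$ and $s_p = s_q$. Next I would set $n = q$ and define the finite sequence by $s_j' = s_j$ for $0 \le j \le n$. It then remains to check the three required properties: first, for each $i$ between $0$ and $n-1$ we have $s_i' \longrightarrow s_{i+1}'$, which is inherited directly from the hypothesis $s_i \longrightarrow s_{i+1}$ on the infinite sequence; second, $s_n' = s_q = s_p = s_p'$, and since $p < q = n$ we indeed have $0 \le p \le n-1$; and third, every $s_j'$ equals some $s_i$ by construction, so all the $s_j'$ are among $s_0, s_1, \ldots$

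There is essentially no hard step here; the heart of the argument is a one-line pigeonhole observation, and the finiteness of $S$ is the only nontrivial ingredient. The single point that deserves a moment of care is the strict inequality $p \le n-1$ demanded by the statement, which is secured by choosing a genuine repetition $p < q$ rather than the trivial index $p = q$. If one wanted the shortest such loop, one could instead take $q$ minimal with $s_q \in \{s_0, \ldots, s_{q-1}\}$ and let $p$ be the corresponding earlier index, but such minimality is not needed for the conclusion.
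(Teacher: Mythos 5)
Your proof is correct and takes essentially the same route as the paper: both invoke the finiteness of the state set (pigeonhole) to find $p<q$ with $s_p=s_q$ and then truncate the infinite sequence at $s_q$. Your version merely spells out the pigeonhole bound and the verification of the three required properties, which the paper leaves implicit.
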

\begin{proof}
	As the number of states is finite, there exists $p$ and $n$ such that $p<n$ and $s_p=s_n$. Take the sequence $s_0,...,s_n$.
	\label{proof:prop:ifs}
\end{proof}

\begin{proposition}[Possibly infinite to finite trees]\label{prop:ift}
	Let $\Phi$ be a set of states and $T$ be an 
possibly
infinite tree labeled by
	states such that for each internal node $s$ the successors of $s$ are
	the elements of $\textsf{Next}(s)$ and each leaf is labeled by a state in
	$\Phi$. Then, there exists a finite tree labeled by states such that
	for each internal node $s$ the successors of $s$ are the elements of
	$\textsf{Next}(s)$ and each leaf is labeled with a state which is either in
	$\Phi$ or also a label of a node on the branch from the root of $T$ to
	this leaf.
\end{proposition}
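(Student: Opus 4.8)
The plan is to build the finite tree $T'$ by pruning $T$: cut every branch of $T$ immediately below the first node whose label repeats the label of an ancestor, so that the finiteness of $S$ forces every branch to be short. First I would call a node $v$ of $T$ a \emph{stop node} if its label coincides with the label of some strict ancestor of $v$ on the branch from the root to $v$, and then take $T'$ to be the set of nodes $v$ of $T$ such that no strict ancestor of $v$ is a stop node. In particular the root lies in $T'$; every stop node that lies in $T'$ has all of its children removed, since each such child has it as a strict ancestor; and conversely, if $v\in T'$ is not a stop node, then each $T$-child of $v$ again lies in $T'$, because the strict ancestors of such a child are those of $v$ together with $v$ itself, none of which is a stop node.

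Next I would verify the two required properties. If $s$ is an internal node of $T'$, then $s$ cannot be a stop node, since a stop node in $T'$ has no children in $T'$; hence by the remark above the children of $s$ in $T'$ are exactly its children in $T$, namely the elements of $\textsf{Next}(s)$. For the leaves: a leaf of $T'$ is either already a leaf of $T$, and so labeled by a state of $\Phi$ by hypothesis, or it is a stop node, in which case its label equals the label of one of its strict ancestors, i.e.\ a node on the branch from the root of $T$ to this leaf. This is precisely the condition demanded of the leaves.

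Finally, the crux is finiteness, and the hard part will be exactly here, since this is the only place the argument genuinely uses the finiteness of the Kripke model. Along any branch $v_0,v_1,\dots$ of $T'$ starting at the root, the labels of the nodes strictly preceding the first stop node are pairwise distinct; as there are only $n=|S|$ states, the first stop node occurs at depth at most $n$ and, in $T'$, becomes a leaf. Hence every branch of $T'$ has at most $n+1$ nodes. Since $\textsf{Next}(s)\subseteq S$ is finite, $T'$ is finitely branching, with branching bounded by $n$, and a finitely branching tree of depth at most $n+1$ is finite; if one prefers, this is K\"onig's lemma in contrapositive form, a finitely branching tree with no infinite branch being finite. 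Thus $T'$ is a finite tree with the desired properties, which completes the construction.
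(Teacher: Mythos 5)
Your proof is correct and follows essentially the same route as the paper's: prune each branch at the first node whose label repeats that of an ancestor, check the internal-node and leaf conditions, and conclude finiteness from finite branching together with the absence of infinite branches. The only difference is cosmetic — you make the pruning precise via ``stop nodes'' and give an explicit depth bound of $n+1$, which lets you avoid invoking K\"onig's lemma, whereas the paper states the pruning more tersely and appeals to K\"onig's lemma directly.
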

\begin{proof}
	As the number of states is finite, on each infinite branch, there exists $p$ and $n$ such that $p<n$ and $s_p=s_n$. Prune the tree at node $s_n$. This tree is finitely branching and each branch is finite, hence, by K\"onig's lemma, it is finite.
	\label{proof:prop:ift}
\end{proof}

\begin{theorem}[Soundness]\label{thm:sound}
	Let $\phi$ be a closed formula. If the sequent $\vdash \phi$ has a
	proof $\pi$, then $\models \phi$.
\end{theorem}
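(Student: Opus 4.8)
The plan is to prove the statement by induction on the structure (equivalently, the size) of the proof $\pi$ of $\vdash\phi$, with induction hypothesis that every closed formula possessing a strictly smaller proof of its empty-context sequent is valid. The axiom-like rules $\mbox{atom-}\textsf{R}$, $\neg\textsf{-R}$ and $\top\textsf{-R}$ are immediate from their side conditions and the validity clauses of Figure~\ref{validity}. For the propositional connectives and the next-time modalities, the last rule decomposes $\phi$ into subformulae whose empty-context sequents have strictly smaller subproofs, so applying the induction hypothesis and reading off the matching validity clause settles $\wedge$, $\vee$, $AX$ and $EX$ directly. The inductive modalities $AF$ and $EU$ carry no merge rule, so their proofs are well founded: an $AF$-proof is exactly a finite tree whose internal nodes branch over $\nextstate{s'}$ and whose leaves, by the induction hypothesis applied to the $\mathbf{AF}\textsf{-}\mathsf{R_1}$ premises, satisfy $\phi$, which is the witness demanded by the validity clause; an $EU$-proof yields a finite path along which $\phi_1$ holds until a final state satisfying $\phi_2$, and totality of $\lra$ lets me extend it to the infinite path required for validity.

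The heart of the argument is the two co-inductive modalities $EG$ and $AR$, where the merge rules keep the proof finite even though validity asks for an infinite (or possibly infinite) structure. Here I will not apply the induction hypothesis to the intermediate sequents themselves, since these carry a non-empty context $\Gamma$; instead I analyse the whole co-inductive portion of $\pi$ at once and invoke the induction hypothesis only on the empty-context left premises. For $EG$, a proof of $\vdash EG_x(\phi)(s)$ consists of a spine of $\mathbf{EG}\textsf{-R}$ applications tracing a finite sequence $s=s_0\lra s_1\lra\cdots\lra s_m$, where each step comes from the side condition $s_{i+1}\in\nextstate{s_i}$, each left premise $\vdash(s_i/x)\phi$ certifies $\models(s_i/x)\phi$ by the induction hypothesis, and the spine is closed by $\mathbf{EG}\textsf{-merge}$ exactly when $EG_x(\phi)(s_m)\in\Gamma_m$, i.e. when $s_m=s_j$ for some $j<m$. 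Thus all of $s_0,\dots,s_m$ satisfy $\phi$ and the sequence loops, so Proposition~\ref{prop:fis} produces an infinite path from $s$ whose states all lie among $s_0,\dots,s_m$, which is precisely the witness for $\models EG_x(\phi)(s)$.

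The $AR$ case is the tree-shaped analogue and is where the bookkeeping is most delicate. A proof of $\vdash AR_{x,y}(\phi_1,\phi_2)(s)$ determines a finite state-labelled tree whose internal nodes arise from $\mathbf{AR}\textsf{-}\mathsf{R_1}$ (branching over $\nextstate{s'}$, with the premise $\vdash(s'/y)\phi_2$ giving $\models(s'/y)\phi_2$), whose $\mathbf{AR}\textsf{-}\mathsf{R_2}$ leaves satisfy both $\models(s'/x)\phi_1$ and $\models(s'/y)\phi_2$, and whose $\mathbf{AR}\textsf{-merge}$ leaves are labelled by a state already occurring on the branch back to the root, since $\Gamma$ records exactly the $\mathbf{AR}\textsf{-}\mathsf{R_1}$ ancestors. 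Taking $\Phi=\{t\mid\;\models(t/x)\phi_1\}$, this tree meets the hypotheses of Proposition~\ref{prop:fit}, which yields a possibly infinite tree $T'$ rooted at $s$ whose internal nodes branch over $\nextstate{\cdot}$, whose leaves lie in $\Phi$, and all of whose labels already label the finite tree. The key point I must check is that every label of the finite tree satisfies $\phi_2$: this holds for $\mathbf{AR}\textsf{-}\mathsf{R_1}$ and $\mathbf{AR}\textsf{-}\mathsf{R_2}$ nodes directly, and for a merge leaf because its label coincides with an $\mathbf{AR}\textsf{-}\mathsf{R_1}$ ancestor that already established $\phi_2$. Hence every node of $T'$ satisfies $\phi_2$ and every leaf satisfies $\phi_1$, so $T'$ witnesses $\models AR_{x,y}(\phi_1,\phi_2)(s)$. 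The main obstacle throughout is precisely this interaction with the contexts $\Gamma$: one must verify that the merge rules only ever close a branch at a state whose required formula was genuinely established earlier on the same branch, so that Propositions~\ref{prop:fis} and~\ref{prop:fit} can legitimately fold the finite certificate into the infinite structure demanded by the validity clauses.
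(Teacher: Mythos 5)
Your proposal is correct and follows essentially the same route as the paper's proof: induction on the proof $\pi$, extracting from the $AF$/$EU$ subproofs the finite witnessing tree/path directly, and from the $EG$/$AR$ subproofs a finite looping sequence/tree that Propositions~\ref{prop:fis} and~\ref{prop:fit} then unfold into the infinite structures required by the validity clauses, with the induction hypothesis applied only to the strictly smaller empty-context premises. Your explicit check that a merge leaf's state coincides with an ancestor on the same branch (so that $\phi$, resp.\ $\phi_2$, was already established there) is exactly the justification the paper's proof relies on implicitly.
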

\begin{proof}
	By induction on the structure of the proof $\pi$.
	\begin{itemize}
		\item If the last rule of $\pi$ is \textsf{atom-R}, then the proved sequent has the form $\vdash P(s_1,...,s_n)$, hence $\models P(s_1,...,s_n)$.
		\item If the last rule of $\pi$ is $\neg$-\textsf{R}, then the proved sequent has the form $\vdash \neg P(s_1,...,s_n)$, hence $\models \neg P(s_1,...,s_n)$.
		\item If the last rule of $\pi$ is $\top$-\textsf{R}, the proved sequent has the form $\vdash \top$ and hence $\models \top$.
		\item If the last rule of $\pi$ is $\wedge$-\textsf{R}, then the proved sequent has the form $\vdash \phi_1\wedge \phi_2$. By induction hypothesis $\models \phi_1$ and $\models \phi_2$, hence $\models \phi_1\wedge \phi_2$.
		\item If the last rule of $\pi$ is $\vee$-$\mathsf{R_1}$ or $\vee$-$\mathsf{R_2}$, then the proved sequent has the form $\vdash \phi_1\vee\phi_2$. By induction hypothesis $\models \phi_1$ or $\models \phi_2$, hence $\models \phi_1\vee\phi_2$.
		\item If the last rule of $\pi$ is $\mathbf{AX}$-\textsf{R}, then the proved sequent has the form $\vdash AX_x(\phi_1)(s)$. By induction hypothesis, for each $s'$ in $\textsf{Next}(s)$, such that $\models (s'/x)\phi_1$, hence $\models AX_x(\phi_1)(s)$.
		\item If the last rule of $\pi$ is $\mathbf{EX}$-\textsf{R}, then the proved sequent has the form $\vdash EX_x(\phi_1)(s)$. By induction hypothesis, for each $s'$ in $\textsf{Next}(s)$, $\models (s'/x)\phi_1$, hence $\models EX_x(\phi_1)(s)$.
		\item If the last rule of $\pi$ is $\mathbf{AF}$-$\mathsf{R_1}$ or $\mathbf{AF}$-$\mathsf{R_2}$, then the proved sequent has the form $\vdash AF_x(\phi_1)(s)$. We associate a finite tree $|\pi|$ to the proof $\pi$ by induction in the following way.
		\begin{itemize}
			\item If the proof $\pi$ ends with the $\mathbf{AF}$-$\mathsf{R_1}$ rule with a subproof $\rho$ of the sequent $\vdash (s/x)\phi_1$, then the tree contains a single node $s$.
			\item IF the proof $\pi$ ends with the $\mathbf{AF}$-$\mathsf{R_2}$ rule, with subproofs $\pi_1,...,\pi_n$ of the sequent \\$\vdash AF_x(\phi_1)(s_1),...,\vdash AF_x(\phi_1)(s_n)$, respectively, then $|\pi|$ is the tree $s(|\pi_1|,...,|\pi_n|)$.
		\end{itemize}
		The tree $|\pi|$ has root $s$; for each internal node $s'$, the children of this node are labeled by elements of $\textsf{Next}(s')$; and for each leaf $s'$ the sequent $\vdash (s'/x)\phi_1$ has a proof smaller than $\pi$. By induction hypothesis, for each leaf $s'$ of $|\pi|$, $\models (s'/x)\phi_1$. Hence $\models AF_x(\phi_1)(s)$.
		\item If the last rule of $\pi$ is $\mathbf{EG}$-\textsf{R}, then the proved sequent has the form $\vdash EG_x(\phi_1)(s)$. We associate a finite sequence $|\pi|$ to the proof $\pi$ by induction in the following way.
		\begin{itemize}
			\item If the proof $\pi$ ends with the $\mathbf{EG}$-merge rule, then the sequence contains a single element $s$.
			\item If the proof $\pi$ ends with the $\mathbf{EG}$-\textsf{R} rule, with subproofs $\rho$ and $\pi_1$ of the sequents $\vdash (s/x)\phi_1$ and $\Gamma, EG_x(\phi_1)(s)\vdash EG_x(\phi_1)(s')$, respectively, then $|\pi|$ is the sequence $s|\pi_1|$.
		\end{itemize}
		The sequent $|\pi| = s_0,s_1,...,s_n$ is such that $s_0=s$; for all $i$ between $0$ and $n-1$, $s_i\longrightarrow s_{i+1}$; for all $i$ between $0$ and $n$, the sequent $\vdash (s_i/x)\phi_1$ has a proof smaller than $\pi$; and $s_n$ is equal to $s_p$ for some $p$ between $0$ and $n-1$. By induction hypothesis, for all $i$, we have $\models (s_i/x)\phi_1$. Using Proposition \ref{prop:fis}, there exists an infinite sequence $s_0',s_1',...$ such that for all $i$, we have $s_i'\longrightarrow s_{i+1}'$, and $\models (s_i'/x)\phi_1$. Hence, $\models EG_x(\phi_1)(s)$.
		\item If the last rule of $\pi$ is $\mathbf{AR}$-$\mathsf{R_1}$ or $\mathbf{AR}$-$\mathsf{R_2}$, then the proved sequent has the form $\vdash AR_x(\phi_1,\phi_2)(s)$. We associate a finite tree $|\pi|$ to the proof $\pi$ by induction in the following way.
		\begin{itemize}
			\item If the proof $\pi$ ends with the $\mathbf{AR}$-$\mathsf{R_1}$ rule with subproofs $\rho_1$ and $\rho_2$ of the sequents $\vdash (s/x)\phi_1$  and $\vdash (s/x)\phi_2$, respectively, or with the
			$\mathbf{AR}$-merge rule, then the tree contains a single node $s$.
			\item If the proof $\pi$ ends with the $\mathbf{AR}$-$\mathsf{R_2}$ rule, with subproofs $\rho, \pi_1,...,\pi_n$ of the sequents $\vdash (s/y)\phi_2$, $\Gamma, AR_{x,y}(\phi_1, \phi_2)(s) \vdash AR_{x,y}(\phi_1, \phi_2)(s_1),...,$ \\$\Gamma, AR_{x,y}(\phi_1, \phi_2)(s)\vdash AR_{x,y}(\phi_1, \phi_2)(s_n)$, respectively, then $|\pi|$ is the tree $s(|\pi_1|, ..., |\pi_n|)$.
		\end{itemize}
		The tree $|\pi|$ has root $s$; for each internal node $s'$, the children of this node are labeled by the elements of 
		$\textsf{Next}(s')$; for each node $s'$ of $|\pi|$, the sequent $\vdash (s'/y)\phi_2$ has a proof smaller than $\pi$; and for each leaf $s'$, either the sequent $\vdash (s'/x)\phi_1$ has a proof smaller than $\pi$, or $s'$ is also a label of a node on the branch from the root of $|\pi|$ to this leaf. By induction hypothesis, for each node $s'$ of this tree $\models (s'/y)\phi_2$ and for each leaf $s'$, either $\models (s'/x)\phi_1$ or $s'$ is also a label of a node on the branch from the root of $|\pi|$ to this leaf. Using Proposition \ref{prop:fit}, there exists a
possibly
infinite tree $T'$ labeled by states such that for each internal node $s$ the successors of $s$ are the elements of $\textsf{Next}(s)$, for each node $s'$ of $T'$, $\models (s'/y)\phi_2$, and for each leaf $s'$ of $T'$, $\models (s'/x)\phi_1$. Thus, $\models AR_{x,y}(\phi_1, \phi_2)(s)$.
		\item If the last rule of $\pi$ is $\mathbf{EU}$-$\mathsf{R_1}$ or $\mathbf{EU}$-$\mathsf{R_2}$, then the proved sequent has the form $\vdash EU_{x,y}(\phi_1, \phi_2)(s)$. We associate a finite sequence $|\pi|$ to the proof $\pi$ by induction in the following way.
		\begin{itemize}
			\item If the proof $\pi$ ends with the $\mathbf{EU}$-$\mathsf{R_1}$ rule with a subproof $\rho$ of the sequent $\vdash (s/y)\phi_2$, then the sequence contains a single element $s$.
			\item If the proof $\pi$ ends with the $\mathbf{EU}$-$\mathsf{R_2}$ rule, with subproofs $\rho$ and $\pi_1$ of the sequents $\vdash (s/x)\phi_1$ and $\vdash EU_{x,y}(\phi_1, \phi_2)(s')$, respectively, then $|\pi|$ is the sequence $s|\pi_1|$.
		\end{itemize}
		The sequence $|\pi| = s_0, ..., s_n$ is such that $s_0 = s$; for each $i$ between $0$ and $n-1$, $s_i \longrightarrow s_{i+1}$; for each $i$ between $0$ and $n - 1$, the sequent $\vdash (s_i/x)\phi_1$ has a proof smaller than $\pi$; and the sequent $\vdash (s_n/y)\phi_2$ has a proof smaller than $\pi$. By induction hypothesis, for each $i$ between $0$ and $n-1$, $\models (s_i/x)\phi_1$ and $\models (s_n/y)\phi_2$. Hence, $\models EU_{x,y}(\phi_1, \phi_2)(s)$.
		\item The last rule cannot be a merge rule.
	\end{itemize}
\end{proof}

\begin{theorem}[Completeness]\label{thm:complete}
	Let $\phi$ be a closed formula. If $\models \phi$ then the sequent
	$\vdash \phi$ is provable.
\end{theorem}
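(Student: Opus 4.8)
The plan is to prove the statement by induction on the structure of the closed formula $\phi$, mirroring and dualizing the soundness argument of Theorem~\ref{thm:sound}; the finiteness of $\cal M$ enters through Propositions~\ref{prop:ifs} and~\ref{prop:ift}, which replace the infinite witnesses supplied by validity with finite ones that can be read off as proofs. The base and propositional cases are immediate from the definition of validity in Figure~\ref{validity}: if $\models P(s_1,\dots,s_n)$ then $\langle s_1,\dots,s_n\rangle\in P$ and \textsf{atom-R} applies (dually for $\neg P$, and $\top$ is trivial), while for $\wedge$ and $\vee$ the induction hypothesis on the strict subformulas together with the corresponding right rule suffices. The cases $AX$ and $EX$ are equally direct: validity gives $\models (s'/x)\phi_1$ for all, resp. for some, $s'\in\textsf{Next}(s)$, and the induction hypothesis turns these into proofs of the premises of $\mathbf{AX}$-\textsf{R}, resp. $\mathbf{EX}$-\textsf{R}.

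For the inductive modalities $AF$ and $EU$, validity already provides a \emph{finite} witness, so no finitization is needed. If $\models AF_x(\phi_1)(s)$, there is a finite tree $T$ rooted at $s$ whose internal nodes branch to their $\textsf{Next}$-successors and whose leaves $s'$ satisfy $\models (s'/x)\phi_1$; I would build the proof by a subordinate induction on $T$, applying $\mathbf{AF}$-$\mathsf{R_2}$ at each internal node and $\mathbf{AF}$-$\mathsf{R_1}$ at each leaf, where the remaining premise $\vdash (s'/x)\phi_1$ is discharged by the structural induction hypothesis. The $EU$ case is the path-analogue: the witness is a finite path $s=s_0,\dots,s_n$ with $\models (s_i/x)\phi_1$ for $i<n$ and $\models (s_n/y)\phi_2$, which I read off as repeated applications of $\mathbf{EU}$-$\mathsf{R_2}$ closed by $\mathbf{EU}$-$\mathsf{R_1}$.

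The co-inductive modalities $EG$ and $AR$ are where the finiteness of the model does the real work, and I expect this to be the main obstacle. Here validity only guarantees an infinite sequence (for $EG$) or a possibly infinite tree (for $AR$), so the witness cannot be transcribed directly into a finite proof. For $EG_x(\phi_1)(s)$ I would feed the infinite path to Proposition~\ref{prop:ifs} to obtain a finite sequence $s_0',\dots,s_n'$ with $s_0'=s$ and $s_n'=s_p'$ for some $p<n$, all of whose states still satisfy $\phi_1$; applying $\mathbf{EG}$-\textsf{R} along this sequence, threading the context $\Gamma$, and closing at $s_n'$ with $\mathbf{EG}$-\textsf{merge} yields the proof, the premises $\vdash (s_i'/x)\phi_1$ being provable by the induction hypothesis. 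For $AR$ I would apply Proposition~\ref{prop:ift} with $\Phi=\{s'\mid\,\models (s'/x)\phi_1\}$ to prune the possibly infinite tree to a finite one whose every leaf is either in $\Phi$ or a repeat of an ancestor's label; internal nodes become $\mathbf{AR}$-$\mathsf{R_1}$ steps, leaves in $\Phi$ become $\mathbf{AR}$-$\mathsf{R_2}$ steps, and repeated leaves become $\mathbf{AR}$-\textsf{merge} steps, with all side premises of the form $\vdash (s'/y)\phi_2$ and $\vdash (s'/x)\phi_1$ discharged by the induction hypothesis (using that every node label of the pruned tree is a label of the original tree, hence still satisfies $\phi_2$). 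The delicate point to verify is precisely that the loops produced by Propositions~\ref{prop:ifs} and~\ref{prop:ift} repeat a state that genuinely occurs earlier \emph{on the same branch} (resp. earlier in the sequence), so that the merge rule's side condition $EG_x(\phi_1)(s)\in\Gamma$ (resp. $AR_{x,y}(\phi_1,\phi_2)(s)\in\Gamma$) is satisfied when the accumulated context $\Gamma$ is threaded through the construction.
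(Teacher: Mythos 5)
Your proposal is correct and follows essentially the same route as the paper's own proof: structural induction on $\phi$, direct transcription of the finite witnesses for $AF$ and $EU$, and finitization via Propositions~\ref{prop:ifs} and~\ref{prop:ift} for $EG$ and $AR$, closing the loops with the merge rules whose side conditions are met because the repeated state occurs earlier in the threaded context. The only cosmetic difference is that your use of the labels $\mathbf{AR}$-$\mathsf{R_1}$/$\mathbf{AR}$-$\mathsf{R_2}$ follows Figure~\ref{sctl_rules} (unfold vs.\ leaf), whereas the paper's appendix swaps these two names.
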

\begin{proof}
	By induction over the size of $\phi$.
	\begin{itemize}
		\item If $\phi = P(s_1,...,s_n)$, then as $\models P(s_1,...,s_n)$, the sequent $\vdash P(s_1,...,s_n)$ is provable with the rule \textsf{atom-R}.
		\item If $\phi = \neg P(s_1,...,s_n)$, then as $\models \neg P(s_1,...,s_n)$, the sequent $\vdash \neg P(s_1,...,s_n)$ is provable with the rule $\neg$-\textsf{R}.
		\item If $\phi = \top$, then $\vdash \top$ is provable with the rule $\top$-\textsf{R}.
		\item If $\phi = \bot$, then it is not the case that $\models \bot$.
		\item If $\phi = \phi_1\wedge\phi_2$, then as $\models \phi_1\wedge\phi_2$, $\models \phi_1$ and $\models \phi_2$. By induction hypothesis, the sequents $\vdash \phi_1$ and $\vdash \phi_2$ are provable. Thus the sequent $\vdash \phi_1\wedge\phi_2$ is provable with the $\wedge$-\textsf{R} rule.
		\item If $\phi = \phi_1\vee\phi_2$, as $\models \phi_1\vee\phi_2$, $\models \phi_1$ or $\models \phi_2$. By induction hypothesis, the sequent $\vdash \phi_1$ or $\vdash \phi_2$ is provable and the sequent $\vdash \phi_1\vee\phi_2$ is provable with the $\vee$-$\mathsf{R_1}$ or $\vee$-$\mathsf{R_2}$ rule, respectively.
		\item If $\phi = AX_x(\phi_1)(s)$, as $\models AX_x(\phi_1)(s)$, for each state $s'$ in $\textsf{Next}(s)$, we have $\models (s'/x)\phi_1$. By induction hypothesis, for each $s'$ in $\textsf{Next}(s)$, the sequent $\vdash (s'/x)\phi_1$ is provable. Using these proofs and the $\mathbf{AX}$-\textsf{R} rule, we build a proof of the sequent $\vdash AX_x(\phi_1)(s)$.
		\item If $\phi = EX_x(\phi_1)(s)$, as $\models EX_x(\phi_1)(s)$, there exists a state $s'$ in $\textsf{Next}(s)$ such that $\models (s'/x)\phi_1$. By induction hypothesis, the sequent $\vdash (s'/x)\phi_1$ is
		provable. With this proof and the $\mathbf{EX}$-\textsf{R} rule, we build a proof of the sequent $\vdash EX_x(\phi_1)(s)$.
		\item If $\phi = AF_x(\phi_1)(s)$, as $\models AF_x(\phi_1)(s)$, there exists a finite tree $T$ such that $T$ has root $s$, for each internal node $s'$, the children of this node are labeled by the elements of $\textsf{Next}(s')$, and for each leaf $s'$, $\models (s'/x)\phi_1$. By induction hypothesis, for every leaf $s'$, the sequent $\vdash (s'/x)\phi_1$ is provable. Then, to each subtree $T'$ of $T$, we associate a proof $|T'|$ of the sequent $\vdash AF_x(\phi_1)(s')$ where $s'$ is the root of $T'$, by induction, as follows.
		\begin{itemize}
			\item If $T'$ contains a single node $s'$, then the proof $|T|$ is built with the $\mathbf{AF}$-$\mathsf{R_1}$ rule from the proof of $\vdash (a/x)\phi_1$ given by the induction hypothesis.
			\item If $T' = s'(T_1, ..., T_n)$, then the proof $|T|$ is built with the $\mathbf{AF}$-$\mathsf{R_2}$ rule from the proofs $|T_1|, ..., |T_n|$ of the sequents $\vdash AF_x(\phi_1)(s_1), ..., \vdash AF_x(\phi_1)(s_n)$, respectively, where $s_1, ..., s_n$ are the elements of $\textsf{Next}(s')$.
		\end{itemize}
		
		This way, the proof $|T|$ is a proof of the sequent $\vdash AF_x(\phi_1)(s)$.
		\item If $\phi = EG_x(\phi_1)(s)$, as $\models EG_x(\phi_1)(s)$, there exists a path $s_0, s_1, ...$ such that $s_0 = s$ and for all $i$, $\models (s_i/x)\phi_1$. By induction hypothesis, all the sequents $\vdash (s_i/x)\phi_1$ are provable. Using Proposition \ref{prop:ifs}, there exists a finite sequence $T = s_0, ..., s_n$ such that for all $i$, $s_i \longrightarrow s_{i+1}$, the sequent $\vdash (s_i/x)\phi_1$ is provable and $s_n$ is some $s_p$ for $p < n$. We associate a proof $|s_i, ..., s_n|$ of the sequent $EG_x(\phi_1)(s_0),\ldots, EG_x(\phi_1)(s_{i-1}) \vdash EG_x(\phi_1)(s_i)$ to each suffix of $T$ by induction as follows.
		\begin{itemize}
			\item The proof $|s_n|$ is built with the $\mathbf{EG}$-merge rule.
			\item If $i \le n-1$, then the proof $|s_i, ..., s_n|$ is built with the $\mathbf{EG}$-\textsf{R} rule from the proof of $\vdash (s_i/x)\phi_1$ given by the induction hypothesis and the proof $|s_{i+1}, ..., s_n|$ of the sequent $EG_x(\phi_1)(s_0), ..., EG_x(\phi_1)(s_i) \vdash EG_x(\phi_1)(s_{i+1})$.
		\end{itemize}
		
		This way, the proof $|s_0,...,s_n|$ is a proof of the sequent $\vdash EG_x(\phi_1)(s)$.

              \item If $\phi = AR_{x,y}(\phi_1, \phi_2)(s)$, as
                $\models AR_{x,y}(\phi_1, \phi_2)(s)$, there exists an
                possibly
                infinite tree such that the root of this tree is $s$,
                for each internal node $s'$, the children of this node
                are labeled by the elements of $\textsf{Next}(s')$,
                for each node $s'$, $\models (s'/y)\phi_2$ and for
                each leaf $s'$, $\models (s'/x)\phi_1$. By induction
                hypothesis, for each node $s'$ of the tree, the
                sequent $\vdash (s'/y)\phi_2$ is provable and for each
                leaf $s'$ of the tree, the sequent
                $\vdash (s'/x)\phi_1$ is provable. Using Proposition
                \ref{prop:ift}, there exists a finite tree $T$ such
                that for each internal node $s'$ the successors of
                $s'$ are the elements of $\textsf{Next}(s')$, for each
                node $s'$, the sequent $\vdash (s'/y)\phi_2$ is
                provable, and for each leaf $s'$, either the sequent
                $\vdash (s'/x)\phi_1$ is provable or $s'$ is also a
                label of a node on the branch from the root of $T$ to
                this leaf. Then, to each subtree $T'$ of $T$, we
                associate a proof $|T'|$ of the sequent
                $AR_{x,y}(\phi_1, \phi_2)(s_1), ..., AR_{x,y}(\phi_1,
                \phi_2)(s_m) \vdash AR_{x,y}(\phi_1, \phi_2)(s')$
                where $s'$ is the root of $T'$ and $s_1, ..., s_m$ is
                the sequence of nodes in $T$ from the root of $T$ to
                the root of $T'$.
		
		\begin{itemize}
			\item If $T'$ contains a single node $s'$, and the sequent $\vdash (s'/x)\phi_1$ is provable then the proof $|T'|$ is built with the $\mathbf{AR}$-$\mathsf{R_1}$ rule from the proofs of $\vdash (s'/x)\phi_1$ and $\vdash (s'/y)\phi_2$ given by the induction hypothesis.
			\item If $T'$ contains a single node $s'$, and $s'$ is among $s_1,...,s_m$, then the proof $|T'|$ is built with the $\mathbf{AR}$-merge rule.
			\item If $T' = s'(T_1, ..., T_n)$, then the proof $|T'|$ is built with the $\mathbf{AR}$-$\mathsf{R_2}$ rule from the proofs $\vdash (s'/y)\phi_2$ given by the induction hypothesis and the proofs $|T_1|, ..., |T_n|$ of the sequents
			\begin{center}
				$AR_{x,y}(\phi_1,\phi_2)(s_1),...,AR_{x,y}(\phi_1,\phi_2)(s_m)$, $AR_{x,y}(\phi_1,\phi_2)(s')\vdash AR_{x, y}(\phi_1,\phi_2)(s_1')$\\
				...\\
				$AR_{x,y}(\phi_1,\phi_2)(s_1),...,AR_{x,y}(\phi_1,\phi_2)(s_m)$, $AR_{x,y}(\phi_1,\phi_2)(s')\vdash AR_{x, y}(\phi_1,\phi_2)(s_n')$
			\end{center}
			respectively, where $s_1',...,s_n'$ are the elements of $\textsf{Next}(s')$.
		\end{itemize}
		
		This way, the proof $|T|$ is a proof of the sequent $\vdash AR_{x,y}(\phi_1,\phi_2)(s)$.
		\item If $\phi = EU_{x,y}(\phi_1, \phi_2)(s)$, as $\models EU_{x,y}(\phi_1, \phi_2)(s)$, there exists a finite sequence $T = s_0, ..., s_n$ such that $\models (s_n/y)\phi_2$ and for all $i$ between $0$ and $n-1$, $\models (s_i/x)\phi_1$. By induction hypothesis, the sequent $\vdash (s_n/y)\phi_2$ is provable and for all $i$ between $0$ and $n-1$, the sequent $\vdash (s_i/x)\phi_1$ is provable. We associate a proof $|s_i, ..., s_n|$ of the sequent $\vdash EU_{x,y}(\phi_1, \phi_2)(s_i)$ to each suffix of $T$ by induction as follows.
		
		\begin{itemize}
			\item The proof $|s_n|$ is built with the $\mathbf{EG}$-$\mathsf{R_1}$ rule from the proof of $\vdash (s_n/y)\phi_2$ given by the induction hypothesis.
			\item If $i\le n-1$, then the proof $|s_i,...,s_n|$ is built with the $\mathbf{EG}$-$\mathsf{R_2}$ rule from the proof of $\vdash (s_i/x)\phi_1$ given by the induction hypothesis and the proof $|s_{i+1},...,s_n|$ of the sequent $\vdash EU_{x,y}(\phi_1,\phi_2)(s_{i+1})$.
		\end{itemize}
		This way, the proof $|s_0,...,s_n|$ is a proof of the sequent $\vdash EU_{x,y}(\phi_1,\phi_2)(s)$.
	\end{itemize}
	
\end{proof}

\section{Proof of the correctness of the proof search algorithm}\label{appendix:proof:corret:prfsearch}

\begin{proposition}
Given a formula $\phi$, 
$\textsf{cpt}(\vdash\phi, \mathfrak{t}, \mathfrak{f}) \rsa^* \mathfrak{t}$ iff $\vdash \phi$ is provable.
\end{proposition}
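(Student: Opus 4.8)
The plan is to prove a statement more general than the proposition, quantified over arbitrary continuations and contexts, from which the stated equivalence follows by specialization. The generalization I would establish is the pair of implications: for every sequent $\Gamma\vdash\phi$ and all \textsf{CPT}s $c_1,c_2$, if $\Gamma\vdash\phi$ is provable in \sctlm{} then $\textsf{cpt}(\Gamma\vdash\phi,c_1,c_2)\rsa^* c_1$, and if it is not provable then $\textsf{cpt}(\Gamma\vdash\phi,c_1,c_2)\rsa^* c_2$. Generalizing over $c_1,c_2$ is essential, because the rewrite rules of Figure~\ref{fig:rewriting} thread the proof obligations of sub-formulae through nested continuations, so the induction hypothesis must be usable for the continuations that the rules build, not only for $\mathfrak{t}$ and $\mathfrak{f}$. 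I would carry out the induction over the well-founded measure $\langle|\phi|,(n-|\Gamma|)\rangle$ already introduced for the Termination proposition, rather than over the structure of $\phi$ alone, since the modal rules do not decrease $|\phi|$.

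Once the two implications are in place, the proposition follows at once: setting $c_1=\mathfrak{t}$ and $c_2=\mathfrak{f}$, the rewriting is deterministic (the form of $\phi$ together with the mutually exclusive side conditions selects a unique applicable rule) and terminating, so $\textsf{cpt}(\vdash\phi,\mathfrak{t},\mathfrak{f})$ has a unique normal form, which is $\mathfrak{t}$ or $\mathfrak{f}$. Since $\mathfrak{t}\neq\mathfrak{f}$, provability forces reduction to $\mathfrak{t}$ and non-provability forces reduction to $\mathfrak{f}$; hence $\textsf{cpt}(\vdash\phi,\mathfrak{t},\mathfrak{f})\rsa^*\mathfrak{t}$ iff $\vdash\phi$ is provable.

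The propositional, atomic, $AX$ and $EX$ cases are routine: each rewrite rule is matched against the corresponding rule of \sctlm{}, and the induction hypothesis is applied to the strictly smaller sub-formulae that appear in the nested continuations (for $AX$ the nesting encodes the conjunction over $\textsf{Next}(s)$ with failure-continuation $c_2$, and for $EX$ the disjunction with success-continuation $c_1$). The delicate cases are the co-inductive modalities $EG,AR$ and the inductive modalities $AF,EU$, where unfolding replaces $s$ by a successor $s_i$ without shrinking $|\phi|$. Here the argument rests on the second component of the measure: the side condition of the unfolding rule guarantees the formula is not yet in $\Gamma$, so the recursive calls use $\Gamma'=\Gamma,\psi$ with $|\Gamma'|=|\Gamma|+1$, strictly decreasing $n-|\Gamma|$, and the finiteness of the model is exactly what bounds this descent. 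One then checks, using the induction hypothesis on $(s/x)\phi$ (smaller first component) and on the successor sequents (smaller second component), that the nested continuations realize the disjunctive and conjunctive branching of the $\mathbf{EG}$-\textsf{R}, $\mathbf{AR}$-$\mathsf{R_1}$/$\mathsf{R_2}$, $\mathbf{AF}$-$\mathsf{R_1}$/$\mathsf{R_2}$ and $\mathbf{EU}$-$\mathsf{R_1}$/$\mathsf{R_2}$ rules.

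The main obstacle, which I would treat most carefully, is the asymmetric role of the context between the two kinds of modalities. For the co-inductive case the condition $\psi\in\Gamma$ rewrites to $c_1$, mirroring the \textsf{merge} rules and meaning that a closed loop witnesses the required infinite path; this matches provability in \sctlm{} directly. For the inductive case, however, the contexts are only an implementation device absent from the proof rules of Figure~\ref{sctl_rules}, and the condition $\psi\in\Gamma$ rewrites instead to $c_2$, i.e. to failure. I would justify this by the invariant that the states recorded in $\Gamma$ are exactly those on the current search branch, so a repeated inductive formula exhibits a cycle along which $\phi$ (resp.\ $\phi_2$) never became provable, and by Propositions~\ref{prop:fis}--\ref{prop:ift} such a cycle certifies an infinite witnessing path for the co-inductive negation, whence $\Gamma\vdash\phi$ is unprovable on that branch. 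Conversely, the loop cutoff never discards a successful branch, since any finite witnessing tree or path can be chosen loop-free; this is the completeness half of the cutoff, and the place where the finiteness of $\cal M$ is used once more.
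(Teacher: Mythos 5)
Your proposal is correct and follows essentially the same route as the paper: both generalize the statement to an arbitrary sequent $\Gamma\vdash\phi$ with arbitrary continuations $c_1,c_2$, proceed by induction with a case analysis on the form of $\phi$, and treat the merge conditions asymmetrically for inductive versus co-inductive modalities exactly as you describe. Your one refinement is making the lexicographic measure $\langle|\phi|,\,n-|\Gamma|\rangle$ the explicit induction measure, which justifies more cleanly the recursive appeals in the modal cases that the paper's nominal structural induction on $\phi$ instead handles with ad hoc inner inductions and appeals to the finiteness of $\cal M$.
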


\begin{proof}
We prove, more generally, by induction on the structure of $\phi$, that 
given a sequent $\Gamma \vdash \phi$ and distinct \textsf{CPT}s
        $c_1$ and $c_2$, $\textsf{cpt}(\Gamma \vdash\phi, c_1, c_2) 
        \rsa^* c_1$ iff $\Gamma \vdash \phi$ is provable.
	\begin{itemize}
		\item If $\phi=\top$ or $\bot$, trivial.
		\item If $\phi=P(s_1,...,s_n)$ where $P(s_1,...,s_n)$ is atomic, then
		$\textsf{cpt}(\vdash P(s_1,...,s_n), c_1, c_2)\rsa c_1$ iff $\langle
		s_1,s_2,...,s_n\rangle \in P$ iff $\vdash P(s_1,s_2,...,s_n)$ is
		provable.
		\item If $\phi=\neg P(s_1,...,s_n)$ where $P(s_1,...,s_n)$ is atomic,
		then $\textsf{cpt}(\vdash \neg P(s_1,...,s_n), c_1, c_2)\rsa c_1$ iff
		$\langle s_1,s_2,...,s_n\rangle \notin P$ iff $\vdash \neg
		P(s_1,s_2,...,s_n)$ is provable.
		\item If $\phi = \phi_1\wedge \phi_2$, then
		$\textsf{cpt}(\vdash\phi_1\wedge\phi_2, c_1, c_2)\rsa^* c_1$ iff $\textsf{cpt}(\vdash
		\phi_1\wedge\phi_2, c_1, c_2)\rsa \textsf{cpt}(\vdash \phi_1, \textsf{cpt}(\vdash
		\phi_2, c_1, c_2), c_2) \rsa^* \textsf{cpt}(\vdash \phi_2, c_1, c_2)\rsa^*
		c_1$ iff both $\vdash \phi_1$ and $\vdash \phi_2$ are provable (by
		induction hypothesis) iff $\vdash \phi_1\wedge \phi_2$ are provable.
		\item If $\phi = \phi_1\vee\phi_2$, then $\textsf{cpt}(\vdash \phi_1\vee\phi_2,
		c_1, c_2)\rsa^* c_1$ iff either $\textsf{cpt}(\vdash\phi_1\vee\phi_2, c_1,
		c_2)\rsa \textsf{cpt}(\vdash \phi_1, c_1, \textsf{cpt}(\vdash\phi_2, c_1, c_2))\rsa^*
		c_1$, or $\textsf{cpt}(\vdash\phi_1\vee\phi_2, c_1, c_2)\rsa \textsf{cpt}(\vdash
		\phi_1, c_1, \textsf{cpt}(\vdash\phi_2, c_1, c_2))\rsa^* \textsf{cpt}(\vdash\phi_2,
		c_1, c_2) \rsa^* c_1$ iff either $\vdash \phi_1$, or $\vdash \phi_2$
		is provable (by induction hypothesis) iff $\vdash \phi_1\vee\phi_2$
		is provable.
		\item If $\phi = AX_x(\psi)(s)$ and $\{s_1,...,s_n\}=\textsf{Next}(s)$, then
		$\textsf{cpt}(\vdash AX_x(\psi)(s), c_1, c_2)\rsa^* c_1$ iff $\textsf{cpt}(\vdash
		AX_x(\psi)(s), c_1, c_2)\rsa \\\textsf{cpt}(\vdash(s_1/x)\psi,
		\textsf{cpt}(\vdash(s_2/x)\psi, \textsf{cpt}(...\textsf{cpt}(\vdash(s_n/x)\psi, c_1,
		c_2)...,c_2), c_2), c_2)\rsa^*\\ \textsf{cpt}(\vdash(s_2/x)\psi,
		\textsf{cpt}(...\textsf{cpt}(\vdash(s_n/x)\psi, c_1, c_2)...,c_2), c_2)\rsa^*
		...\rsa^* \\\textsf{cpt}(\vdash (s_n/x)\psi, c_1, c_2)\rsa^* c_1$ iff $\vdash
		(s_1/x)\psi, \vdash (s_2/x)\psi, ..., \vdash (s_n/x)\psi$ are all
		provable (by induction hypothesis) iff $\vdash AX_x(\psi)(s)$ is
		provable.
		\item If $\phi = EX_x(\psi)(s)$ and $\{s_1,...,s_n\}=\textsf{Next}(s)$, then
		$\textsf{cpt}(\vdash EX_x(\psi)(s), c_1, c_2)\rsa^* c_1$ iff \\ $\textsf{cpt}(\vdash
		EX_x(\psi)(s), c_1, c_2)\rsa \\ \textsf{cpt}(\vdash (s_1/x)\psi, c_1,
		\textsf{cpt}(\vdash(s_2/x)\psi, c_1, \textsf{cpt}(...\textsf{cpt}(\vdash(s_n/x)\psi, c_1,
		c_2)...)))\rsa^* \\ \textsf{cpt}(\vdash(s_i/x)\psi, c_1, c_2')\rsa^* c_1$ iff
		$\vdash (s_i/x)\psi$ is provable (by induction hypothesis) iff
		$\vdash EX_x(\psi)(s)$ is provable, where $1\le i \le n$, and $c_2'$
		is either $c_2$ when $i=n$ or $\textsf{cpt}(\vdash(s_{i+1}/x)\psi, c_1,
		\textsf{cpt}(...\textsf{cpt}(\vdash(s_n/x)\psi, c_1, c_2)...))$ when $i\neq n$.
		\item If $\phi = AF_x(\psi)(s)$ and $\{s_1,...,s_n\}=\textsf{Next}(s)$, then
		$\textsf{cpt}(\Gamma\vdash AF_x(\psi)(s), c_1, c_2)\rsa^* c_1$ iff either
		\\ $\textsf{cpt}(\Gamma\vdash AF_x(\psi)(s), c_1, c_2)\rsa \\ \textsf{cpt}(\vdash
		(s/x)\psi, c_1, \textsf{cpt}(\Gamma'\vdash AF_x(\psi)(s_1),
		\textsf{cpt}(...\textsf{cpt}(\\ \Gamma'\vdash AF_x(\psi)(s_n), c_1, c_2)..., c_2),
		c_2))\rsa^* c_1$, or \\ $\textsf{cpt}(\Gamma\vdash AF_x(\psi)(s), c_1,
		c_2)\rsa \\ \textsf{cpt}(\vdash(s/x)\psi, c_1, \textsf{cpt}(\Gamma'\vdash
		AF_x(\psi)(s_1), \textsf{cpt}(...\textsf{cpt}(\\ \Gamma'\vdash AF_x(\psi)(s_n), c_1,
		c_2)..., c_2), c_2))\rsa^* \\ \textsf{cpt}(\Gamma'\vdash AF_x(\psi)(s_i),
		\textsf{cpt}(...\textsf{cpt}(\Gamma'\vdash AF_x(\psi)(s_n), c_1, c_2)...,c_2),
		c_2)\rsa^* c_1$, where $1\le i\le n$ and
		$\Gamma'=\Gamma,AF_x(\psi)(s)$. 
We are going to prove that 
the second condition holds iff $\Gamma, AF_x(\psi)(s)\vdash
		AF_x(\psi)(s_1)$, $\Gamma , AF_x(\psi)(s)\vdash AF_x(\psi)(s_2)$,
		$...$, $\Gamma , AF_x(\psi)(s)\vdash AF_x(\psi)(s_n)$ are all
		provable.
This will be sufficient to conclude as 
The first condition holds iff
		$\vdash (s/x)\psi$ is provable (by induction hypothesis). 
So, we will get that both conditions hold if and only if 
$\Gamma\vdash
		AF_x(\psi)(s)$ is provable. 
Let us prove, as announced that 
the	second condition holds iff $\Gamma, AF_x(\psi)(s)\vdash
		AF_x(\psi)(s_1)$, $\Gamma , AF_x(\psi)(s)\vdash AF_x(\psi)(s_2)$,
		$...$, $\Gamma , AF_x(\psi)(s)\vdash AF_x(\psi)(s_n)$ are all
		provable:
		\begin{itemize}
			\item ($\A$) if the second condition holds, then $\Gamma,
			AF_x(\psi)(s)\vdash AF_x(\psi)(s_1)$, $\Gamma ,
			AF_x(\psi)(s)\vdash AF_x(\psi)(s_2)$, $...$, $\Gamma ,
			AF_x(\psi)(s)\vdash AF_x(\psi)(s_n)$ are all provable. That is
			because otherwise, if $1\le j\le n$ such that
			$\Gamma,AF_x(\psi)(s)\vdash AF_x(\psi)(s_j)$ is the first sequent
			that is not provable, then there exists an infinite path $s_{j_0},
			s_{j_1},s_{j_2}, ...$ and $s_{j_0}=s_j$ such that $\vdash
			(s_{j_k}/x)\psi$ is not provable for all $k \ge 0$, then by
			induction hypothesis, \\$\textsf{cpt}(\Gamma'\vdash AF_x(\psi)(s_1),
			\textsf{cpt}(...\textsf{cpt}(\Gamma'\vdash AF_x(\psi)(s_n), c_1, c_2)..., c_2), c_2)
			\rsa^*\\ \textsf{cpt}(\Gamma'\vdash AF_x(\psi)(s_{j_0}), \textsf{cpt}(\Gamma'\vdash
			AF_x(\psi)(s_{j+1}), \textsf{cpt}(...\textsf{cpt}( \Gamma'\vdash AF_x(\psi)(s_n),\\
			c_1, c_2)..., c_2), c_2), c_2) \rsa^*\\ \textsf{cpt}(\Gamma_1\vdash
			AF_x(\psi)(s_{j_1}), c_1^1, c_2)\rsa^* ... \rsa^*
			\textsf{cpt}(\Gamma_m\vdash AF_x(\psi)(s_{j_m}), c_1^m, c_2)\rsa^*
			c_2$, where \\$AF_x(\psi)(s_{j_m})\in \Gamma_m$, \\$\Gamma' =
			\Gamma, AF_x(\psi)(s)$, \\$\Gamma_m = \Gamma',
			AF_x(\psi)(s_{j_0}),...,AF_x(\psi)(s_{j_m})$, and the shape of
			$c_1^1,...,c_1^m$ have no impact on the transformations of \textsf{CPT}s
			here. Note that such $m \ge 0$ exists because our Kripke model is
			finite. So, the second condition holds implies that $\Gamma,
			AF_x(\psi)(s)\vdash AF_x(\psi)(s_1), \Gamma, AF_x(\psi)(s)\vdash
			AF_x(\psi)(s_2)$, $...$, $\Gamma, AF_x(\psi)(s)\vdash
			AF_x(\psi)(s_n)$ are all provable, and thus $\Gamma \vdash
			AF_x(\psi)(s)$ are provable.
			\item ($\Leftarrow$) if $\forall i\in \{1,2,...,n\}$, $\Gamma,
			AF_x(\psi)(s)\vdash AF_x(\psi)(s_i)$ is provable, to prove that
			the second condition holds, it is sufficient to prove that\\
			$\textsf{cpt}(\Gamma'\vdash AF_x(\psi)(s_j), c_1', c_2')\rsa^* c_1'$ for
			all $1\le j\le n$ and all $c_1'$, $c_2'$, and that
			$\Gamma'=\Gamma, AF_x(\psi)(s)$. This is easily proved by
			induction on the structure of the proof tree of $\Gamma,
			AF_x(\psi)(s)\vdash AF_x(\psi)(s_j)$.
		\end{itemize}
		\item if $\phi = EG_x(\psi)(s)$ and $\{s_1,...,s_n\}=\textsf{Next}(s)$, then
		for $\Gamma \vdash EG_x(\psi)(s)$,
		\begin{itemize}
			\item if $EG_x(\psi)(s)\in \Gamma$, trivial;
			\item if $EG_x(\psi)(s)\notin \Gamma$, then \\ $\textsf{cpt}(\Gamma\vdash
			EG_x(\phi)(s), c_1, c_2)\rsa^* c_1$ iff \\ $\textsf{cpt}(\Gamma\vdash
			EG_x(\psi)(s), c_1, c_2)\rsa \\ \textsf{cpt}(\vdash(s/x)\psi,
			\textsf{cpt}(\Gamma'\vdash EG_x(\psi)(s_1), c_1,\\
			\textsf{cpt}(...\textsf{cpt}( \Gamma'\vdash EG_x(\psi)(s_n), c_1, c_2)...)),
			c_2)\rsa^*\\  \textsf{cpt}(\Gamma'\vdash EG_x(\psi)(s_{i_1}), c_1,
			\textsf{cpt}(...\textsf{cpt}(\Gamma'\vdash EG_x(\psi)(s_n), c_1, c_2)...))\rsa^* \\
			\textsf{cpt}(\Gamma_1\vdash EG_x(\psi)(s_{i_11}), c_1,
			\textsf{cpt}(...\textsf{cpt}(\Gamma_1\vdash EG_x(\psi)(s_{i_1n_1}), c_1,
			c_2^1)...))\rsa^* \\ \textsf{cpt}(\Gamma_2\vdash EG_x(\psi)(s_{i_1i_21}),
			c_1, \textsf{cpt}(...\textsf{cpt}(\Gamma_2\vdash EG_x(\psi)(s_{i_1i_2n_2}), c_1,
			c_2^2)...))\rsa^* \\ \textsf{cpt}(\Gamma_m\vdash
			EG_x(\psi)(s_{i_1i_2...i_m1}), c_1, \textsf{cpt}(...\textsf{cpt}( \Gamma_m\vdash
			EG_x(\psi)(s_{i_1i_2...i_mn_m}),\\ c_1, c_2^m)...))\rsa^*\\
			\textsf{cpt}(\Gamma_m\vdash EG_x(\psi)(s_{i_1i_2...i_mi_{m+1}}), c_1,
			\textsf{cpt}(...\textsf{cpt}( \Gamma_m\vdash EG_x(\psi)(s_{i_1i_2...i_mn_m}),\\ c_1,
			c_2^m)...)) \rsa^* c_1$ iff there exists an infinite path
			$$s,s_{i_1},s_{i_1i_2},...,s_{i_1i_2...i_m},s_{i_1i_2...i_mi_{m+1}},...$$
			such that for all state $s'$ in this path, $\vdash \psi(s')$ is
			provable, where $\Gamma' = \Gamma, EG_x(\psi)(s)$, $\Gamma_m =
			\Gamma', EG_x(\psi)(s_{i_1}),..., EG_x(\psi)(s_{i_1i_2...i_m})$,
			and $EG_x(\psi)(s_{i_1i_2...i_mi_{m+1}}) \in \Gamma_m$. By
			induction hypothesis, this holds iff $\Gamma \vdash EG_x(\psi)(s)$
			is provable.
			\item if $\phi = AR_x,y(\phi_1,\phi_2)(s)$, as are both co-inductive
			modalities, the analysis is analogous to $EG$.
			\item if $\phi = EU_x,y(\phi_1,\phi_2)(s)$, as are both inductive
			modalities, the analysis is analogous to $AF$.
	\end{itemize}
	\end{itemize}
\end{proof}



\end{document}